\newtheorem{theorem}{Theorem}
\newtheorem{coro}[theorem]{Corollary}
\newtheorem{lemma}[theorem]{Lemma}
\newtheorem{prop}[theorem]{Proposition}
\theoremstyle{definition}
\newtheorem{defi}[theorem]{Definition}
\newtheorem*{alg}{Algorithm}
\newtheorem{example}[theorem]{Example}
\newcommand{\RR}{{\mathbb R}}
\newcommand{\PP}{{\mathbb P}}
\newcommand{\s}{{\mathcal S}}
\newcommand{\bc}{\begin{center}}
\newcommand{\ec}{\end{center}}
\newcommand{\be}{\begin{enumerate}}
\newcommand{\ee}{\end{enumerate}}
\newcommand{\bi}{\begin{itemize}}
\newcommand{\ei}{\end{itemize}}
\definecolor{OliveGreen}{rgb}{.3,.5,.2}
\definecolor{MidnightBlue}{rgb}{.3,.4,.6}
\definecolor{codegreen}{rgb}{0,0.6,0}
\definecolor{codegray}{rgb}{0.5,0.5,0.5}
\definecolor{codepurple}{rgb}{0.58,0,0.82}
\definecolor{backcolour}{rgb}{0.95,0.95,0.92}
\newcommand{\grove}{\mathcal G(N^-)} 
\newcommand{\supp}{\operatorname{supp}}
\newcommand{\LSA}{\operatorname{LSA}}
\newcommand{\CNA}{Circular Network Algorithm}
\newcommand{\tc}{\text{:}}
\newcommand{\qc}{qcCF\xspace}
\newcommand{\qcs}{qcCFs\xspace}
\begin{document}

\begin{frontmatter}

\begin{fmbox}
\dochead{Research}

\title{NANUQ: A method for inferring species networks from gene trees under the coalescent model}

\author[
addressref={aff1},                   
]{\inits{ESA}\fnm{Elizabeth S} \snm{Allman}} 
\author[
addressref={aff1},
]{\inits{HB}\fnm{Hector} \snm{Ba\~nos}}
\author[
addressref={aff1},                   
corref={aff1},                       
email={j.rhodes@alaska.edu}   
]{\inits{JAR}\fnm{John A} \snm{Rhodes}}

\address[id=aff1]{
  \orgname{Department of of Mathematics and Statistics, University of Alaska Fairbanks}, 
  \city{Fairbanks, AK},                              
  \cny{USA}                                    
}

\begin{artnotes}
\end{artnotes}

\end{fmbox}


\begin{abstractbox}

\begin{abstract} 
  Species networks generalize the notion of species trees to allow for
  hybridization or other lateral gene transfer.  Under the Network
  Multispecies Coalescent Model, individual gene trees arising from a
  network can have any topology, but arise with frequencies dependent
  on the network structure and numerical parameters.  We propose a new
  algorithm for statistical inference of a level-1 species network
  under this model, from data consisting of gene tree topologies, and
  provide the theoretical justification for it. The algorithm is based
  in an analysis of quartets displayed on gene trees, combining
  several statistical hypothesis tests with combinatorial ideas such
  as a quartet-based intertaxon distance appropriate to networks, the
  NeighborNet algorithm for circular split systems, and the Circular
  Network algorithm for constructing a splits graph.
\end{abstract}

\begin{keyword}
\kwd{Hybridization}
\kwd{Network multispecies coalescent}
\kwd{Species network inference}
\kwd{Gene tree}
\kwd{Quartets}
\kwd{Level-1 network}
\kwd{NANUQ}
\end{keyword}

 \begin{keyword}[class=AMS]
 \kwd[Primary ]{92B10}
\kwd{92D15}
 \end{keyword}

\end{abstractbox}
%

\end{frontmatter}

 
\section*{Introduction}

In this paper we provide the theory supporting a new, statistically
consistent method of inferring most topological features of a level-1
hybridization network under the network multispecies coalescent (NMSC)
model. The method uses as data a collection of unrooted topological
gene trees, which may themselves have been inferred from sequences.

Unlike pseudo-likelihood methods \cite{Solis-Lemus2016,Nakhleh2015},
our method does not require an assumed limit on the number of
hybridization events in the network, nor does it involve a
time-intensive search over the space of possible networks. Instead, it
computes a certain distance between taxa which, under ideal
circumstances, corresponds to a circular split system. When the
expected distance is processed through particular algorithms to
produce a splits graph, interpretation rules allow one to read off
network information. The total theoretical running time of the
algorithm is $\mathcal O(n^4m)$ for an input of $m$ binary gene trees
on $n$ taxa, making it computationally feasible when $n$ has moderate
size.

While we illustrate the method's utility through several examples with
simulated and empirical data, our focus in this work is on providing
its theoretical basis. This draws on a number of independent research
works, but also requires new results on the nature of the splits
graphs that are produced under ideal circumstances.

We call this new method the \underline{N}etwork inference
\underline{A}lgorithm via \underline{N}eighbourNet \underline{U}sing
\underline{Q}uartet distance, or by the acronym NANUQ\footnote{The
  word for ``polar bear'' in I\~nupiaq and other Inuit languages,
  pronounced and sometimes written as `Nanook'.}.  It involves the
following steps, applied to a collection of unrooted gene tree
topologies assumed to have arisen under the NMSC on an unknown binary
level-1 network: \be
\item[(a)] \label{step:CFs} For each subset of 4 taxa, determine the
  empirical quartet counts from the gene trees, which will reflect
  possible cycles on the network, as shown in
  \cite{Solis-Lemus2016,Banos2018}.
\item[(b)] \label{step:stat} Apply a statistical hypothesis test to
  these counts, as in \cite{Mitchell2018}, to judge evidence as to
  whether the quartet species network displays a 4-cycle.
\item[(c)] \label{step:qdist} Use the test results on quartets to
  construct a network quartet distance between taxa, extending the
  ideas of \cite{Rhodes2019}.
\item[(d)] \label{step:NN} Apply the NeighborNet \cite{Bryant2004} and
  Circular Network algorithms \cite{Dress2004} to construct a splits
  graph from the quartet distance.
\item[(e)] \label{step:interp} Interpret the abstract network produced
  in the previous step by certain rules developed in this paper to
  infer most topological features of the unknown network.  \ee All
  steps but the last have been fully automated; in R for the steps
  (a--c), and SplitsTree4 \cite{Huson2006} for step (d).  While it is
  conceivable the last step could be as well, there are advantages to
  not doing so until more experience with the method has accumulated.
  For instance, some data sets may not support a hypothesis of
  evolution on a level-1 hybridization network, and a human
  interpretation of both the hypothesis test results of step (b) and
  the SplitsTree4 output of step (e) may suggest this.  Simply
  returning a hybridization network most in accord with the output
  might be misleading if poor model fit is ignored.

\medskip

NANUQ offers several
important advantages over other network inference methods we know of.
In particular, it can indicate poor model fit to the level-1 NMSC and, in the
case of reasonable fit, indicate the number of hybridization events
without conducting a time-consuming search. In contrast,
pseudo-likelihood methods, which can be used for network inference
\cite{Solis-Lemus2016,Nakhleh2015}, are known broadly to be poor for
judging model fit, though often perform well for inference.  However,
NANUQ only gives information on network topology, whereas
pseudo-likelihood can be used to obtain metric information as well.
We thus view NANUQ as complementary to existing approaches.

Several recent works \cite{Wen2018,Zhang2018} have taken a Bayesian
approach to inference of species networks from genetic sequence data,
to obtain a joint posterior on both species networks and gene
trees. As attractive as one might find this as a conceptual approach,
it produces a formidable computational challenge for data sets with
many taxa or gene trees. Indeed, the largest analyses in these works
are quite small, involving only 7 taxa and 106 gene trees from a yeast
data set which we also analyze. The alternative approaches offered by
NANUQ and the pseudo-likelihood algorithms easily handle much larger
data sets, with thousands of genes, as have already been assembled by
researchers.

\medskip

We note that NANUQ's use of a splits graph is the first instance, to
our knowledge, of such a graph being given a firm model-based interpretation as
supporting a biological process underlying a data set. Splits graphs
are generally viewed as exploratory devices for judging the extent to
which a data set is ``tree-like,'' and authors often warn against
interpreting them as supporting any particular biological mechanism
\cite{HusonRuppScorn}. We fully agree with this general statement;
only in the framework of our multi-step algorithm do we claim that an
interpretation of support for a hybridization network is justified by
theory. While an earlier step in this direction was taken by
\cite{Huson2005}, that work assumed no coalescent process modeling
incomplete lineage sorting was involved in the formation of gene
trees, and provided a less detailed description of the form of a
splits graph than is given here.

\medskip

The theory we present is based on consideration of the quartets
displayed on a collection of gene trees arising under the NMSC, but it
differs in important ways from the more purely combinatorial work,
such as \cite{Gambette2012}, on undirected networks of level-1 and
higher. First, we crucially focus on unrooted phylogenetic networks in
the sense of \cite{Solis-Lemus2016, Banos2018}, which retain the direction of hybrid
edges from the rooted species network underlying the biological model,
rather than fully undirected networks of \cite{Gambette2012}. This
leads to a different notion of the trees and quartets displayed on a
network, and of the set of splits we associate to a network.  Second,
unlike most purely combinatorial studies, our algorithm takes into
account that due to the coalescent process some gene trees will
display quartets inconsistent with the species network. Nonetheless
NANUQ provides a means of determining, up to statistical inference
error, which quartets are displayed on the network. Third, if these
quartets are known exactly, we are able to recover not only the
undirected version of the network (modulo contraction of 2- and
3-cycles) but also directions of hybrid edges in cycles of size 5 or
larger.

\medskip

This paper proceeds as follows: We first outline and develop theory
behind the NANUQ algorithm in a purely theoretical setting. This
constitutes the majority of the work. We then more carefully outline
the algorithm for data analysis, and conclude with a few examples of
network inference.

In more detail, the theoretical portion of this work first formally
defines the type of phylogenetic networks which underly our model, as
well as unrooted semidirected networks induced from them. While this precise
notion of unrooted network appeared in \cite{Banos2018}, it is not
standard to the literature, yet it is essential to our work. Briefly
recalling the network multispecies coalescent model (NMSC) and the
notion of a quartet concordance factor (CF), we summarize results of
\cite{Solis-Lemus2016} and \cite{Banos2018} indicating how these
concordance factors reflect quartet network topology, and provide a
new analysis indicating the extent to which one can avoid the one
important case of ambiguity in interpreting CFs.  After reviewing
terminology for split systems, we then define a split system
associated to an unrooted semidirected level-1 network.  This is used
to define a new quartet intertaxon distance for a level-1 topological
network, which can be computed from quartet information alone.  We
then investigate the splits graph computed from the quartet distance
of a binary level-1 network.  This requires establishing some new
theoretical results which enable us to directly relate the form of a
level-1 hybridization network to the form of the splits graph found
from its network quartet distance.

Finally, we present our algorithm in full, making use of all the
theory above, as well as hypothesis testing using CFs as developed in
\cite{Mitchell2018}, and the NeighborNet \cite{Bryant2004}
and Circular Network \cite{Dress2004} algorithms as implemented in SplitsTree4
\cite{Huson2006}. We give a running time analysis for NANUQ and
establish its statistical consistency.  As our primary goal in this
paper is to provide the theoretical background to our algorithm, we
conclude with a minimial set of example analyses, using both simulated
and biological data.  A later work, directed at empiricists, will
focus further on NANUQ's performance in data analysis.


\section*{Phylogenetic Networks }\label{sec:networks}
\subsection*{Rooted and unrooted phylogenetic networks}\label{sec:def}
We begin by establishing terminology for phylogenetic
networks. Throughout, $X=\{x_1,x_2,\dots ,x_n\}$ denotes a fixed set
of taxa.

Our focus is on an explicit network \cite{HusonRuppScorn}, that can be
interpreted as providing an evolutionary history of species
relationships, including hybridization or other forms of lateral gene
transfer that occur at discrete moments in time.
\begin{defi}[\cite{Banos2018,Steel2016}]\label{def:network} 
  A \emph{topological binary rooted phylogenetic network} $N^+$ on
  taxon set $X$ is a connected directed acyclic graph with vertices
  $V$ and edges $E$, where $V$ is the disjoint union $V = \{r\} \sqcup
  V_L \sqcup V_H \sqcup V_T$ and $E$ is the disjoint union $E = E_H
  \sqcup E_T$, together with a bijective leaf-labeling function $f :
  V_L \to X$ with the following characteristics:
	\begin{itemize}
		\item[1.]The  \emph{root} $r$ has indegree 0 and outdegree 2.
		\item[2.] A  \emph{leaf} $v \in V_L$ has indegree 1 and outdegree 0.
		\item[3.] A   \emph{tree node} $v\in  V_T$ has indegree 1 and outdegree 2.
		\item[4.] A  \emph{hybrid node} $v\in  V_H$ has indegree 2 and outdegree 1.
		\item[5.] A  \emph{hybrid edge} $e \in E_H$ is an edge whose child is a hybrid node.
		\item[6.] A  \emph{tree edge} $e \in E_T$ is an edge whose child is a tree node or a leaf.	 
	\end{itemize}
\end{defi}

\begin{defi}
  Let $N^+$ be a topological binary rooted phylogenetic network. A
  \emph{metric for $N^+$} is a pair $(\lambda, \gamma)$, where
  $\lambda:E\to \mathbb{R}^{\ge 0}$ assigns \emph{edge lengths} and
  $\gamma:E_H\to (0,1)$ assigns \emph{hybridization parameters}
  satisfying
	\begin{itemize}
	\item[1.] $\lambda(e) >0$ for $e\in E_T$,
	\item[2.] $\gamma(e_1)+\gamma(e_2)=1$ whenever $e_1,e_2\in E_H$ have the same hybrid-node child. 
	\end{itemize}
	If $(\lambda, \gamma)$ is a metric for $N^+$, then we refer to
        $(N^+,(\lambda, \gamma))$ as a \emph{metric binary rooted
          phylogenetic network}.
\end{defi}
While the idea of unrooting a tree is simple, unrooting a network is
more subtle. For example, it may not be clear how to proceed when the
two edges incident to the root have the same child. We follow
\cite{Banos2018} in elucidating this concept.

In a directed network, we say that a node $v$ is \emph{above} a node
$u$, and $u$ is \emph{below} $v$, if there exists a non-empty directed
path in $N^+$ from $v$ to $u$. We also say that an edge with parent
node $x$ and child $y$ is \emph{above} (\emph{below}) a node $v$ if
$y$ is above or equal to $v$ ($x$ is below or equal to $v$).

\begin{defi}[\cite{Steel2016}]\label{def:LSA}
  Let $N^+$ be a (metric or topological) binary rooted phylogenetic
  network on $X$ and $Z\subseteq X$. Let $D$ be the set of nodes which
  lie on every directed path from the root $r$ of $N^+$ to any $z\in
  Z$. Then the \emph{lowest stable ancestor of $Z$ on $N^+$}, denoted
  $\LSA(Z)$, is the unique node $v\in D$ such that $v$ is below all
  $u\in D$, $u\neq v$.
\end{defi}
The lowest stable ancestor is a generalization (though not the only
one) on a network of the concept of most recent common ancestor on a
tree.

If $z$ is a degree two node on a semidirected graph, with nodes $x$
and $y$ adjacent to $z$, then by \emph{suppressing} $z$ we mean
deleting $z$ and its incident edges, and introducing a new edge from
$x$ to $y$. If the deleted edges formed a semidirected path, we direct
this new edge consistently with that path; otherwise the new edge is
undirected.

\begin{defi}\label{def:unrooted}	
  Let $N^+$ be a binary topological rooted phylogenetic network on a
  set of taxa $X$. Then $N^-$, the \emph{topological unrooted
    phylogenetic network induced from $N^+$}, is the semidirected
  network obtained by
	\begin{enumerate}
	\item[1.] deleting all edges and nodes above  $\LSA(X)$, 
	\item[2.]  undirecting  all tree edges, and
	\item[3.]  suppressing  $\LSA(X)$.
\end{enumerate}
\end{defi}
If $N^+$ has a metric structure, then $N^-$ inherits one in an obvious
way. Edge lengths on $N^-$ are the sum of conjoined edge lengths in
$N^+$, and hybridization parameters are the same as those on $N^+$.

Note that in some other phylogenetic works the term ``unrooted
network" is used for a fully undirected network. An unrooted network
in our sense retains directions on hybrid edges, and thus encodes some
information about possible root locations on $N^+$.  Figure
\ref{fig:unroot} depicts a topological binary rooted phylogenetic
network on the left and its induced topological unrooted network on
the right.

\medskip

For simplicity, when we refer to an \emph{unrooted network} $N^-$ in
this paper, either metric or topological, we mean a semidirected
network induced from a rooted binary phylogenetic network $N^+$ as in
Definition \ref{def:unrooted}.  That is, we implicitly assume the
existence of $N^+$.  This is an important convention to keep in mind,
since under the standard graph theoretical definition there are
unrooted networks which are not so induced.
	
 \begin{figure*}
	\includegraphics[width=.6\textwidth]{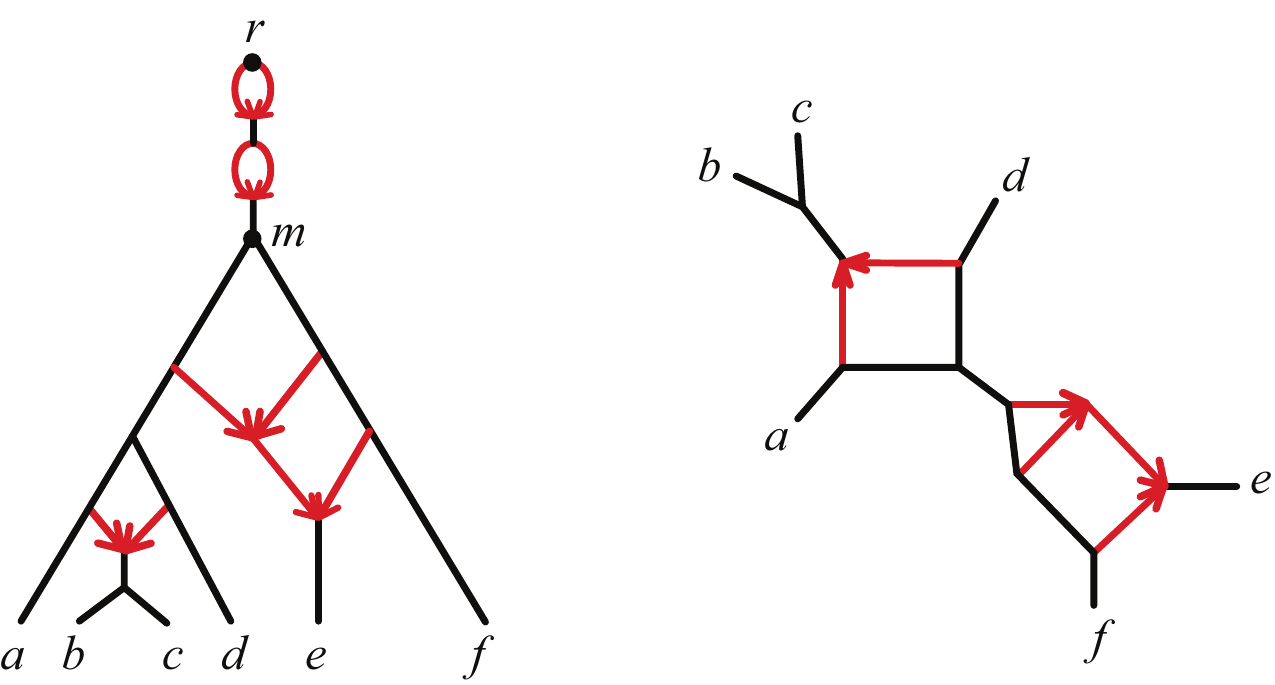}
	\caption{(L) A rooted phylogenetic network $N^+$ with root $r$
          and lowest stable ancestor $m$, and (R) the unrooted network
          $N^-$ induced from $N^+$. }\label{fig:unroot}
\end{figure*}

Since an unrooted network retains some directed edges, a useful
definition of an induced quartet network is more elaborate than the
analog for a tree.  Recall that a \emph{trek} between vertices $x,y$
on a network is the union of semidirected paths from some vertex $v$
to $x$ and from $v$ to $y$. A trek is \emph{simple} if the two paths
intersect only at $v$.
 
 \begin{defi}	
   Let $N^-$ be a unrooted network on $X$, and let $a,b,c,d\in X$. The
   \emph{induced quartet network} $Q_{abcd}$ is the unrooted network
   obtained by
	\begin{enumerate}
	\item[1.] keeping only the edges in simple treks between pairs of elements of $\{a,b,c,d\}$, and  then 
	\item[2.] suppressing all degree two nodes. 
	\end{enumerate}
	 \end{defi}

         In the case that $N^-$ is a metric network, the quartet
         network $Q_{abcd}$ inherits a metric structure in a natural
         way: Noting that any hybrid edge $e$ in $Q_{abcd}$ arises
         from a single hybrid edge $\tilde e$ of $N^-$ possibly
         conjoined with several tree edges, we set the hybridization
         parameter for $e$ equal to that for $\tilde e$.  Edge lengths
         in $Q_{abcd}$ are simply sums of lengths of conjoined edges
         from $N^-$.
 
         Figure \ref{fig:quartet} shows several quartet networks
         induced from the unrooted network in Figure \ref{fig:unroot}.
 
 \begin{figure*}
	 \includegraphics[width=.7\textwidth]{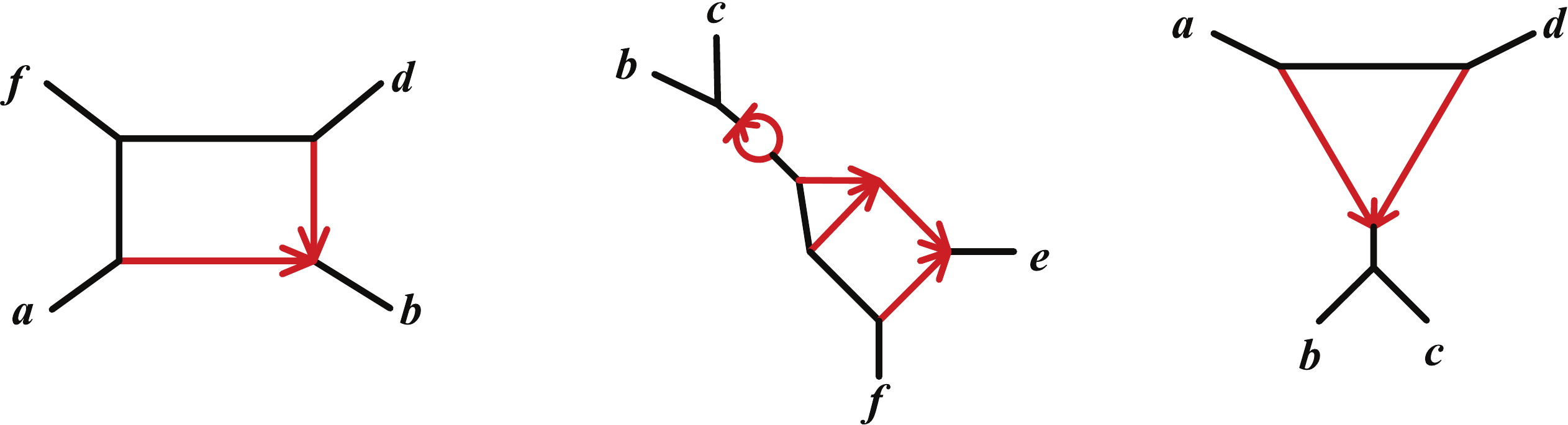}
 	\caption{Three quartet networks, $Q_{abdf}$, $Q_{bcef}$, and $Q_{abcd}$,
	 induced from the unrooted network $N^-$ of Figure \ref{fig:unroot} (R).
	}\label{fig:quartet}
 \end{figure*}
 
 \medskip
 
Finally, most of our results are established only for a subclass of phylogenetic networks 
exhibiting a \emph{level-1} structure.  The definition we give is not the standard one for 
level-1 (e.g., \cite{Steel2016}), but it is equivalent for binary directed networks \cite{Rossello2009}. 
We also use our notion of level-1 for the unrooted networks in this paper,
where the directions of hybrid edges are preserved. 
\begin{defi}\label{level1}Let $N$ be a (rooted or unrooted) binary topological  network. If no two 
cycles in the undirected graph of $N$ share a vertex, then $N$ is \emph{level-1}. \
\end{defi}


\section*{The Network Multispecies Coalescent Model and Quartet Concordance Factors}\label{sec:modelCF}

The \textit{multi-species coalescent model} (MSC)
\cite{Pamilo1988,MSCpaper2009} is the standard probabilistic model of
incomplete lineage sorting, by which gene trees, showing direct
ancestral relationships, form within species trees composed of
multi-individual populations.  It traces, backwards in time, the
lineages of a finite set of individual copies of a gene, sampled from
different extant species, as they \emph{coalesce} at common ancestral
individuals.

The \textit{network multi-species coalescent  model} (NMSC)  \cite{Meng2009,Nakhleh2012,Nakhleh2016} is a 
generalization of the MSC, which allows a finite number of  hybridization events, or other discrete horizontal 
gene transfer events, between populations.  Its parameters are captured by a metric, rooted phylogenetic network, 
assumed to be binary, as defined above. Branch lengths are given in coalescent units, so that
the rate of coalescence between two lineages is 1.
At a hybrid node in the network, a gene lineage may pass into either of two ancestral populations, with probabilities 
given by the hybridization parameters $\gamma, 1-\gamma$ for that node. 
This differs from other generalizations of the MSC, such as 
those built on a structured coalescent, where genes may switch populations continuously  over an interval in time.

\subsection*{Quartet concordance factors} The NSMC model is often used to obtain the probability (or density) of 
observing a specific gene tree (metric or topological, rooted or unrooted) in a species network.  The NANUQ
algorithm focuses on summaries of gene trees; that is, that a species network produces various gene tree quartets 
(unrooted topological gene trees on 4-taxa) in parameter-dependent frequencies under the NMSC. 
The study of these probabilities, and their use for network inference, was pioneered in \cite{Solis-Lemus2016}, 
with further work in \cite{Banos2018}.
A key concept is that of a quartet concordance factor, whose definition we recall.

A binary unrooted topological tree on four taxa $a,b,c,d$ is called a \emph{quartet}, denoted as $ab|cd$ if deletion 
of its internal edge gives a connected component $\{a, b\}$.  When $n \ge 4$, 
an $n$-taxon tree \emph{displays} a quartet $ab|cd$ if the induced unrooted tree on the four taxa is $ab|cd$.

\begin{defi}
  Let $N^+$ be a metric rooted network on a taxon set $X$, and
  $A,B,C,D$ genes sampled from individuals in species $a,b,c,d\in X$
  respectively.  Given a gene quartet $AB|CD$, the \emph{concordance
    factor} $CF_{AB|CD} = CF_{AB|CD} (N^+)$ is the probability under
  the NMSC on $N^+$ that a gene tree displays the quartet $AB|CD$. The
  \emph{concordance factor} $CF_{abcd} = CF_{abcd}(N^+)$ is the
  ordered triple
	$$CF_{abcd}=(CF_{AB|CD},CF_{AC|BD},CF_{AD|BC})$$
of concordance factors of each quartet on the taxa $a,b,c,d$.
\end{defi}

When there is no ambiguity, such as when we have a fixed rooted metric
network $N^+$ in mind, we denote the concordance factor simply by
$CF_{abcd}$.  Similarly, when $a,b,c,d$ are clear from context (e.g.,
if $N^+$ has only four taxa), we write $CF$ for $CF_{abcd}$.  Also,
while the language of `concordance factor' is sometimes used for both
theoretical values and empirical estimates, in this work we use this
term exclusively for the expected values, being careful to
refer to `estimators of CFs,' or `empirical CFs,' when these are
computed from data.

\medskip

As established  in \cite{Solis-Lemus2016, Banos2018},  the concordance factors for a level-1 network $N^+$ 
depend only on the unrooted network $N^-$, and, more precisely, $CF_{abcd}$ depends only on the quartet network 
$Q_{abcd}$ induced from $N^-$.  Significantly, these concordance factors carry information about what 4-taxon
substructures might be on that network.   For instance, if four taxa $a$, $b$, $c$, $d$ are related by the tree
$ab|cd$ on $N^-$, then under the NMSC the concordance factors satisfy
$CF_{AB|CD} > CF_{AC|BD} = CF_{AD|BC}$.   To explain what information $CF_{abcd}$ contains about
cycle structure on $N^+$, we quickly review some terminology and results from these works.  

By an $m_k$-cycle in a level-1 network we mean an $m$-cycle with
exactly $k$ taxa descended from its unique hybrid node.  In a level-1
quartet network, there are exactly 6 types of cycles that may appear:
$2_1$-, $2_2$-, \mbox{$2_3$-}, $3_1$-, $3_2$-, and $4_1$-cycles which
are depicted in Figure \ref{fig:cycles}. When considering level-1
quartet networks, there are restrictions on the number and types of
cycles that may occur simultaneously.  For example, $Q_{abcd}$ might
have a $4_1$-cycle or a $3_2$-cycle, but not both.

We next classify concordance factors $CF_{abcd}$ depending on the magnitude of its entries.

 \begin{figure*}
 \begin{center}
 \includegraphics[scale=.21 ]{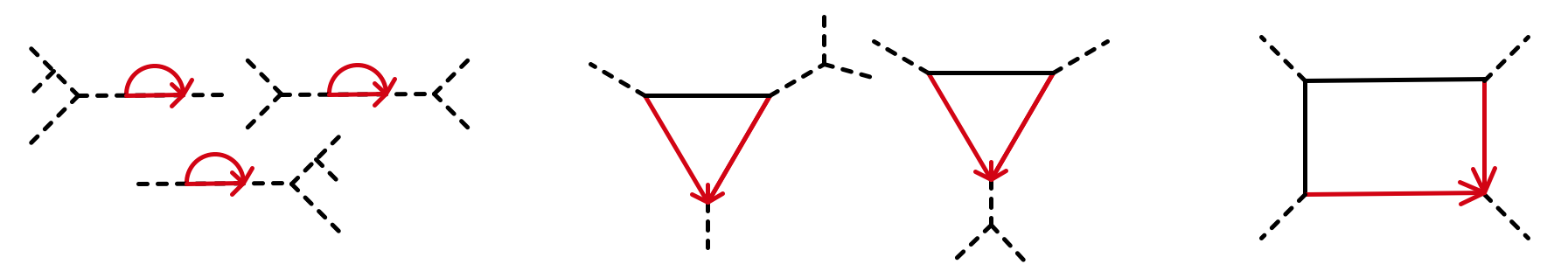}
 \caption{Cycles in a level-1 quartet network are classified as type $m_k$ if they have $m$ edges and 
 $k$ descendants of the hybrid node. The only cycles possible in a level-1 quartet network are of (L) type  
 $2_1$, $2_2$, and $2_3$; (C) type $3_1$ and $3_2$; and (R) type $4_1$. 
 The dashed lines represent subgraphs that may contain other $m_k$ cycles for $m=2,3$.
 }\label{fig:cycles}
 \end{center}
 \end{figure*}
 
\begin{defi} If the two smallest entries of the concordance factor $CF = CF_{abcd}$ are equal, then $CF$ is
said to be \emph{tree-like}.
If a tree-like CF has a unique largest entry, without loss of generality $CF_{AB|CD}$, then 
$CF$ \emph{supports} the quartet $ab|cd$. 
If $CF = (1/3,\, 1/3, \,1/3)$, then it supports all three quartets.
\end{defi}

This terminology is motivated by the fact that if a concordance factor
CF arises from the NMSC on a species \emph{tree}, then CF is
tree-like, and its largest entry indicates the quartet species tree
topology \cite{Allman2011}.  However, as was first shown in
\cite{Solis-Lemus2016}, certain types of non-tree networks also
produce tree-like CFs under the NMSC.

Viewing CF as a point in the probability simplex $\Delta_2=\{(x_1,x_2,x_3))\mid x_i\ge 0,\sum x_i=1\}$, 
as in Figure \ref{fig:simplex} (L), the tree-like CFs form 
3 line segments radiating from the central point $(1/3,1/3,1/3)$ to the vertices. 
With the ordering $$CF_{abcd}=(CF_{AB|CD},CF_{AC|BD},CF_{AD|BC}),$$
the diagonal segment leading to $(1,0,0)$ comprises those CFs supporting ${ab|cd}$, 
the segment leading to $(0,1,0)$ comprises those supporting $ac|bd$, and the vertical segment
leading to $(0,0,1)$ comprises those supporting $ad|bc$.

The next proposition summarizes several results from \cite{Banos2018}.  By 
 \emph{contraction} of a cycle, we mean the removal of its edges followed by 
 the identification of all vertices in it.

\begin{prop}\label{prop:CF}   Let $N^+$ be a level-1 binary quartet network and $N^-_c$ 
the network obtained from $N^-$ by contracting all $2$- and 3-cycles and then suppressing degree 2 nodes.

\begin{enumerate}
	
\item[1.] If $N^-$ has no cycle of type $4_1$ or $3_2$, then its concordance factor CF is tree-like, 
and supports the quartet  $N^-_c$. 
That is, if $N^-_c= ab|cd$, then
$$CF_{AB|CD}>CF_{AC|BD}=CF_{AD|BC}.$$

\item[2.] If $N^-$ has a $3_2$-cycle, then its concordance factor CF may or may not be tree-like.
In particular, CF is on the extended line segment in $\Delta_2$ containing the tree-like 
concordance factors that support the quartet $N^-_c$. 
Specifically, if $N^-_c= ab|cd$, then
$$CF_{AB|CD}\ge 1/6,  \text{  and  }CF_{AC|BD}=CF_{AD|BC},$$
and any such tree-like CF supports $ab|cd$.

\item[3.] If $N^-$ has a $4_1$-cycle, then its concordance factor CF is not tree-like, 
and if $N^-_c$ displays a 4-cycle joining taxa in circular order $a,b,c,d$,
then
$$CF_{AB|CD}>CF_{AC|BD}  \text{  and  }CF_{AD|BC}>CF_{AC|BD}.$$
\end{enumerate}
\end{prop}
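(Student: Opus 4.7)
The plan is to treat each case separately by computing, or at least qualitatively characterizing, the concordance factor directly from the NMSC on $N^-$, using that $CF_{abcd}$ depends only on the quartet network $Q_{abcd}$ induced from $N^-$. Since $N^-$ is already a quartet network, I trace the four sampled lineages backward in time, integrating out coalescence times along each edge and summing over the discrete choices each lineage makes upon reaching a hybrid node (each parent edge chosen with the corresponding hybridization parameter).

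For Part 1, the cycles permitted in $N^-$ are only of types $2_1$, $2_2$, $2_3$, and $3_1$, and I would show that each can be contracted without changing $CF$ beyond replacing an edge length with an effective length depending on $\gamma$ and the original cycle-edge lengths. In a 2-cycle both parent edges of the hybrid node meet at a common tree node above, so the hybridization choice only re-weights the coalescent waiting time within the cycle and can be absorbed into a single effective edge. A $3_1$-cycle involves only one descendant lineage, so again whichever parent it takes does not affect the remaining three taxa's lineages, and the cycle collapses to an effective edge. After these reductions $N^-$ becomes the tree $N^-_c$, and the classical MSC formula for quartet CFs on a tree \cite{Allman2011} yields a tree-like $CF$ whose unique largest entry corresponds to the displayed quartet.

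For Part 2, a $3_2$-cycle has two descendant taxa, say $a$ and $b$. The two paths around the cycle treat $a$ and $b$ symmetrically from the point of view of $c$ and $d$, so swapping the roles of $c$ and $d$ in the CF computation leaves probabilities unchanged, forcing $CF_{AC|BD}=CF_{AD|BC}$. This places $CF$ on the line $x_2=x_3$ in $\Delta_2$, i.e.\ the line containing the tree-like segment supporting $ab|cd$ and its extension past the centroid. For the bound $CF_{AB|CD}\ge 1/6$, I would write $CF$ as a convex combination of the CFs from the two quartet trees obtained by resolving the hybrid node according to which parent the descendant lineage follows; both of these display $ab|cd$ but with different effective edge lengths, and a direct estimate of the minimum of $CF_{AB|CD}$ over all edge lengths and $\gamma\in(0,1)$ yields the claimed lower bound.

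For Part 3, a $4_1$-cycle on cyclic order $a,b,c,d$ genuinely mixes two different quartet topologies: conditioning on which parent of the hybrid node the single descendant lineage takes yields a displayed tree of topology $ab|cd$ or $ad|bc$ (these are precisely the two adjacent-in-cycle quartets), never $ac|bd$. Hence $CF$ is a convex combination $\alpha\, T_1+(1-\alpha)\, T_2$, with $\alpha\in(0,1)$, of two tree-like CFs that support these two different quartets. Since neither mixture component places its maximum in the $ac|bd$ coordinate, and each displaced some of its mass onto the $ac|bd$ coordinate only through the strictly smaller ``minor" entries, both $CF_{AB|CD}$ and $CF_{AD|BC}$ strictly exceed $CF_{AC|BD}$; simultaneously, because the two major entries of $T_1$ and $T_2$ land in different coordinates, the two smallest entries of $CF$ are generically unequal, so $CF$ is not tree-like.

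The main obstacle I anticipate is Part 2, specifically the lower bound $CF_{AB|CD}\ge 1/6$. The symmetry giving the equal minor entries is immediate, but the quantitative bound requires carefully tracking three interacting quantities inside the $3_2$-cycle: the probability that the two descendant lineages coalesce inside the cycle, the hybrid split probability $\gamma$, and the coalescence probabilities of lineages that exit the cycle before meeting. Ruling out an edge-length and $\gamma$ regime that drives $CF_{AB|CD}$ below $1/6$ (which would otherwise let $CF$ approach the opposite endpoint $(0,1/2,1/2)$) is the delicate step, and it is where the specific structure of a $3_2$-cycle rather than a general 3-cycle enters in an essential way.
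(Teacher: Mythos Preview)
The paper does not prove this proposition; it is stated as a summary of results from \cite{Banos2018} and \cite{Solis-Lemus2016}, and the only related computation in the paper is the explicit $3_2$-cycle CF formula quoted immediately afterward. So there is no in-paper argument to compare against, only those explicit expressions.

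Your outline is broadly on the right track, but Part~2 contains a genuine conceptual slip. You propose to write the CF as ``a convex combination of the CFs from the two quartet trees obtained by resolving the hybrid node according to which parent the descendant lineage follows.'' In a $3_2$-cycle there are \emph{two} descendant lineages, $A$ and $B$, and under the NMSC each makes an \emph{independent} choice of parent edge at the hybrid node. The CF is therefore not a two-term mixture over the displayed trees but a mixture with weights $(1-\gamma)^2$, $2\gamma(1-\gamma)$, $\gamma^2$ --- exactly the structure of the explicit formulas the paper records. The middle term, in which the two lineages enter different parental populations, is precisely what can push $CF_{AB|CD}$ below $1/3$ and is what makes the $1/6$ bound nontrivial; a convex combination of the two displayed-tree CFs (both of which support $ab|cd$) would never leave the tree-like segment at all. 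You seem aware of the two lineages in your ``main obstacle'' paragraph, but the body of your Part~2 argument has the mechanism wrong, and the $1/6$ bound cannot be obtained from the decomposition you describe.

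A smaller point in Part~3: your mixture argument already shows that both $CF_{AB|CD}$ and $CF_{AD|BC}$ strictly exceed $CF_{AC|BD}$ for every valid parameter choice, so the CF is \emph{never} tree-like, not merely ``generically.'' The proposition asserts the stronger statement, and your own inequalities deliver it --- drop the hedge.
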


\begin{figure*}
\begin{center}
\includegraphics[width=.85\textwidth]{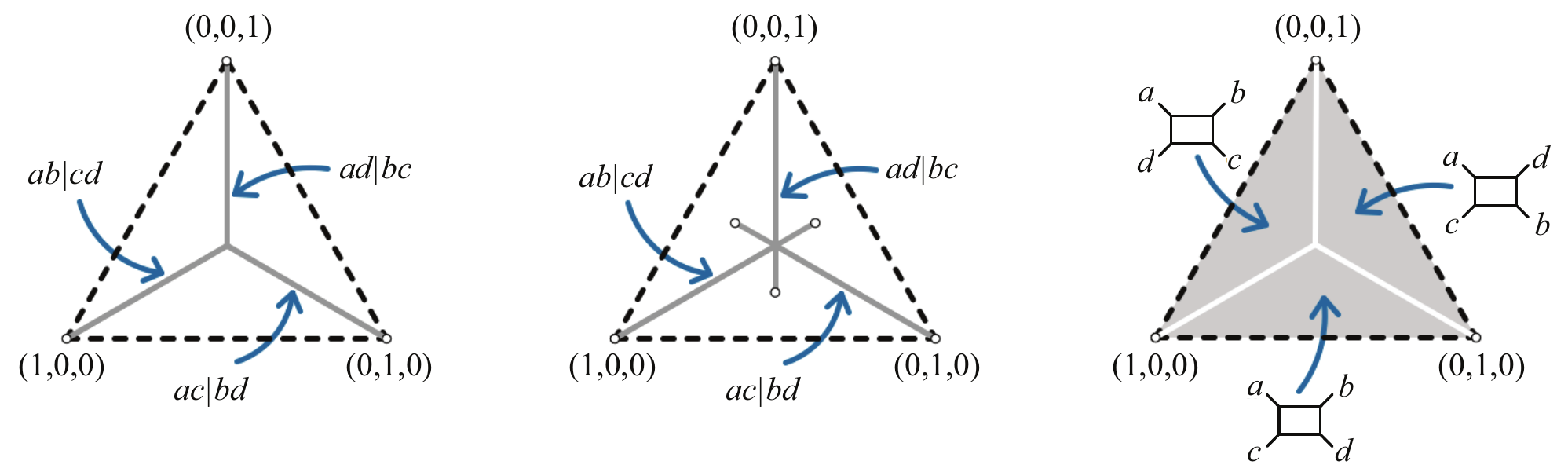} 
\caption{Planar projections of the simplex $\Delta_2$ showing types 
of concordance factors for networks $N^-_c$ of Proposition \ref{prop:CF}.
(L) Gray line segments represent
tree-like CFs that arise from quartet networks with no $3_2$-cycle and with no $4$-cycle. 
(C) Gray line segments represent  CFs that arise from quartet networks with a $3_2$-cycle. 
(R) Gray shaded areas represent CFs that arise from quartet networks containing a $4$-cycle.  
In all three figures, the topology of $N^-_c$ is marked for the 
appropriate line segments or regions of CFs. 
}\label{fig:simplex}
\end{center}\end{figure*}

In Figure \ref{fig:simplex}, we make concrete the proposition's results.  The CFs for binary quartet
networks partition the simplex: $\Delta_2 = \{\text{tree-like CFs}\} \sqcup  \{4_1\text{-cycle CFs}\}$,
with the collection of CFs for $3_2$-cycles meeting both subsets non-trivially.  Notably,
if a quartet network $N^-$ has no $3_2$-cycle, then CFs suffice to determine if $N^-_c$ is a tree or a 4-cycle.
This idea underlies our algorithm, as well as the network identifiability results from
 \cite{Banos2018}.

Indeed, we see from the partition that (in the absence of $3_2$-cycles) the presence
of 2-cycles and $3_1$-cycles has no impact on whether a quartet tree or 4-cycle network is supported.
This observation leads to the non-identifiability of such cycles on a network by the proof method utilized in
\cite{Banos2018}, and prevents NANUQ from detecting them too.   However, since
$2$- and $3_1$-cycles on a large network model `hybridization' between the most closely related populations 
(two that split and then rejoin, or hybridization between two populations which have just split from a common one) 
the inability to infer that such hybridization events occurred by our method may not be too surprising. 
The SNaQ algorithm \cite{Solis-Lemus2016} is likewise unable to detect these, as it too is based
on CFs.

Because concordance factors arising from quartet networks with a
$3_2$-cycle (case 2 of Proposition \ref{prop:CF}) coincide with CFs
for particular parameter choices for $4_1$-cycle networks and
tree-like networks, such CFs must be handled with delicacy.  Clearly,
$3_2$-cycles on quartet networks are not identifiable from CFs, and
therefore will not be reconstructed by the NANUQ algorithm which
focuses only on 4-cycles and tree-like quartet networks.  Because such
$3_2$-cycles will be disregarded, we investigate them more fully next.

A first observation is that for a tree-like CF arising
from a quartet network $N^-$ with a $3_2$-cycle, say with descendants $a$, $b$ of the hybrid node
as in Figure \ref{fig:3c2t},
then $N^-_c$ has topology $ab|cd$.  This is exactly the topology supported by the CF,
when viewed as arising from a particular parameter choice on the $4$-taxon tree
$ab|cd$.  Thus, while determining if the CF arises from a $3_2$-cycle or a tree is not possible, a tree-like
CF always correctly supports the topology of $N_c^-$.

This leaves the question of how `rare' are non-tree-like $3_2$-cycle networks, and what metric
structure on a $3_2$-cycle network might lead to CFs that coincide
with $4_1$-cycle CFs.

\subsection*{$3_2$-cycles}\label{subsec:3_2cycles}

Let $N^-$ be the unrooted quartet network shown in the left of Figure \ref{fig:3c2t}, with branch
length parameters $t_i$ in coalescent units, and hybridization paramter $\gamma$ as shown. 
 With $x_i= e^{-t_i}$ then \cite{Solis-Lemus2016,Banos2018} the quartet concordance factors of $N^-$ are 
 \begin{figure*}
\begin{center}
\includegraphics[width=.8\textwidth]{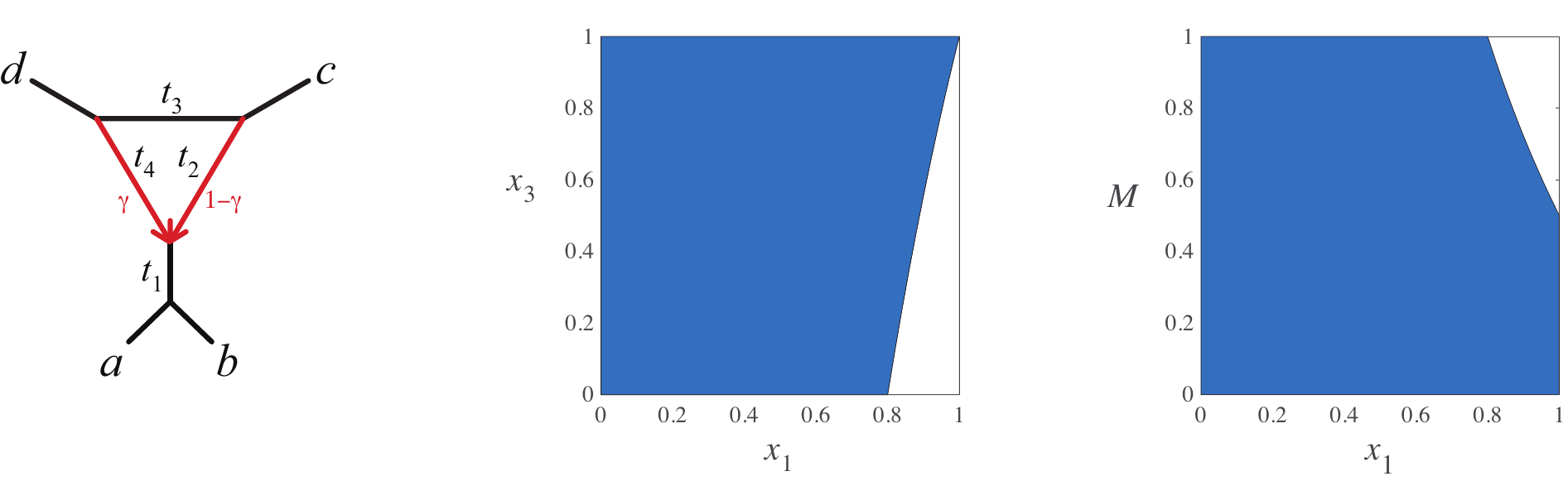}
\end{center}
\caption{(L) NMSC parameters for an induced unrooted quartet $N^-$ with a $3_2$-cycle. 
(C) A region of tree-like parameters $(x_1,x_3)$ on $N^-$ for arbitrary $t_2$, $t_4$, $\gamma$.  (R) A region of tree-like parameters $(x_1, M)$, where $M = \max\{x_2,x_4\}$ for arbitrary $t_3$, $\gamma$. Transformed
parameters are defined by $x_i = e^{-t_i}$.}
\label{fig:3c2t}
\end{figure*}
\begin{align*}
CF_{AB|CD}&=\big(1-\gamma\big)^2\big(1-\frac{2}{3}x_1x_2\big)\\
& \qquad + 2\gamma\big(1-\gamma\big)\big(1-x_1+\frac{1}{3}x_1x_3\big)\notag\\
& \qquad + \gamma^2\big(1-\frac{2}{3}x_1x_4\big),\notag\\
CF_{AC|BD}&=CF_{AD|BC}
=\big(1-\gamma\big)^2\big(\frac{1}{3}x_1x_2\big)\\
& \qquad +\gamma\big(1-\gamma\big)x_1\big(1-\frac{1}{3}x_3\big)\\
& \qquad +\gamma^2\big(\frac{1}{3}x_1x_4\big).\notag 
\end{align*}

We say a choice of parameters $\{t_1,t_2,t_3,t_4,\gamma\}$, or their
transformed versions $x_i$, is \emph{tree-like} if the CF for the
network is tree-like for those parameters.  The set of tree-like
parameters for $N^-$ is a region in the 5-dimensional cube, $0\le
x_1,x_2,x_3,x_4,\gamma \le 1,$ defined by the polynomial inequality
$CF_{AB|CD}\ge CF_{AC|BD}.$

To get a sense of the size of the tree-like region on $N^-$, we
sampled uniformly at random $10^{10}$ points in $[0,1]^5$.  For
untransformed branch length parameters $t_i$, this corresponds to
sampling from an exponential distribution with mean 1.  We computed
that approximately $0.00532$ of the resulting CFs were not tree-like.
In this sense, non-tree-like CFs from $3_2$-cycles are rare.

For additional insight into tree-like parameters on $N^-$, we investigate CFs as functions
of $x_1$ and $x_3$, with $0 < x_2,  x_4, \gamma \le 1$, noting that when $x_2$, $x_4$ achieve their
maximum value of 1, this corresponds to the network with hybrid branch lengths $t_2 = t_4 =  0$.  Concretely,
parameters are tree-like if
\begin{align}\label{eq:CFdiff}
CF&_{AB|CD}- CF_{AC|BD} \\
= & ~1 - x_1 \, \big((1-\gamma)^2 x_2+\gamma (1-\gamma) (3 - x_3) + \gamma^2 x_4\big) \notag\\
\ge &~1 - x_1 \, \big((1-\gamma)^2 +\gamma (1-\gamma) (3 - x_3) + \gamma^2 \big)\notag \\
 = & 1-x_1-\gamma(1-\gamma)x_1(1-x_3)\notag\\
\ge &~1-x_1-\frac14 x_1(1-x_3) \, = \, 1 - \frac{1}{4}x_1(5-x_3) \ge 0.  \notag
\end{align}
Hence parameters are tree-like for any values of $x_2,x_4,\gamma$ when $x_1 \le  4/(5-x_3)$, a region shown in the center of Figure \ref{fig:3c2t}.   
This region has area $4 \ln \frac 54 \approx .89$. More crudely, provided  $x_1 \le 4/5$ 
(that is, $t_1 \ge \log(5/4)\approx 0.2231$ coalescent units), then a tree-like CF results regardless  of all other parameter 
values.  Thus non-tree-like parameters require that $t_1$ be fairly short, causing substantial incomplete lineage sorting. 
For comparison, if the internal branch on a rooted 3-taxon species tree has length
 $t<\log(5/4)$, then fewer than half of the gene trees match the species tree under the MSC.
  
Although this argument assumed the non-existence of  $2_1$-, $2_2$-, and $3_1$-cycles in $N^-$,
a general  level-1 quartet network with a $3_2$-cycle might have cycles of those types.
The result generalizes without difficulty to these more general networks, with $t_1$ the length of the edge 
descended from the $3_2$-hybrid node.
For larger networks, we have the following proposition.

\begin{prop}\label{prop:nobad32}
Suppose $N^+$ is a level-1 network on $n$ taxa and that for each $m_k$-cycle 
with $m \ge 3$ and $k \ge 2$ the branch
descending from the hybrid node has length $t \ge \log(5/4)$.  Then under the NMSC model all 
CFs for induced quartet networks $\mathcal Q$ on $N^-$ are tree-like, except when $\mathcal Q$ has a $4$-cycle.
\end{prop}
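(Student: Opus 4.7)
The plan is to reduce the claim to one troublesome case and then control a key edge length. By Proposition \ref{prop:CF}, the CF of an induced quartet network $\mathcal Q$ on $N^-$ is automatically tree-like if $\mathcal Q$ has neither a $4_1$-cycle nor a $3_2$-cycle. Since the statement exempts $\mathcal Q$'s having a $4$-cycle, it suffices to show that under the hypothesis every induced $\mathcal Q$ containing a $3_2$-cycle has a tree-like CF.

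So assume $\mathcal Q$ contains a $3_2$-cycle. Because $N^-$ is level-1, distinct cycles of $N^-$ share no vertices, and every cycle of $\mathcal Q$ arises from a unique cycle $C$ of $N^-$ by edge restriction followed by degree-$2$ suppression. A $2$-cycle in $N^-$ has only two edges and therefore cannot yield a $3$-cycle, so $C$ has $m \ge 3$ edges. Moreover, the two taxa descending from the hybrid of $\mathcal Q$'s $3_2$-cycle also descend from the hybrid of $C$ in $N^-$, so $C$ has $k \ge 2$ taxa below its hybrid node. Hence the hypothesis applies to $C$. The edge descending from the hybrid of the $3_2$-cycle in $\mathcal Q$ is obtained by concatenating edges and suppressing degree-$2$ vertices along the semidirected path in $N^-$ from the hybrid of $C$ down to the node where the paths to the two descendant taxa first meet; this path begins with the edge immediately below the hybrid of $C$, whose length is $\ge \log(5/4)$ by hypothesis. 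Therefore the resulting length $t_1$ in $\mathcal Q$ satisfies $t_1 \ge \log(5/4)$, equivalently $x_1 = e^{-t_1} \le 4/5$.

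The final step is to apply the inequality chain in \eqref{eq:CFdiff}, which in the single-cycle $3_2$ setting gives $CF_{AB|CD} - CF_{AC|BD} \ge 1 - \tfrac14 x_1(5-x_3) \ge 0$ whenever $x_1 \le 4/5$, establishing tree-likeness when $\mathcal Q$ has only the one $3_2$-cycle. The main obstacle, and the only non-routine part of the argument, is extending this inequality to quartet networks $\mathcal Q$ that also contain $2_i$- or $3_1$-cycles, as anticipated by the remark following \eqref{eq:CFdiff}. I would handle this by writing out the CF formulas directly in the general case: each additional $2_i$- or $3_1$-cycle contributes coalescence-probability factors lying in $[0,1]$ that multiply subterms of the formulas for $CF_{AB|CD}$ and $CF_{AC|BD}$ in a parallel way, so the subtracted quantity bounding $CF_{AB|CD} - CF_{AC|BD}$ from below in \eqref{eq:CFdiff} can only shrink relative to the single-cycle case. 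Consequently, $x_1 \le 4/5$ still forces $CF_{AB|CD} \ge CF_{AC|BD}$, the CF is tree-like, and the proof is complete.
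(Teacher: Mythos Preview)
Your proposal is correct and follows essentially the same approach as the paper's proof: reduce to the $3_2$-cycle case, observe that the edge $t_1$ below the $3_2$-hybrid in $\mathcal Q$ is obtained by conjoining edges of $N^+$ including the one of length $t \ge \log(5/4)$ (so $x_1 \le 4/5$), and then invoke the inequality chain \eqref{eq:CFdiff}. The paper handles the extension to quartet networks containing additional $2_i$- or $3_1$-cycles with the same informality you do, via a remark immediately preceding the proposition rather than within the proof itself.
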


Before proving the proposition, note that an $m_k$-cycle in $N^+$ can induce not only a $4_1$-cycle in an induced
quartet network, but also smaller cycles, depending on the particular choice of four taxa.  
For instance, a 4-cycle in the network of Figure \ref{fig:unroot} (L) leads to a $3_2$-cycle
in the induced quartet network on $a, b, c, d$, as shown in Figure \ref{fig:quartet} (R).

\begin{proof}
Choose taxa so that the $3_2$-cycle in $N^-_{abcd}$ and its parameters are named as in Figure \ref{fig:3c2t}.  
Then $t_1 \ge t$ since the edge of length $t_1$ in $N^-_{abcd}$ is made by (possibly) conjoining several edges
in $N^+$, including the one of length $t$.
The argument following equation \eqref{eq:CFdiff} now applies.
\end{proof}

The branch length hypotheses in Proposition \ref{prop:nobad32} are
sufficient, but not necessary, for tree-like CFs in the presence of
$3_2$-cycles.  For instance, if a tree edge $e$ descendant from a
hybrid node in a $3_2$-cycle in $N^+$ is followed by one (or more)
$2$-cycles, then the length requirement on $e$ to produce tree-like
CFs might be shortened.

\smallskip

Focusing again on the quartet network of Figure \ref{fig:3c2t}, we now
investigate transformed branch length parameters $x_2$, $x_4$ on
hybrid edges that lead to tree-like parameter choices.  To this end,
let $M=\max(x_2,x_4) \le 1$.  Then from equation \eqref{eq:CFdiff} for
any $x_3$, $\gamma$, we find
\begin{align*}
CF&_{AB|CD}- CF_{AC|BD}  \\
&\ge 1 - x_1 \, \big((1-\gamma)^2 M+3 \gamma (1-\gamma)  + \gamma^2 M\big)\\
& \ge 1-x_1\left ( \frac{3+2M}4\right ),
\end{align*}
and parameters are tree-like if $M\le \min \left ( \frac 2{x_1}-\frac
  32, 1 \right )$, a region shown in blue in Figure \ref{fig:3c2t}
(R).  Its area is $2 \log(\frac 54 ) + .5 \approx 95\%$ of the $(x_1,
M)$-parameter space shown.  As a special case, if $M\le \frac 12$
(equivalently, $\min\{t_2,t_4\}\geq \log(2)\approx 0.693$ coalescent
units), parameters are tree-like for all choices of $(x_1, x_3,
\gamma)$.

\medskip 

The branch length conditions presented here that rule out
non-tree-like $3_2$-cycles come with a caution, since one might prefer
to avoid \emph{a priori} modeling assumptions on branch lengths.
Nonetheless, our goal has been to suggest that plausible assumptions
can rule out non-tree-like CFs arising from $3_2$-cycles in quartet
networks. Inspection of empirical CFs from a data set may provide
further evidence that no such CFs are involved in a data analaysis
  

\section*{Network split systems and distances}\label{sec:split}

The ability to use quartet CFs to determine whether a quartet network  displays
a 4-cycle can be combined with ideas from \cite{Rhodes2019}
to compute a pairwise distance between taxa on a large $n$-taxon network.   
Indeed, the intertwining of these ideas with that of a weighted circular split system 
is the foundation of the NANUQ algorithm.  
In this section we review the concepts of weighted circular split systems and associated distances,
as needed for our inference method.

\subsection* {Split systems}

We adopt standard terminology concerning splits \cite{Bryant2007}. A \emph{split} $A|B=B|A$ 
of taxa $X$ is a bipartition $X=A\sqcup B$ with $A,B$ non-empty. The subsets $A$, $B$
are called split sets.  The set of all splits of $X$ is 
denoted by $(X)$, and $\mathcal S\subseteq (X)$ is called a \emph{split system} on $X$.

\begin{defi}\label{def:circular split}
A split system $\mathcal S\subseteq (X)$ is 
\emph{circular} if there exists a linear ordering $x_1<...<x_n$ of the elements of $X$ 
such that each split in $S$ has the form $A|B$ with
$$A=\{x_p,x_{p+1},....,x_{q-1},x_q\}$$
for appropriately chosen $1\le p<  q< n$. The ordering of the $x_i$ is
a \emph{circular ordering} for $\mathcal S$.
\end{defi}

A circular ordering for $\s$ is not unique, since it can be modified by
cyclically permuting the $x_i$ (e.g., replaced with $x_2< x_3,<\dots <x_n<x_1$) or by
inversion (replaced with $x_n<x_{n-1}<\dots< x_1$), while remaining a circular ordering for $\s$. 
We treat such variants as the same, without further comment.

\medskip

Given a tree $T$ on $X$, deleting an edge defines a split according to the connected components of the resulting  graph. The set of all such displayed splits is denoted $\mathcal S(T)$, and it is clear from a planar depiction of a tree that $\mathcal S(T)$ is circular.

For a tree, the correspondence between edges and displayed splits allows edge weights to be viewed as split weights, by setting weights of non-displayed splits to 0. This is a special case of a
\emph{weighted split system on $X$}, a map $$\omega:(X)\to \mathbb{R}^{\ge 0}.$$ 

A weighted split system $\omega$ on $X$ induces a distance  function $d_\omega$ on $X$ by
$$ d_\omega(x,y)=\sum_{s\in S_{xy}} \omega(s), $$
where $S_{xy}\subseteq (X)$ is the set of splits separating $x$ and $y$, i.e., splits $A|B$, with $x\in A$ and $y\in B$. Clearly $d_\omega$ is non-negative valued, with $d_\omega(x,x)=0$, $d_\omega(x,y)=d_\omega(y,x)$.

Recall that the support of a weighted split system, 
denoted $\operatorname{supp}(\omega)$, is the set of splits on which $\omega$ is non-zero. 

\begin{defi}\label{def:circular function}
  A weighted split system $\omega$ on $X$ is said to be
  \emph{circular} if $\operatorname{supp}(\omega)$ is circular. A
  distance function $d$ on $X$ is said to be \emph{circular} if
  $d=d_\omega$ for some circular weighted split system $\omega$.
\end{defi}

As pointed out in \cite{Bryant2007}, it follows from  \cite{Dress1992} that a circular distance function $d$ 
uniquely determines the weighted split system $\omega$ such that $d= d_\omega$. 

\subsection* {Splits from unrooted networks}

Our notion of splits associated to a network, and some related
terminology, is not standard, but is essential to this work. In
particular, we focus only on phylogenetic unrooted networks $N^-$ as
in Definition \ref{def:unrooted}, where $N^-$ is induced from a rooted
phylogenetic network and the direction of hybrid edges are retained in
$N^-$.

\begin{defi}\label{def:grove}
  Let $N^-$ be a unrooted network on $X$. An unrooted tree $T$ on $X$
  is \emph {displayed} on $N^-$ if it can be obtained from $N^-$ by
  deleting some edges, including at least one hybrid edge from each
  pair, undirecting remaining hybrid edges, and suppressing degree 2
  nodes. The set of all unrooted topological trees on $X$ displayed on
  $N^-$ is called the \emph{grove} of $N^-$, denoted $\grove$.
\end{defi}

If $N^-$ has an $m$-cycle with $m\ge 4$, then the grove $\grove$ is a
proper subset of the displayed trees on the undirected network $N^u$
underlying $N^-$ as defined in \cite{Gambette2012}.  This is because
$N^u$ is obtained by undirecting the hybrid edges in $N^-$, and there
is additional freedom in the choice of edges to delete in $N^u$ to
obtain its displayed trees: It is not necessary to delete at least one
of the edges from $N^u$ that arose from each pair of hybrid edges in
$N^-$.

\smallskip

If $N^-$ has 2- or  3-cycles, then deleting either hybrid edge in those cycles 
yields trees with the same topology, and hence gives the same elements of $\grove$.  
In contrast, for cycles of size 4 or larger, the trees in $\grove$ vary with
the choice of hybrid edge deleted.  Since we assume that $N^-$ is level-1 
with $k$ cycles of size $\ge 4$, then $|\grove |=2^k$.  

\begin{defi}\label{def:split arrangement}
  For an unrooted network $N^-$, the set of splits $$\mathcal
  S(N^-)=\bigcup_{T\in \grove} \mathcal S(T)$$ is called the
  \emph{(unweighted) split system for $N^-$}. A \emph{weighted split
    system for $N^-$} is any weighted split system with support
  $\mathcal S(N^-)$.
\end{defi}

\medskip

The study of undirected networks in \cite{Gambette2012} provides the
following important theorem, an analog for undirected networks of
Buneman's splits equivalence theorem.

\begin{theorem}[\cite{Gambette2012}]\label{thm:gambette}
Let $S$ be a split system on a set $X$.
Then $S$ is circular if, and only if, there exists an undirected level-1 network $N$ such that $S \subseteq \s(N)$,
the set of all splits of all trees on $X$ displayed on $N$. 
\end{theorem}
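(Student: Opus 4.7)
The plan is to prove the two directions separately.

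For the forward implication ($N$ an undirected level-1 network implies $\mathcal S(N)$ is circular): I would exploit that any undirected level-1 network is a cactus graph, in which every biconnected component is a single edge or a single cycle, and therefore planar. Fix a planar embedding of $N$; this induces a cyclic order $\sigma$ on the leaves around the outer face. Any tree $T$ in the grove of $N$ is obtained by deleting edges so as to break all cycles, so $T$ is a subgraph of $N$ and inherits a planar embedding with the same cyclic leaf order $\sigma$. Deleting any edge of $T$ partitions its leaves into two contiguous intervals of $\sigma$, so every split in $\mathcal S(T)$, and hence every split in $\mathcal S(N)$, is an interval split with respect to $\sigma$. Any $S \subseteq \mathcal S(N)$ is therefore circular with the same ordering.

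For the reverse implication (given circular $S$, construct an undirected level-1 $N$ with $S \subseteq \mathcal S(N)$): the plan is to exhibit $N$ explicitly. Fix a circular ordering $x_1 < x_2 < \cdots < x_n$ witnessing circularity of $S$, and let $N$ be the undirected graph consisting of a single $n$-cycle on nodes $v_1, \ldots, v_n$, with taxon $x_i$ attached as a pendant leaf at $v_i$. This $N$ is trivially level-1, as it has only one cycle. For each $i$, deleting the cycle edge $\{v_i, v_{i+1}\}$ and suppressing the resulting degree-$2$ nodes yields a caterpillar $T_i$ whose spine passes through $x_{i+1}, x_{i+2}, \ldots, x_i$ in order. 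The internal edges of $T_i$ give splits of the form $\{x_{i+1}, \ldots, x_j\} \mid X \setminus \{x_{i+1}, \ldots, x_j\}$ for $j \ne i$; taking the union over $i$ recovers every nontrivial interval split with respect to $\sigma$, so in particular $S \subseteq \mathcal S(N)$.

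The main obstacle is not either implication itself, but reconciling conventions. One must confirm that Gambette's notion of a tree displayed by a purely undirected level-1 network agrees with taking any maximal acyclic spanning subgraph (after suppressing degree-$2$ nodes), so that every caterpillar $T_i$ above genuinely lies in the grove of the single-cycle $N$, and that a single cycle with pendants qualifies as level-1 in \cite{Gambette2012}'s framework. These are the places where the paper's semidirected conventions (which treat hybrid edges specially in Definition \ref{def:grove}) would need to be replaced by their purely undirected analogues. Once these definitional points are settled, both directions reduce to the planarity and combinatorial arguments sketched above.
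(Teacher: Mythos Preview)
The paper does not prove Theorem~\ref{thm:gambette} at all; it is quoted from \cite{Gambette2012} and used as a black box to obtain Corollary~\ref{cor:splitcirc}. So there is no ``paper's own proof'' to compare against.

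That said, your sketch is essentially correct and is in the spirit of the argument in \cite{Gambette2012}. For the forward direction, the key fact you use---that a level-1 undirected network is a cactus, hence outerplanar, so one may take a planar embedding with all leaves on the boundary of the outer face---is exactly what is needed; the observation that any spanning subtree inherits this embedding and therefore yields only interval splits is standard. For the reverse direction, the single $n$-cycle with pendants is the canonical witness, and your caterpillar argument correctly shows that its displayed trees exhaust all interval splits for the given circular order.

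Your caveat about conventions is the right place to be careful, but it resolves easily in the purely undirected setting of \cite{Gambette2012}: there a displayed tree is obtained by deleting one edge from each cycle (any edge, since there are no distinguished hybrid edges), so each caterpillar $T_i$ is indeed displayed on the single-cycle $N$. The only small wrinkle in your write-up is the phrase ``$T$ is a subgraph of $N$'': after suppressing degree-$2$ nodes $T$ is no longer literally a subgraph, but the splits and the induced cyclic leaf order are unchanged by suppression, so this does not affect the argument.
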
 

Note that if $N^u$ is the undirected network underlying the unrooted
network $N^-$, then $\s(N^-) \subseteq \s(N^u)$.  As a consequence, we
obtain the following.

\begin{coro}\label{cor:splitcirc} If $N^-$  is  a  level-1 unrooted network, then 
	$\mathcal S(N^-)$  is circular.
\end{coro}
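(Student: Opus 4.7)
The plan is to derive this as a direct consequence of Theorem \ref{thm:gambette} (Gambette's characterization of circular split systems via level-1 undirected networks), once the pieces are lined up correctly. The key observation is that Gambette's theorem only requires containment in the split system of \emph{some} undirected level-1 network, so I do not need to build a network from scratch; I can use the undirected version of $N^-$ itself.

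First, I would let $N^u$ denote the undirected network obtained from $N^-$ by simply undirecting every hybrid edge (all other edges are already undirected). Since our definition of level-1 (Definition \ref{level1}) refers only to the underlying undirected graph, $N^u$ is level-1 precisely because $N^-$ is: the underlying undirected graphs of $N^-$ and $N^u$ are identical, so their cycle structures coincide, and no two cycles share a vertex in either. Thus $N^u$ is an undirected level-1 network in the sense of \cite{Gambette2012}, and Theorem \ref{thm:gambette} applies to it.

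Next I would justify the containment $\mathcal S(N^-) \subseteq \mathcal S(N^u)$, which was noted in the paragraph preceding the corollary. Every tree in the grove $\mathcal G(N^-)$ is produced by deleting, for each pair of hybrid edges in $N^-$, at least one of them, undirecting what remains, and suppressing degree-2 nodes. This is one particular kind of deletion procedure allowed when forming displayed trees of $N^u$ (where the hybrid edges are already undirected and there is no constraint to delete one from each original pair). Hence $\mathcal G(N^-)$ is a subset of the set of trees displayed on $N^u$, and therefore
\[
\mathcal S(N^-) \;=\; \bigcup_{T \in \mathcal G(N^-)} \mathcal S(T) \;\subseteq\; \mathcal S(N^u).
\]

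Finally, by Theorem \ref{thm:gambette} applied to $S = \mathcal S(N^-)$ and $N = N^u$, the split system $\mathcal S(N^-)$ is circular. I do not expect any real obstacle here: the only point that requires a sentence of care is confirming that our directed notion of level-1 coincides with the undirected notion used in \cite{Gambette2012} once hybrid edges are undirected, and that the grove construction is indeed a restriction of (not a generalization of) the displayed-tree construction on $N^u$.
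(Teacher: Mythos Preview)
Your proposal is correct and follows essentially the same route as the paper: the paper simply notes that $\mathcal S(N^-) \subseteq \mathcal S(N^u)$ for the underlying undirected network $N^u$ and then invokes Theorem \ref{thm:gambette}. You have spelled out more carefully why $N^u$ is level-1 and why the grove trees are among the displayed trees of $N^u$, but the argument is the same.
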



\section*{Quartet Distance for level-1 networks}\label{sec:qdist}

As shown in \cite{Rhodes2019}, a topological tree has a natural metrization tied to the quartets 
displayed on the tree.  Importantly, intertaxon distances from this 
metrization can be computed from the collection of displayed quartets, without having knowledge
of the full tree, giving a means for consistently inferring the tree topology. 
After briefly reviewing these results in the tree setting, we generalize them to the setting of level-1 networks.

 \subsection*{Quartet distance on a tree} 

 For an unrooted binary topological phylogenetic tree $T$ on $X$, any
 internal edge $e$ induces a partition of $X$ into 4 non-empty blocks,
 $X_1$, $X_2$, $X_3$ and $X_4$, where the split associated to $e$ is
 $s_e=X_1\cup X_2|X_3\cup X_4$, and the splits associated to the 4
 adjacent edges have an $X_i$ as one split set. Similarly, a pendant
 edge $e$ to taxon $a$ induces a partition into 3 blocks $X_1$, $X_2$
 and $\{a\}$, where $s_e=\{a\}|X_1\cup X_2$, and the splits associated
 to the 2 edges adjacent to $e$ have an $X_i$ as one split set. The
 quartet weight function $w_T:(X) \to \RR$ is defined as
 	\[ w_T(s)=\begin{cases} 
 |X_1||X_2|+|X_3||X_4| & \text{ if $s=s_e$, $e$ internal}, \\
  |X_1||X_2| & \text{ if $s=s_e$, $e$ pendant}, \\
  0 & \text{ if $s$ is not on $T$.}
 \end{cases}
 \]
 
 This split weight function then induces $d_{w_T}$, the quartet
 distance function on $X$.  This distance is a tree metric, and
 therefore can be used to reconstruct the topological binary $n$-taxon
 tree $T$ by several algorithms.  Significantly, the distance function
 $d_{w_T}$ can computed another way, from the set of quartets
 displayed on $T$, without prior knowledge of the full tree topology.

\begin{theorem} \label{prop:treequart}\cite{Rhodes2019} For any quartet $q$ on taxa in $X$ with 
$|X|=n$, let $\rho_{xy}(q)=1$ if $q=xz|yw$ separates $x,y$, and 0 otherwise. Then for an unrooted 
binary tree $T$ on $X$, and any $x,y\in X$,
\begin{equation}
d_{w_T}(x,y)=2\sum_{\text{$q$ on $T$} }\rho_{xy}(q)+2n-4.
\label{eq:treedis}
\end{equation}
\end{theorem}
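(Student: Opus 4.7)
The plan is to show both sides equal $n(n-2) - \sum_{i} n_i^2$, where the $n_i$ count taxa in the subtrees pendant off the $x$-to-$y$ path in $T$. Since the splits separating $x$ and $y$ are exactly those associated to edges on this path, let $x = v_0, v_1, \ldots, v_k = y$ be the path, $e_j = v_{j-1}v_j$, and for each internal vertex $v_i$ ($1 \le i \le k-1$) let $T_i$ be the subtree at $v_i$ off the path, with $n_i$ taxa, so $\sum_{i} n_i = n - 2$. Write $A_i = \{x\} \cup T_1 \cup \cdots \cup T_{i-1}$ and $B_i = \{y\} \cup T_i \cup \cdots \cup T_{k-1}$ for the split sets of $e_i$; note $|A_i| + |B_i| = n$.

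Next I would unpack $w_T(s_{e_j})$ from its definition. At the pendant edge $e_1$, the two non-$e_1$ edges at $v_1$ give blocks $T_1$ and $B_2$, hence $w_T(s_{e_1}) = n_1 \, |B_2|$; similarly $w_T(s_{e_k}) = n_{k-1} \, |A_{k-1}|$. For an internal edge $e_j$, the four blocks are $A_{j-1}, T_{j-1}$ (from the two non-$e_j$ edges at $v_{j-1}$) on the $x$-side, and $T_j, B_{j+1}$ on the $y$-side, giving $w_T(s_{e_j}) = |A_{j-1}| \, n_{j-1} + n_j \, |B_{j+1}|$. Reorganizing by vertex, each $v_i$ contributes $n_i \, |A_i|$ to the weight of $e_{i+1}$ and $n_i \, |B_{i+1}|$ to that of $e_i$; since $|A_i| + |B_{i+1}| = n - n_i$ this yields
\begin{equation*}
d_{w_T}(x,y) = \sum_{i=1}^{k-1} n_i (n - n_i) = n(n-2) - \sum_{i=1}^{k-1} n_i^2.
\end{equation*}

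For the right-hand side, I would invoke the standard fact that the induced quartet on $\{x, y, z, w\}$ in $T$ is $xy|zw$ exactly when the paths $xy$ and $zw$ in $T$ are edge-disjoint; hence the induced quartet separates $x$ and $y$ precisely when the two paths share an edge, which occurs iff $z$ and $w$ lie in distinct hanging subtrees $T_i, T_j$. Therefore $\sum_{q \text{ on } T} \rho_{xy}(q) = \sum_{1 \le i < j \le k-1} n_i n_j = \tfrac{1}{2}\bigl((n-2)^2 - \sum n_i^2\bigr)$, so
\begin{equation*}
2 \sum_{q \text{ on } T} \rho_{xy}(q) + 2n - 4 = (n-2)^2 - \sum n_i^2 + 2(n-2) = n(n-2) - \sum n_i^2,
\end{equation*}
matching the computation of $d_{w_T}(x,y)$. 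The main obstacle is the bookkeeping when reorganizing the edge-wise weight sum by vertex, specifically verifying that the pendant edges $e_1, e_k$ fit uniformly into the pattern and handling short paths (small $k$) where some $A_i$ or $B_i$ degenerate to a singleton.
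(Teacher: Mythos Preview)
The paper does not supply its own proof of this theorem; it is quoted from \cite{Rhodes2019} and used as a black box. Your argument is correct. The reorganization by vertex works uniformly: for each internal $v_i$ on the $x$--$y$ path, the contribution $n_i|A_i|$ goes into $w_T(s_{e_{i+1}})$ and $n_i|B_{i+1}|$ into $w_T(s_{e_i})$, and this pattern absorbs the pendant edges $e_1,e_k$ without special handling, since $e_1$ receives only the contribution from $v_1$ (there is no $v_0$-term) and likewise for $e_k$. The short-path case $k=2$ also checks: both edges are pendant, the single block $T_1$ has $n_1=n-2$, and both sides reduce to $2(n-2)$. Your characterization of $\rho_{xy}$ via whether $z,w$ lie in distinct hanging subtrees $T_i$ is the standard one, and the resulting count $\sum_{i<j}n_in_j$ gives exactly the claimed identity.
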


\subsection*{Quartet distance on a network}
To generalize Theorem \ref{prop:treequart} to a network,
we begin with a definition.

\begin{defi}\label{def:netsplitwts}
Let $N^-$ be an unrooted network on $X$.  Then  
the \emph{quartet weight function} $\omega_{N^-}$ 
is defined by
$$\omega_{N^-}(s) = \sum_{T\in \grove} w_T(s), $$ 
 where $s \in (X)$ and $w_T(s)$ is the quartet weight function on $T$.
\end{defi}
 
Note that since $\supp(w_T)=\mathcal S(T)$ for each $T$, $\supp (\omega_{N^-})=S(N^-)$.
Thus, by Corollary \ref{cor:splitcirc}, the quartet weight function $\omega_{N^-}$ is a weighted 
circular split system for $N^-$.  Moreover, the induced distance function is easily related to those for the trees
in the grove $\grove$.

\begin{lemma}\label{lem:quartet tress}
 	Let $N^-$ be a level-1 unrooted network  on $X$. Then
 	$$d_{\omega_{N^-}}=\sum_{T\in \grove }d_{w_T}.$$
\end{lemma}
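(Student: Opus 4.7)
The proof is really a short exercise in unpacking definitions and interchanging the order of a double sum, so my plan is to do exactly that.

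First, I would fix $x, y \in X$ and evaluate $d_{\omega_{N^-}}(x,y)$ directly from the definition of the distance induced by a weighted split system. By the formula $d_\omega(x,y) = \sum_{s \in S_{xy}} \omega(s)$ applied to $\omega = \omega_{N^-}$, we get
\[
d_{\omega_{N^-}}(x,y) \;=\; \sum_{s \in S_{xy}} \omega_{N^-}(s) \;=\; \sum_{s \in S_{xy}} \sum_{T \in \grove} w_T(s),
\]
where the second equality uses Definition \ref{def:netsplitwts}. Since $\grove$ is finite (in the level-1 case, $|\grove| = 2^k$ where $k$ is the number of cycles of size $\ge 4$) and $S_{xy} \subseteq (X)$ is also finite, I can freely exchange the two summations to obtain
\[
d_{\omega_{N^-}}(x,y) \;=\; \sum_{T \in \grove} \sum_{s \in S_{xy}} w_T(s).
\]

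Next, I would identify the inner sum as $d_{w_T}(x,y)$. Here the one small point to observe is that $S_{xy}$ is the set of \emph{all} splits of $X$ that separate $x$ and $y$, not just those displayed on $T$. However, by definition $w_T(s) = 0$ whenever $s$ is not a split of $T$, so the inner sum collapses to the sum over those splits of $T$ which separate $x$ and $y$, namely $d_{w_T}(x,y)$ by the very definition of the distance function associated to the weighted split system $w_T$. Hence
\[
d_{\omega_{N^-}}(x,y) \;=\; \sum_{T \in \grove} d_{w_T}(x,y),
\]
which is the desired identity pointwise, and therefore as functions on $X \times X$.

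There is really no obstacle; the lemma is a consequence of the linearity of the distance construction $\omega \mapsto d_\omega$ combined with the additive definition of $\omega_{N^-}$, and the only point worth spelling out is the zero-extension observation in the previous paragraph (so that summing over $S_{xy}$ rather than over the splits of $T$ separating $x$ and $y$ causes no discrepancy). The lemma itself is the key bridge between the network-level split weight function and the tree-level quartet distances that will be used in the subsequent analysis of the splits graph.
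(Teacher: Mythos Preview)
Your proof is correct and follows essentially the same approach as the paper: unpack the definition of $d_{\omega_{N^-}}$, substitute the definition of $\omega_{N^-}$, interchange the two finite sums, and recognize the inner sum as $d_{w_T}(x,y)$. The only difference is that you make explicit the finiteness justifying the interchange and the zero-extension of $w_T$ to all of $(X)$, which the paper leaves implicit.
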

\begin{proof} For $x,y\in X$, let $S_{xy}\subset(X)$ be the set of splits separating $x$ and $y$. Then
\begin{align*}d_{\omega_{N^-}}&(x,y)
=\sum_{s\in S_{xy}}\omega_{N^-}(s)
=\sum_{s\in S_{xy}} \sum_{T\in \grove} w_T(s)\\
&=\sum_{T\in \grove} \sum_{s\in S_{xy}} w_T(s)
=\sum_{T\in \grove }d_{w_T}(x,y).
\end{align*} 
\end{proof}
 
To state a network analog of Theorem \ref{prop:treequart}, we must extend the indicator function 
$\rho_{xy}$ to quartet networks.

\begin{figure*}\begin{center}
\includegraphics[width=7cm]{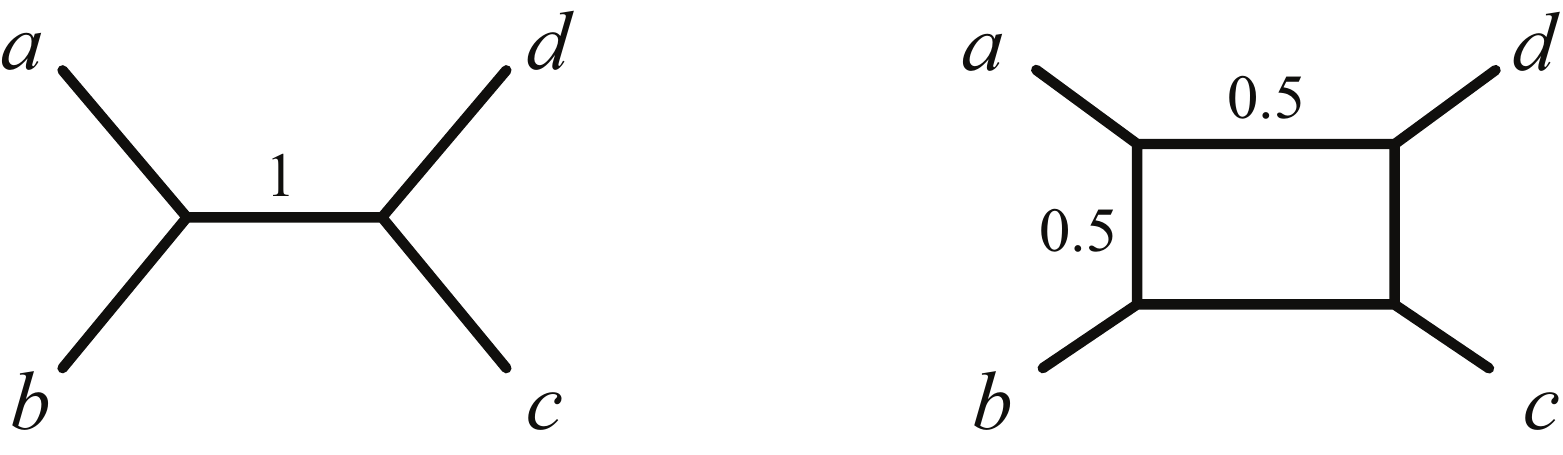} 
\caption{For the tree $Q_{abcd}$ on the left,
$\rho_{ab}( Q_{abcd})=0$ and $\rho_{ac} ( Q_{abcd} ) = 1$, since
$a$ and $c$ are separated by $ab|cd$, but $a$ and $b$ are not.
For the quartet network $Q_{abcd}$ on the right, $\rho_{ab}( Q_{abcd})=1/2$ and
$\rho_{ac}( Q_{abcd})=1$, since the trees displayed by $Q_{abcd}$ are
$ab|cd$ and $ad|bc$.}\label{fig:quartets_wts} 
\end{center}\end{figure*}

\begin{defi} \label{def:rhonet} Let $Q_{xyzw}$ be an unrooted level-1
  4-taxon network on 4 distinct taxa $x,y,z,w\in X$. After contracting
  all 2- and 3-cycles, and suppressing degree 2 nodes, we obtain a
  network $\widetilde Q_{xyzw}$ that is either a tree or has a single
  4-cycle. Let
$$\rho_{xy}( Q_{xyzw})=\begin{cases} 
  0 &\text{ if  $\widetilde Q_{xyzw}$ has form $xy|zw$},\\
  1/2 & \begin{aligned} &\text{if  $\widetilde Q_{xyzw}$ has a 4-cycle}\\ 
&\text{\ \ \ \ \ \ \ \ \ \ \ \ with $x,y$ adjacent, }\end{aligned}\\
  1 & \text{ otherwise}.
 	\end{cases}
$$		
\end{defi}

In the case  $Q_{xyzw}$ is a tree, this definition agrees 
with that in Theorem \ref{prop:treequart}.   An intuitive way of viewing this extension to networks 
is to observe that when $\widetilde Q_{xyzw}$ is a 4-cycle, $\rho_{xy} (Q_{xyzw})$ 
is the average of the values of $\rho_{xy} (T)$ for $T \in \mathcal G (\widetilde Q_{xyzw})$,
so $\rho_{xy}$ measures how separated $x$ and $y$ are on $Q_{xyzw}$.
See Figure \ref{fig:quartets_wts}.

\begin{lemma}\label{lem:quartetweight} For a unrooted level-1 network $N^-$, with $k$ cycles of size $\ge 4$, 
and distinct $x,y,z,w\in X$, let  $Q_{xyzw}$ be the induced unrooted 4-taxon network on $x,y,z,w$. Then
$$\rho_{xy}( Q_{xyzw})=\frac 1{2^k}\sum_{T\in\grove}\rho_{xy}(T_{xyzw}).$$
\end{lemma}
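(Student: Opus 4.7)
The plan is to analyse how elements of $\grove$ project onto the grove of $Q_{xyzw}$ via the natural restriction map $\Phi : \grove \to \mathcal G(Q_{xyzw})$, $T \mapsto T_{xyzw}$, and then rewrite the sum in the statement as a weighted sum over $\mathcal G(Q_{xyzw})$.

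The first step is to verify that $\Phi$ is a well-defined surjection: restricting a tree displayed by $N^-$ to four taxa coincides with first inducing $Q_{xyzw}$ from $N^-$ and then displaying a tree from $Q_{xyzw}$, since the hybrid edges surviving in $Q_{xyzw}$ are precisely the images, after suppression of degree 2 nodes, of the hybrid edges of $N^-$ lying on some simple trek between a pair of $\{x,y,z,w\}$. Thus a choice of hybrid edge to delete on $N^-$ induces one on $Q_{xyzw}$, and every tree in $\mathcal G(Q_{xyzw})$ is hit.

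Next I would establish the counting lemma: writing $k''$ for the number of cycles of size $\ge 4$ in $Q_{xyzw}$, every fibre of $\Phi$ has cardinality $2^{k-k''}$. Since $Q_{xyzw}$ is a $4$-taxon level-1 network, the classification in Figure \ref{fig:cycles} forces $k'' \in \{0,1\}$. The counting amounts to showing that a cycle of $N^-$ which shrinks to a 2-cycle, a 3-cycle, or vanishes in $Q_{xyzw}$ contributes no variation to $T_{xyzw}$: edges that vanish lie on no simple trek between pairs of $\{x,y,z,w\}$, while 2- and 3-cycles yield the same quartet under either hybrid-edge deletion. Meanwhile, the $k''$ cycles of $N^-$ that project to cycles of size $\ge 4$ in $Q_{xyzw}$ do produce distinct induced quartets.

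The computation then splits into two cases. If $k'' = 0$, then $\widetilde Q_{xyzw}$ is a tree $\tau$, every $T_{xyzw}$ equals $\tau$, and the right-hand side collapses to $\rho_{xy}(\tau)$, which matches $\rho_{xy}(Q_{xyzw})$ by the first or third clause of Definition \ref{def:rhonet}. If $k'' = 1$, then $\mathcal G(Q_{xyzw}) = \{\tau_1,\tau_2\}$ consists of the two quartets displayed by the unique 4-cycle, each fibre of $\Phi$ has size $2^{k-1}$, and the averaged sum simplifies to $\tfrac12\bigl(\rho_{xy}(\tau_1) + \rho_{xy}(\tau_2)\bigr)$. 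Reading the cyclic order of the 4-cycle, this evaluates to $1/2$ when $x,y$ are adjacent and $1$ when $x,y$ are opposite, matching the remaining clauses of Definition \ref{def:rhonet}.

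The main obstacle will be the counting lemma, specifically justifying rigorously that restriction to simple treks among $\{x,y,z,w\}$ commutes with hybrid-edge deletion up to topological equivalence. This needs careful bookkeeping of suppressed degree 2 nodes and uses the level-1 hypothesis crucially, so that distinct cycles of $N^-$ can be treated independently (no two cycles share a vertex, so their contractions in the induced quartet do not interfere).
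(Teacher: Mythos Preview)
Your proposal is correct and follows essentially the same approach as the paper: both hinge on the observation that the hybrid edges of the 4-cycle in $\widetilde Q_{xyzw}$ (when present) arise from a single hybrid pair in one cycle of $N^-$, each deleted in exactly half of the $2^k$ trees of $\grove$, while choices at the remaining cycles do not affect $T_{xyzw}$. The paper packages this as a direct three-way case split on the value $\rho_{xy}(Q_{xyzw})\in\{0,1/2,1\}$ rather than via your restriction map $\Phi$ and uniform-fibre count on $k''\in\{0,1\}$, but the underlying counting argument is identical, and the commutation issue you flag as the main obstacle is asserted at the same level of rigor in the paper.
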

\begin{proof}
If $\rho_{xy}( Q_{xyzw})=0$, then there is no $T\in\grove$ with $T_{xyzw}$ separating $x,y$, so the equation holds.
If $\rho_{xy}( Q_{xyzw})=1/2$, then $\widetilde Q_{xyzw}$ has two hybrid edges, which are induced from hybrid edges 
of $N^-$. Each of these is deleted in exactly half of the $2^k$ trees in $\grove$, so  $2^{k-1}$ of the $T\in \grove$ 
have $T_{xyzw}$ displaying a quartet separating $x,y$, and the equation holds. 
Finally, if $\rho_{xy}( Q_{xyzw})=1$, so $\widetilde Q_{xyzw}$ is either a quartet tree separating 
$x,y$, or has a 4-cycle with $x,y$ opposite in its circular ordering, then for all $T\in \grove$, 
$T_{xyzw}$ will display a quartet separating $x,y$, so the equation holds.
\end{proof}
We now define a distance function in terms of quartet networks displayed on the network. 
\begin{defi}\label{def:quartet distance}
Let $N^-$ be an unrooted level-1 network on $X$. Then the  \emph{quartet distance} $d_{Q,N^-}$ is    
$$d_{Q,N^-}(x,y)=2\sum_{z,w\ne x,y}\rho_{xy}(Q_{xyzw})+2n-4,$$
with $x,y \in X$, distinct from $w,z \in X$.
\end{defi}
Note that if $N^-=T$ is a tree, the definition of $d_{Q,N^-}(x,y)$ agrees with equation \eqref{eq:treedis}.
We now prove the network analog of Theorem \ref{prop:treequart}, showing that the network distance
$d_{\omega_{N^-}}$ can be computed from induced quartet networks.
\begin{theorem}\label{prop:splitquartet}
	Let $N^-$ be an unrooted level-1 network on $X$, with $k$ cycles of size $\ge 4$. Then 
	$$d_{\omega_{N^-}}={2^k}d_{Q,N^-}.$$
\end{theorem}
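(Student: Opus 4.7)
The proof essentially consists in stitching together three prior results: Lemma \ref{lem:quartet tress} (expressing $d_{\omega_{N^-}}$ as a sum of tree distances over the grove), Theorem \ref{prop:treequart} (the quartet formula for $d_{w_T}$ on each displayed tree), and Lemma \ref{lem:quartetweight} (which converts a sum of $\rho_{xy}$ values over the grove into a value on the induced quartet network). The stated factor $2^k$ should then emerge naturally, both from the grove size $|\mathcal{G}(N^-)| = 2^k$ and from the normalization in Lemma \ref{lem:quartetweight}.

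The plan is as follows. First I would fix distinct $x,y \in X$ and start with Lemma \ref{lem:quartet tress}, writing
\[
d_{\omega_{N^-}}(x,y) = \sum_{T \in \mathcal{G}(N^-)} d_{w_T}(x,y).
\]
Next I would apply Theorem \ref{prop:treequart} to each summand, noting that the sum $\sum_{q \text{ on } T}\rho_{xy}(q)$ receives nonzero contributions only from quartets on $T$ involving both $x$ and $y$, and that each such quartet is exactly $T_{xyzw}$ for a unique unordered pair $\{z,w\} \subseteq X \setminus \{x,y\}$. This yields
\[
d_{\omega_{N^-}}(x,y) = 2 \sum_{T \in \mathcal{G}(N^-)} \sum_{z,w \ne x,y}\! \rho_{xy}(T_{xyzw}) + |\mathcal{G}(N^-)|(2n-4).
\]

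Then I would swap the order of summation and apply Lemma \ref{lem:quartetweight}, which gives $\sum_{T \in \mathcal{G}(N^-)} \rho_{xy}(T_{xyzw}) = 2^k\, \rho_{xy}(Q_{xyzw})$ for each pair $\{z,w\}$. Combining this with the fact that $|\mathcal{G}(N^-)| = 2^k$ (since $N^-$ is level-1 with $k$ cycles of size $\ge 4$, and the 2- and 3-cycles contribute no distinct trees) produces
\[
d_{\omega_{N^-}}(x,y) = 2^k \left[ 2 \sum_{z,w \ne x,y}\! \rho_{xy}(Q_{xyzw}) + (2n-4) \right] = 2^k\, d_{Q,N^-}(x,y),
\]
as desired.

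There is no real obstacle here; each of the three preceding results was designed precisely to make this combination work, and the only bookkeeping issue is checking that the index sets match up (quartets on $T$ versus unordered pairs $\{z,w\}$, and quartets in the $\rho_{xy}$-sum for $T$ versus induced quartet networks on $N^-$). The mildest subtlety is confirming that the constant $2n-4$ in Theorem \ref{prop:treequart} is genuinely the same constant appearing in Definition \ref{def:quartet distance}, so that after summing over the grove it contributes $2^k(2n-4)$ and factors out cleanly alongside the variable part.
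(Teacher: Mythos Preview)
Your proposal is correct and follows essentially the same approach as the paper's proof: apply Lemma~\ref{lem:quartet tress}, substitute Theorem~\ref{prop:treequart} for each $d_{w_T}$, rewrite the inner sum over quartets on $T$ as a sum over pairs $\{z,w\}$, swap the order of summation, invoke Lemma~\ref{lem:quartetweight}, and factor out $2^k$ using $|\mathcal G(N^-)|=2^k$. The bookkeeping concerns you flag (matching quartets on $T$ with induced quartets $T_{xyzw}$, and the constant $2n-4$) are exactly the minor points the paper's chain of equalities passes through without further comment.
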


\begin{proof} Using Lemma \ref{lem:quartet tress}, Theorem
  \ref{prop:treequart}, and Lemma \ref{lem:quartetweight}, for $x\ne
  y\in X$,
\begin{align*} d_{\omega_{N^-}}&(x,y)=\sum_{T\in\grove} d_{w_T}(x,y)\\
&=\sum_{T\in\grove} \left (2 \sum_{\text{$q$ on $T$}}\rho_{xy}(q) +2n-4\right ) \\
&=2\sum_{T\in\grove}  \sum_{\text{$q$ on $T$}} \rho_{xy}(q) \ +2^k(2n-4) \\
&=2\sum_{T\in\grove}  \sum_{z,w\ne x,y} \rho_{xy}(T_{xyzw}) \ +2^k(2n-4) \\
&=2  \sum_{z,w\ne x,y} \sum_{T\in\grove} \rho_{xy}(T_{xyzw}) \ +2^k(2n-4) \\
&=2  \sum_{z,w\ne x,y} 2^k \rho_{xy}(Q_{xyzw}) \ +2^k(2n-4) \\
&={2^k} \, d_{Q,N^-}(x,y).
\end{align*} 
\end{proof}
The import of this theorem is that from the induced quartet networks
on $N^-$ we can compute the distance $d_{Q,N^-}$, which is, up to
scaling, $d_{\omega_{N^-}}$, the distance from a weighted split
system. In contrast, computing $d_{\omega_{N^-}}$ directly from
definition requires knowing $\grove$, the collection of trees on $X$
displayed on $N^-$. This lies at the heart of our algorithm for
network inference under the NMSC, as we can obtain information about
induced quartet networks from biological data relatively easily, using
empirical concordance factors, while information about the trees
displayed on the species network does not seem to be directly
obtainable.

Furthermore, since by Corollary \ref{cor:splitcirc} the underlying
quartet weighted split system is circular, we have the following.
 \begin{coro}\label{cor:DQcirc}
   Let $N^-$ be an unrooted level-1 network. Then the distance
   $d_{Q,N^-}$ arises from a weighted circular split system, with
   support $\s(N^-)$.
 \end{coro}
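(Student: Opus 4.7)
The plan is to deduce this corollary by chaining together Theorem \ref{prop:splitquartet}, Corollary \ref{cor:splitcirc}, and a trivial rescaling of weights. First I would identify the support of the weighted split system $\omega_{N^-}$. By Definition \ref{def:netsplitwts} it is the pointwise sum $\sum_{T \in \grove} w_T$ of the tree quartet weight functions, each of which is non-negative with $\supp(w_T) = \s(T)$. Non-negativity means no cancellation occurs, so
\[
\supp(\omega_{N^-}) \;=\; \bigcup_{T \in \grove} \s(T) \;=\; \s(N^-).
\]
By Corollary \ref{cor:splitcirc}, this support is circular, and therefore $\omega_{N^-}$ is a circular weighted split system in the sense of Definition \ref{def:circular function}.

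Next I would appeal to Theorem \ref{prop:splitquartet}, which states $d_{\omega_{N^-}} = 2^k d_{Q,N^-}$ for $k$ the number of cycles of size $\ge 4$ in $N^-$. Rescaling the split weights by the positive constant $1/2^k$ preserves their support and scales the induced distance by the same factor, so the weighted split system $\omega' := 2^{-k}\, \omega_{N^-}$ has $\supp(\omega') = \s(N^-)$ and satisfies $d_{\omega'} = d_{Q,N^-}$. Since $\omega'$ is circular (its support is $\s(N^-)$), this exhibits $d_{Q,N^-}$ as arising from a weighted circular split system with support $\s(N^-)$, which is exactly the claim.

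There is no real obstacle here, since all the substantive content was done in Theorem \ref{prop:splitquartet} and Corollary \ref{cor:splitcirc}. The only care required is the elementary observation that $\supp(\omega_{N^-}) = \s(N^-)$, which uses only the non-negativity of each $w_T$; and the trivial remark that scaling a circular weighted split system by a positive constant yields another one. The proof can therefore be written in just a few lines.
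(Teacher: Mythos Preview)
Your proposal is correct and follows exactly the route the paper takes: the paper observes just after Definition \ref{def:netsplitwts} that $\supp(\omega_{N^-})=\s(N^-)$, and then states the corollary as an immediate consequence of Theorem \ref{prop:splitquartet} together with Corollary \ref{cor:splitcirc}, without writing out the rescaling step you make explicit. Your version simply unpacks what the paper leaves implicit.
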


 Thus given sufficient information on induced quartet networks to
 compute $d_{Q,N^-}$, even approximately as in the presence of error,
 methods for analyzing distances from weighted circular split systems,
 such as the NeighborNet algorithm, can be productively applied, as we
 show in the next section.

\section*{Splits graphs from the network quartet distance}\label{sec:interp} 

The last sections have shown a path toward obtaining, under the NMSC
model, the distance associated to the weighted circular split system
$\omega_{N^-}$. But for this to have value, we need to be able to
extract from this distance information about features of $N^-$. While
there is a well developed theory of splits graphs \cite{Dress1992,
  DressTtheory, Dress2004, HusonRuppScorn}, associated to distances
from such split systems, and splits graphs are networks, one can not
hope that such splits graphs give $N^-$ directly. In particular splits
graphs have no directed edges, and are generally not level-1.

Our goal in this section is thus to investigate the relationship
between a level-1 network and the splits graphs obtainable from the
quartet distance for that network. We develop precise rules by which
one can interpret features in a splits graph for $\omega_{N^-}$ to
obtain much information on the topological features of $N ^-$. While
there is some overlap between the results in this section and those of
\cite{Huson2005}, we give a complete presentation as is necessary for
our more detailed results.

\medskip

The tree edges (i.e., the undirected edges) in a level-1 unrooted network $N^-$ can be classified into two
types, extending Definition \ref{def:network} in this setting.  Specifically, a \emph{cycle edge} in $N^-$ is an 
undirected edge in a cycle, and a \emph{cut edge} is an undirected edge that is not a cycle edge. 
Any $k$-cycle in $N^-$ is then composed of $k-2$ cycle edges and 2 hybrid edges.   

These notions extend to trees displayed on networks.
For any $T\in\grove$, the edges of $T$ arise from those of  $N^-$ in one of the following ways:
\begin{enumerate}
\item[1.] An edge $\bar e$ of $T$ is obtained directly from an edge of
  $N^-$.  Then $\bar e$ is called a cycle or cut edge of $T$ according
  to its classification in $N^-$.
\item[2.] An edge $\bar e$ of $T$ is obtained from several edges of
  $N^-$ by suppressing internal nodes of degree 2.  Since $N^-$ is
  level-1, at least one of these conjoined edges of $N^-$ is a cut
  edge, so we refer to $\bar e$ as a cut edge of $T$.
\end{enumerate}

As we show below, cut edges in $N^-$ correspond to splits $s \in \s(N^-)$ that
occur on every $T \in \grove$, while a split $\bar s$ derived from a cycle edge on $T$ does
not occur on every $T' \in \grove$.  
Moreover, we see that edges in 2-cycles and 3-cycles on $N^-$ induce \emph{only} cut edges on 
any $T\in \grove$. For $k\ge 4$, a $k$-cycle on $N^-$ will induce $k-3$ cycle edges on any $T\in \grove$, 
since one hybrid edge is deleted, one hybrid edge is conjoined with its descendent  cut edge, and 
one cycle edge is conjoined with a cut edge.

A split $s\in \s(N^-)$ is called a \emph{cycle split} (respectively, a
\emph{cut split}) if $s=s_{\bar e}$ for a cycle edge (respectively, a
cut edge) $\bar e$ on some $T\in\grove$. Note that the cut splits are
precisely those splits obtained from $N^-$ by deletion of a cut edge,
and that these two classes of splits form a partition of $\s(N^-)$.

\smallskip

In the next lemma, we prove that the quartet weight function $\omega_{N^-}$ on an unrooted network $N^-$ 
carries no information about 2- or 3-cycles.

\begin{lemma} \label{lem:23cycle} Let $N^-_c$ be the graph obtained
  from a level-1 binary network $N^-$ by contracting each 2- and
  3-cycle to a vertex and then suppressing degree 2 nodes. Then
  $\omega_{N^-_c}=\omega_{N^-}$.
\end{lemma}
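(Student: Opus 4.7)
The plan is to construct a bijection $\phi \colon \grove \to \mathcal{G}(N^-_c)$ under which corresponding trees are isomorphic as labeled trees on $X$, so that they have identical quartet weight functions $w_T$, and then sum.

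Since contraction affects only cycles of sizes 2 and 3, the networks $N^-$ and $N^-_c$ have exactly the same collection of cycles of size $\ge 4$; let $k$ denote their common number. By the paper's earlier observations, $|\grove| = |\mathcal{G}(N^-_c)| = 2^k$, and on both networks the trees in the grove are parametrized by a choice of one hybrid edge to delete in each cycle of size $\ge 4$. The map $\phi$ simply pairs trees arising from the same choice of deletions in these $k$ large cycles; the choice in 2- and 3-cycles of $N^-$ is immaterial for the resulting topology (as already noted in the paper), and those cycles no longer exist in $N^-_c$.

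The substantive step is to verify that $T$ and $T_c := \phi(T)$ coincide as labeled unrooted trees on $X$, which reduces to a local analysis at each 2-cycle and 3-cycle of $N^-$. For a 2-cycle with tree node $u$ and hybrid node $h$: after deleting one of the two parallel hybrid edges, undirecting the other, and suppressing the now-degree-2 vertices $u$ and $h$ in $T$, one obtains a single edge joining $u$'s outside neighbor to the descendant of $h$; contracting the 2-cycle to a vertex in $N^-_c$ and then suppressing that resulting degree-2 vertex produces the same edge. For a 3-cycle with hybrid node $h$ and tree nodes $u,v$, whichever hybrid edge is deleted, suppressing degree-2 vertices produces a single degree-3 node with three external edges attached, leading respectively to the outside neighbors of $u$, of $v$, and of the descendant of $h$; contracting the 3-cycle in $N^-_c$ produces precisely this degree-3 vertex with the same three external incidences. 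Outside the contracted cycles the two procedures are identical by construction, so $T$ and $T_c$ agree as labeled trees, and hence $w_T = w_{T_c}$ because the quartet weight function depends only on the underlying topological tree on $X$.

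The conclusion is then immediate by summation: for every split $s$,
\[
\omega_{N^-}(s) = \sum_{T \in \grove} w_T(s) = \sum_{T \in \grove} w_{\phi(T)}(s) = \sum_{T_c \in \mathcal{G}(N^-_c)} w_{T_c}(s) = \omega_{N^-_c}(s).
\]
The main obstacle is the 3-cycle case of the local analysis: one must carefully track the three edges leaving the cycle and confirm that, under either procedure (deletion of a hybrid edge with subsequent suppression of degree-2 vertices, versus contraction of the entire 3-cycle followed by suppression if needed), they end up incident to a common degree-3 vertex in exactly the same way. Once this identification is in hand, the rest of the argument is bookkeeping.
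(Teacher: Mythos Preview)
Your proof is correct and takes essentially the same approach as the paper: both argue that deleting either hybrid edge of a 2- or 3-cycle yields the same topology as contracting that cycle, so $\grove$ and $\mathcal{G}(N^-_c)$ consist of the same topological trees on $X$, whence the sums defining $\omega_{N^-}$ and $\omega_{N^-_c}$ coincide. The paper states this in two sentences, while you explicitly carry out the local verification at 2- and 3-cycles and package the conclusion via a bijection; these are the same argument at different levels of detail.
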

\begin{proof}
If one or the other hybrid edge in a 2- or 3-cycle on $N^-$ is deleted, the resulting network has the same 
topology as obtained by contracting the cycle. Thus $N^-$ and  $N^-_c$ display the same topological trees. 
\end{proof}
  
In the next lemma, we formalize some observations made above.
 
\begin{lemma}\label{lem:treesplit}
Let $s\in\s(N^-)$ for a level-1 binary network $N^-$. Then the following are equivalent:
\begin{enumerate}
\item[(1)] \label{item:all}$s\in \s(T)$ for all $T\in\grove$,
\item[(2)] \label{item:one} On every $T\in \grove$ there is a cut edge $\bar e$ such that $s=s_{\bar e}$,
\item[(3)] \label{item:comp} $s$ is compatible with every $s'\in \s(N^-)$.
\end{enumerate}
\end{lemma}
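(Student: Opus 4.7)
I will establish the chain (2) $\Rightarrow$ (1) $\Rightarrow$ (3) $\Rightarrow$ (2), exploiting the partition of $\s(N^-)$ into cut splits and cycle splits noted immediately before the lemma. The implication (2) $\Rightarrow$ (1) is immediate, since $s_{\bar e}\in \s(T)$ for any edge $\bar e$ of $T$. For (1) $\Rightarrow$ (3), take any $s'\in \s(N^-)$; by definition $s'\in \s(T_0)$ for some $T_0\in \grove$, and (1) gives $s\in \s(T_0)$ as well. Since all splits of a single tree are pairwise compatible, $s$ and $s'$ are compatible.

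The substantive direction is (3) $\Rightarrow$ (2), which I prove by contrapositive. Suppose (2) fails, so there is some $T_0\in \grove$ with either $s\notin \s(T_0)$ or $s = s_{\bar e_0}$ for a cycle edge $\bar e_0$ on $T_0$. Since $s\in\s(N^-)$, some $T_1\in \grove$ realizes $s$ via an edge $\bar e_1$, and I claim $\bar e_1$ must be a cycle edge on $T_1$: otherwise $\bar e_1$ would be built (by conjoining) from a cut edge of $N^-$, and because hybrid edge deletions preserve every cut edge of $N^-$ up to further conjoining, the same split $s$ would appear on every $T\in \grove$ as a cut-edge split, contradicting the failure mode at $T_0$. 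So $\bar e_1$ arises from a $k$-cycle $C$ of $N^-$ with $k\ge 4$. Let $T_2\in \grove$ be the tree obtained from $T_1$ by switching which hybrid edge of $C$ is deleted (keeping all other cycle choices fixed). I will exhibit a split $s^*\in \s(T_2)\subseteq\s(N^-)$ that is incompatible with $s$, negating (3).

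To locate $s^*$, label $C$ with hybrid node $v_1$ and the remaining nodes $v_2,\dots,v_k$ in circular order, with hybrid edges $v_2 \to v_1$ and $v_k\to v_1$; write $B_i$ for the taxa in the pendant subtree at $v_i$ (absorbing the exterior of $C$ into $B_1$), so that $X = B_1\sqcup\cdots\sqcup B_k$. Performing the deletions and suppressing the resulting degree-2 vertices identifies the cycle splits from $C$ on the two trees explicitly: deleting $v_2\to v_1$ yields $S_i = (B_2\cup\cdots\cup B_i)\mid (B_1\cup B_{i+1}\cup\cdots\cup B_k)$ for $i\in\{3,\dots,k-1\}$, while deleting $v_k\to v_1$ yields $S'_j = (B_1\cup\cdots\cup B_j)\mid (B_{j+1}\cup\cdots\cup B_k)$ for $j\in\{2,\dots,k-2\}$. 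Writing $s = S_i$ (the case $s = S'_j$ being parallel by symmetry), I take $s^* = S'_2$ (valid since $k\ge 4$) and check that the four split-block intersections are all nonempty, respectively containing $B_1$, $B_k$, $B_2$, and $B_3\cup\cdots\cup B_i$, the last using $i\ge 3$; this is the definition of incompatibility. The main obstacle is this combinatorial bookkeeping — tracking how deletion of one hybrid edge of $C$ produces a specific family of nested splits, and verifying the boundary case $k=4$, where each tree contributes exactly one cycle split from $C$ and the unique choice $s^*=S'_2$ must pair with the unique $s=S_3$.
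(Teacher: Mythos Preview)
Your argument is correct. The implications (2)$\Rightarrow$(1) and (1)$\Rightarrow$(3) match the paper's reasoning exactly. Where you diverge is in closing the loop: the paper proves (3)$\Rightarrow$(1) in one line by observing that every $T\in\grove$ is binary, so $\s(T)$ is a \emph{maximal} compatible split system on $X$; hence any split compatible with all of $\s(T)$ must already lie in $\s(T)$. This sidesteps any explicit construction. You instead prove (3)$\Rightarrow$(2) by contrapositive, building a concrete incompatible split $s^*=S'_2$ from the opposite hybrid-edge deletion in the relevant cycle. Your route is longer and requires the cycle-labelling bookkeeping, but it is self-contained and does not appeal to the maximality fact; it also essentially anticipates the forward direction of the paper's next lemma (Lemma~\ref{lem:cyclesplit}), which characterizes exactly when two splits in $\s(N^-)$ are incompatible. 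So the paper's proof is slicker, while yours is more constructive and dovetails with what follows.
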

\begin{proof} Clearly  (2) implies (1).
To see that (1) implies (2), suppose on some tree $T\in \grove$ there is a cycle 
edge $\bar e$ with $s=s_{\bar e}$. Then $\bar e$ arises from a cycle edge in $N^-$ and that cycle has hybrid edges 
$e_1$ and $e_2$, where $e_1$ was deleted to form $T$. Then no tree $T'\in \grove$ which is formed 
by deleting $e_2$ will display $s$. This contradicts (1).

That (1) implies (3) is immediate. For the converse, observe that
since $N^-$ is binary, each $T\in\grove$ is binary. But the set of
splits on a binary tree is maximal with respect to compatibility, so
(3) implies (1).
\end{proof}

The equivalences in Lemma \ref{lem:treesplit} imply that a split from a cycle edge in some $T\in\grove$ is 
incompatible with some split from a cycle edge on some other tree in $\grove$, an observation we 
further refine in the following lemma.

\begin{lemma} \label{lem:cyclesplit}
Let $s,s'\in \s(N^{-})$ for a level-1 binary network $N^-$. Then $s,s'$ are incompatible if, and only if, 
there are cycle edges $e,e'$ (not necessarily distinct) on $N^{-}$  in the same cycle $C$, and $T,T'\in \grove$ 
such that $e,e'$ induce cycle edges $\bar e, \bar e'$ on $T,T'$ with $s=s_{\bar e}, s'=s_{\bar e'}$ and $T,T'$ 
were obtained by deleting different hybrid edges from $C$.
\end{lemma}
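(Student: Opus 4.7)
Both directions reduce to an analysis of cyclic interval splits within a single cycle of $N^-$; the forward direction does most of the work, and the backward direction is an elementary combinatorial verification.

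For the forward direction, suppose $s, s'$ are incompatible. By Lemma \ref{lem:treesplit}, since $s$ fails to be compatible with $s' \in \s(N^-)$, it cannot lie in every tree of $\grove$, so $s = s_{\bar e}$ for some cycle edge $\bar e$ on some $T \in \grove$, induced by a cycle edge $e$ of $N^-$ in a cycle $C_s$; analogously $s' = s_{\bar e'}$ with $e' \in C_{s'}$. The first key step is to show $C_s = C_{s'}$. The level-1 hypothesis forces cycles to be vertex-disjoint, so each cycle has a well-defined taxon cluster beneath it, and a cycle-edge split of $C_s$ cuts that cluster into two arcs while lumping everything outside the cluster onto one fixed side. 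A case analysis based on the relative position of $C_{s'}$ with respect to $C_s$---disjoint, or fully contained in a single arc---shows that one of the four intersections $A\cap C,\,A\cap D,\,B\cap C,\,B\cap D$ must be empty, contradicting incompatibility. Hence $C_s = C_{s'} =: C$. The second step is to show $T$ and $T'$ delete different hybrid edges of $C$. The cycle-edge splits of $C$ depend only on which hybrid of $C$ is deleted (deletions in other cycles cannot alter the partition of taxa induced by a cycle edge of $C$), and for a fixed such deletion the resulting splits all sit on a single displayed tree, so by the pairwise compatibility of tree splits they are mutually compatible. Two incompatible cycle splits from $C$ therefore necessitate opposite hybrid deletions, producing the required $T, T'$.

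For the backward direction, label $C$ by $v_H = v_0, v_1, \dots, v_{k-1}$ in cyclic order with hybrid edges $v_0 v_1, v_0 v_{k-1}$, and write $X_i$ for the taxon cluster attached at $v_i$. Tracing the suppression of degree-$2$ nodes, the cycle-edge splits arising in a tree deleting $v_0 v_1$ are exactly $\{X_1, \dots, X_i\}\,|\,\{X_0, X_{i+1}, \dots, X_{k-1}\}$ for $2 \le i \le k-2$, and those from deleting $v_0 v_{k-1}$ are $\{X_0, X_1, \dots, X_j\}\,|\,\{X_{j+1}, \dots, X_{k-1}\}$ for $1 \le j \le k-3$. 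For a pair $(s, s')$ drawn from opposite hybrid deletions and corresponding to cycle edges in the requisite position, a direct inspection of the four set-intersections---each arc provides the needed element, and $X_0$ always sits in one fixed side---confirms incompatibility.

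The main obstacle will be the forward-direction argument that $C_s = C_{s'}$. It requires a clean description of the taxon cluster associated to each cycle in a level-1 network (for instance, via the tree-of-blobs obtained by contracting each cycle to a node) together with a verification that the bipartition sides inherited from a cycle edge of one cycle are nested within, or disjoint from, the cluster of any other cycle. Once this structural fact is in hand, the remaining steps---reduction to a common cycle, forcing opposite hybrid deletions, and the backward computation---are routine applications of the arc-based split combinatorics of a single cycle.
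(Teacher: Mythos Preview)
Your proposal is correct, and the backward direction together with the ``different hybrid edges'' step essentially match the paper's argument. The genuine difference lies in how you establish that $e,e'$ lie in the same cycle. You proceed by a direct combinatorial analysis: using vertex-disjointness of cycles and the tree-of-blobs structure to argue that if $C_s\neq C_{s'}$ then one of the four split-set intersections vanishes. This works (once the phrasing ``taxon cluster beneath a cycle'' is sharpened---the relevant fact is that $C_{s'}$ lies entirely within one block $X_{i_0}$ of the partition induced by $C_s$, forcing one side of $s'$ to be contained in $X_{i_0}$ and hence in one side of $s$), but it is the step you rightly flag as the main obstacle.

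The paper sidesteps this analysis entirely by reusing the same device that handles the hybrid-edge step: if $e\in C$ and $e'\in C'$ with $C\neq C'$, then one can delete from $N^-$ the hybrid edge of $C$ determined by $T$, the hybrid edge of $C'$ determined by $T'$, and an arbitrary hybrid edge from every other cycle, producing a single tree $T''\in\grove$ that simultaneously displays both $s$ and $s'$. Compatibility is then immediate since splits on a common tree are compatible. This ``construct a common displayed tree'' trick is shorter and avoids any structural case analysis; your approach, by contrast, gives more explicit insight into why the split-set intersections behave as they do, at the cost of the extra work you anticipate.
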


\begin{proof} Consider incompatible $s,s'\in \s(N^{-})$. Then by Lemma \ref{lem:treesplit}, there exist 
$T,T'\in \grove$ with cycle edges $\bar e, \bar e'$ where $s=s_{\bar e}, s'=s_{\bar e'}$. The edges 
$\bar e, \bar e'$ are induced from cycle edges $e, e'$ in $N^-$. 

Suppose $e,e'$ are in cycles $C\ne C'$. Now $T$ determines a hybrid edge of $C$ whose 
removal from $N^-$, along with the removal of $e$, determines the split $s$, and $T'$ similarly 
determines a hybrid edge of $C'$. Removing these two hybrid edges, together with one hybrid 
edge from every other cycle on $N^-$ determines a tree $T''\in \grove$. But $T''$ has both $s,s'$ 
as displayed splits, which implies they are compatible. Thus $e,e'$ must be in the same cycle on $N^-$.

Moreover, $T,T'$ must be obtained by deleting different hybrid edges in the cycle containing $e,e'$, 
since if the same hybrid edge were deleted, the splits $s,s'$ would again be displayed on a 
common tree, and hence be compatible.

\smallskip

For the converse, suppose $e,e'$ are cycle edges in cycle $C$ of $N^-$, which induce cycle edges in trees 
$T,T'\in \grove$, where $T,T'$ are obtained by deleting different hybrid edges in $C$. Let 
$X=X_0 \sqcup X_1\sqcup X_2 \sqcup\dots \sqcup X_{m-1}$ be the partition of $X$ obtained from the 
connected components of the graph resulting from removing all edges of $C$ from $N^-$. Suppose 
further that the ordering of these sets reflects the ordering around the cycle, so that $X_0$ is 
descendants of the the hybrid node, and $X_1,X_{m-1}$ are its neighbors, etc. Then, without loss of generality, 
we may assume that split $s_e$ displayed on $T$ is $X_0\cup\dots \cup X_k| X_{k+1}\cup\dots \cup X_{m-1}$ 
with $1\le k\le m-3$, while the split $s_{e'}$ displayed on 
$T'$ is $X_0\cup X_{m-1}\cup\dots \cup X_{\ell+1}| X_{\ell}\cup\dots\cup X_1$ with  $2\le \ell\le m-2$. 
These splits are incompatible as claimed.
\end{proof}

Split networks \cite{HusonRuppScorn}, also known as splits graphs,
provide a valuable visual tool for interpreting split systems.  In
what follows, we use the terminology `splits graph' exclusively to
avoid confusion with the species networks $N^+$ and $N^-$ associated
with the NMSC.

In a splits graph, each edge is colored by exactly one of the splits,
with each split possibly coloring multiple edges. Deleting all edges
with a common color leaves two connected components, with taxon labels
on the components giving the split sets.  Unfortunately splits graphs
are generally not uniquely determined by split systems. However, since
the split systems of interest here arise from level-1 networks $N^-$,
and thus are circular by Corollary \ref{cor:splitcirc}, we can impose
an additional requirement, that of `frontier-minimality' developed
below, to determine most features of $N^-$ from interpretation of a
frontier-minimal splits graph.  The \CNA\ of \cite{Dress2004} is the
key to both showing split graphs with this additional property exist
in this case, and producing them in specific instances.

Recall that the \emph{frontier} of a planar graph is the subset of
edges adjacent to the unbounded component of its complement in the
plane (more informally, the ``outside" edges of the graph). A graph is
\emph{outer-labelled} if the labelled vertices are in the frontier.
Also, a \emph{blob} on a network is a maximal set of edges in
undirected edge-intersecting cycles.  On an unrooted level-1 network
such as $N^-$, a blob is simply an undirected version of a cycle.

\begin{lemma} \label{lem:goodsplitgraph} Let $S=S_c \sqcup S_i$ be a
  circular split system, with $S_c$ the subset of splits compatible
  with all others in $S$, and $S_i$ those incompatible with at least
  one other. Then the \CNA\ of \cite{Dress2004} produces an
  outer-labelled planar splits graph $N_S$ such that
\begin{enumerate}
\item[1.] If $s\in S_c$, then $s$ colors exactly one edge in the
  frontier of $N_S$, and this edge is not in any blob.
\item[2.] If $s\in S_i$, then $s$ colors precisely 2 edges in the
  frontier (and possibly additional edges not in the frontier) which
  lie in the same blob.
\item[3.] If $s,s' \in S_i$ are incompatible, then they color frontier edges in the same blob.
\end{enumerate}
\end{lemma}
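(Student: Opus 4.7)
My plan is to invoke structural properties of the output of the Circular Network Algorithm of \cite{Dress2004} and exploit the combinatorial duality between split incompatibility and ``box'' structures in a planar splits graph. Specifically, given a circular ordering $x_1 < x_2 < \dots < x_n$ witnessing the circularity of $S$, the CNA constructs an outer-labelled planar splits graph $N_S$ in which the labelled vertices appear on the outer face in the circular order, and each split $s \in S$ is realized as a nonempty class of parallel edges whose deletion disconnects $N_S$ into exactly the two split sets. The key duality I would invoke is that two splits in $S$ are incompatible if and only if their edge classes jointly bound at least one 4-cycle (a ``box'') of $N_S$, and are compatible if and only if no such box exists.

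For part (1), if $s \in S_c$, then $s$ is compatible with every other split, so its edge class is never involved in any box, hence contains no edge lying on any cycle. In the CNA output this forces the class for $s$ to consist of a single bridge of $N_S$. Since $A$ and $B$ each form a contiguous arc of the circular ordering, the bridge separates the outer labels into these two arcs and so must lie on the frontier. A bridge is not contained in any cycle, hence not in any blob. For part (2), if $s \in S_i$, then some $s'$ is incompatible with $s$, so $s$ is involved in a box and its edge class lies inside some blob. Tracing the frontier in the circular order, it passes first through the $A$-labels and then through the $B$-labels of $s = A|B$; it must cross the edge class of $s$ an even number of times, and because $A$ and $B$ are each contiguous arcs, it does so exactly twice. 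These two crossings give exactly two frontier edges colored by $s$, and both lie in the blob containing the $s$-class. For part (3), if $s, s' \in S_i$ are incompatible, then by the duality they share a common box; the four edges of this box form a cycle lying in a single blob, so all edges of both the $s$-class and the $s'$-class lie in this common blob, including their frontier edges.

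The main obstacle, and the subtlest point, is the mismatch between the ``frontier crosses every split's cut exactly twice'' principle used for part (2) and the claim in part (1) that a compatible split colors only \emph{one} frontier edge. The resolution I would write out carefully is that for a bridge split, the two components of $N_S \setminus \{e\}$ are connected only through the single bridge edge $e$, so the frontier walk is forced to traverse $e$ twice (once in each direction) as it moves from the $A$-arc to the $B$-arc and back. Thus the two frontier traversals of the cut collapse onto the same edge, yielding a single frontier edge colored by $s$. Verifying this collapse rigorously, together with a careful description of the CNA-produced frontier as a closed walk passing through each labelled vertex in the prescribed circular order, is where the bulk of the detailed verification lies.
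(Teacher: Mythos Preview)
Your approach is correct in outline but takes a genuinely different route from the paper. The paper's proof is \emph{algorithmic}: it follows the Circular Network Algorithm's iterative construction, starting from a star tree on the trivial splits and maintaining properties (1)--(3) as invariants as each additional split is processed. When a split compatible with all earlier ones is added, the algorithm inserts a single new bridge edge (as in tree-popping); when an incompatible split $s$ is added, a segment of the current frontier is duplicated and joined to the old copy by ladder edges colored $s$, which adds exactly two frontier edges for $s$ and simultaneously promotes any incompatible split that previously had one frontier edge to having two. The three properties then follow by induction on the number of splits processed.

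Your argument is instead \emph{structural}: you take the finished output $N_S$ and reason about it via the box duality and a frontier-traversal count. This is a legitimate and more conceptual alternative, but it leans on facts about the CNA output that themselves need justification. The most delicate is your step ``no box $\Rightarrow$ no edge in any cycle $\Rightarrow$ the class of $s$ is a single bridge.'' The first implication is not automatic for an arbitrary planar graph; it holds here because every bounded face of the CNA output is a parallelogram, so any edge lying on a cycle borders a box and hence witnesses an incompatibility. You should either cite this parallelogram-face property explicitly from \cite{Dress2004}, or argue directly: if $s$ colored two edges, they would lie on a common cycle (via paths on each side of the $s$-cut), some bounded face of the region that cycle encloses would be a box with an $s$-side, and the other pair of sides would exhibit an incompatible split. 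Similarly, your ``exactly two crossings'' frontier count for part (2) is correct but would benefit from the observation that no two $s$-edges can be consecutive along the frontier. The paper's inductive proof sidesteps all of this, since the edge counts are read off directly from what the algorithm inserts at each step; your approach buys a cleaner conceptual picture at the cost of importing more structural lemmas about splits graphs.
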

\begin{proof}
The \CNA\  works iteratively, by adding new vertices and edges as each split is considered in some order, to produce 
an outer-labelled splits graph \cite{Dress2004}.

We may assume the trivial splits are in the system. The algorithm
begins with these splits represented by a star tree, and the stated
properties hold.  Each time an additional split $s$ is considered, the
algorithm first determines if this split is incompatible with the
current graph $G_i$.  If it is, the algorithm `duplicates' parts of
the frontier, composed of some edges labelled by splits incompatible
with $s$, joining the duplicated section to the old part by `ladder'
edges colored by the new split $s$ to form $G_{i+1}$.  This makes the
frontier grow by 2 edges colored by $s$, and ensures that any splits
incompatible with $s$ previously coloring only one frontier edge in
$G_i$, now color two frontier edges in $G_{i+1}$.  Then any two edges
colored by the same split lie in the same blob, as do frontier edges
coloring incompatible splits.

If the new split $s \in S_c$, then, reminiscent of the tree-popping algorithm, 
a single new edge in $G_i$ is introduced to form $G_{i+1}$ and is colored by $s$.
This new edge is not in a blob.
\end{proof}

This coloring of edges in the frontier of the splits graph produced by the \CNA\  
can be characterized in an alternative, less algorithmic, way.

\begin{defi}  If $S$ is a circular split system on $X$, then an outer-labelled planar splits graph $N_S$
on $S$ is \emph{frontier-minimal}, if $N_S$ contains the minimal number of frontier edges among
all outer-labelled planar splits graphs on $S$.
\end{defi}

\begin{prop} \label{prop:minsplitgraph} Any frontier-minimal splits
  graph $N_S$ for a circular split system $S$ has properties (1), (2),
  and (3) of Lemma \ref{lem:goodsplitgraph}.  Moreover, the \CNA\
  produces a frontier-minimal splits graph.
\end{prop}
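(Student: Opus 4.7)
The plan is to combine a tight lower bound on the number of frontier edges with the construction from Lemma \ref{lem:goodsplitgraph}. For any outer-labelled planar splits graph $N_S$ representing $S$, the outer-face boundary is a closed walk visiting the taxa in a cyclic order compatible with the circularity of $S$. Removing all edges of a single color $s$ disconnects $N_S$ into two components, so the frontier edges colored $s$ cut the cyclic order into exactly two arcs, one per part of $s$. A bridge is traversed twice in the boundary walk while a non-bridge edge is traversed once, so a single bridge colored $s$ supplies both required cuts but a single non-bridge edge supplies only one.

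From this I extract the lower bound. Each $s \in S_c$ contributes at least one frontier edge. For $s \in S_i$, the split cannot be represented solely by bridges, since the splits realized as bridges in a common connected graph are necessarily pairwise compatible (their nesting is controlled by the block-cut tree). Hence some edge colored $s$ must lie in a cycle of $N_S$, i.e., in a blob, and therefore at least two frontier edges of color $s$ are required, both lying in that blob. Summing over all splits yields the lower bound $|S_c| + 2|S_i|$ on the frontier size of any outer-labelled planar splits graph for $S$.

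Lemma \ref{lem:goodsplitgraph} shows the CNA attains this bound exactly, so the CNA is frontier-minimal. Conversely, any frontier-minimal graph saturates the bound: this forces each $s \in S_c$ to color exactly one frontier edge, which must be a bridge and therefore lie outside every blob, establishing (1); and it forces each $s \in S_i$ to color exactly two frontier edges, which by the argument above must lie in a common blob, establishing (2).

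For property (3), if $s, s' \in S_i$ are incompatible with $s = A|B$ and $s' = A'|B'$, then each of $A', B'$ meets each of $A, B$, so in the cyclic taxon order the two $s$-cuts alternate with the two $s'$-cuts around the outer face. The main obstacle is translating this alternation into a common-blob statement; I would do so using planarity, arguing that the four alternating frontier edges, together with interior paths joining opposite $s$-edges and opposite $s'$-edges (which exist because removing either split alone disconnects $N_S$ into only two components), form a cycle of $N_S$ on which all four frontier edges lie, forcing them into a single blob.
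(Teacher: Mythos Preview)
Your overall strategy---establish the lower bound $|S_c|+2|S_i|$ on the number of frontier edges and then invoke Lemma~\ref{lem:goodsplitgraph} to see that the \CNA\ attains it---is the same as the paper's, but you pursue the lower bound via the outer boundary walk whereas the paper uses contraction of splits. Two steps in your argument do not go through as written.

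First, your reason that $s\in S_i$ cannot be realized solely by bridges is too weak: you assert only that bridge-splits are pairwise compatible with \emph{one another}, but the split $s'$ incompatible with $s$ need not itself be a bridge-split. What is actually needed (and is easy) is that if \emph{any} edge of color $s$ is a bridge, then $s$ is compatible with \emph{every} split in $S$: removing that one bridge already realizes the bipartition $A\mid B$, and for any other split the two resulting components cannot both straddle the bridge. Second, and more seriously, your argument for property~(3) is incomplete. Knowing that a path joins the two frontier $s$-edges (through the two components of $N_S$ with $s$-edges deleted) and that another path joins the two frontier $s'$-edges does not assemble these into a single cycle through all four edges; the two cycles so obtained could in principle lie in different blobs, so the assertion that they ``form a cycle of $N_S$ on which all four frontier edges lie'' is unjustified. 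The paper handles this differently: it contracts all splits other than $s,s'$ (using the standard fact that contraction yields a splits graph for the reduced system, with frontier edges inherited from the original), obtains a splits graph on two incompatible splits with exactly four frontier edges, checks directly that these must form a $4$-cycle colored $s,s',s,s'$, and concludes that edges sharing a blob after contraction already did so before. Your boundary-walk approach can be repaired---for instance, once you know the two $s$-edges share a blob and the two $s'$-edges share a blob, a bridge separating these putative blobs would, by the $s,s',s,s'$ alternation on the outer face, be traversed four times in the boundary walk rather than twice---but the argument you wrote does not supply this step.
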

\begin{proof} First, observe that each split in $S$ must label at least one frontier edge, else 
deletion of edges labelled by that split would not disconnect $N_S$.

Next, recall that the operation of contraction of a split $s$ in a splits
graph for $\s$, which identifies the two endpoints of each edge
labelled by $s$ and deletes the edge, yields a splits graph for
$\s\smallsetminus \{s\}$ (Lemma 5.10.1 of \cite{HusonRuppScorn}).
Moreover, frontier edges resulting from contraction must arise from
frontier edges in the original splits graph. If $s, s'\in S_i$ are
incompatible splits in a splits graph for $\s$, then by contracting
all other splits we obtain a split network depicting only these
two. Now if it were the case that only one frontier edge in this
splits graph were labelled by $s$, deletion of that edge must separate
the graph. But then, since $s'$ is incompatible with $s$, $s'$ must
label edges whose deletion disconnects each of the components obtained
by deleting the $s$ edge. But this implies that deleting only the $s'$
edges in $N_S$ separates the graph into at least 3 components, which
contradicts that it is a splits graph.  Thus $s$ labels at least 2
frontier edges.

It follows that any splits graph has at least $|S_c| +2|S_i|$ frontier
edges, and since this minimal count is achieved by the splits graph
output from the \CNA, a frontier-minimal splits graph has $|S_c|
+2|S_i|$ frontier edges.

Furthermore, in any splits graph for $S$ each element of $S_i$ colors
at least two frontier edges and each element of $S_c$ at least one.
It then follows from the count of frontier edges in a frontier-minimal
splits graph that the elements of $S_i$ color precisely two frontier
edges, and elements of $S_c$ precisely one.  The single frontier edge
labelled by an element of $S_c$ cannot lie in a blob, since otherwise
deleting it would not disconnect the graph.  This establishes
properties (1) and (2) of Lemma \ref{lem:goodsplitgraph}.

Finally, if $s\in S_i$, then for any $s'\in S_i$ incompatible with
$s$, contracting all splits but $s,s'$ in a frontier-minimal splits
graph must give a splits graph with four frontier edges. By
considering all possible such graphs, these edges must form a 4-cycle
with edges labelled in order $s,s',s,s'$. Since these four edges are
in the same blob on this graph, they must be in the same blob in the
original graph.
\end{proof}

In \cite{Dress2004} it is shown that the \CNA\ produces a splits graph
minimal in a different sense: It has the smallest number of edges
among all splits graphs whose bounded faces are parallelograms (i.e.,
quadrilaterals with opposite sides sharing colors).  This addresses
internal structure of the blobs, which our notion of frontier-minimal
ignores. We have not investigated whether the two notions of
minimality are equivalent, nor to what extent a frontier-minimal
splits graph for a circular split system is unique.

\smallskip

The \emph{tree of blobs} of a graph is the graph obtained by
contracting edges and vertices in each blob to a single vertex.

\begin{coro}\label{lem:treeofblobs} 
  The tree of blobs of a level-1 network $N^-$ is isomorphic to the
  tree of blobs of a frontier-minimal splits graph for
  $\s(N^-)$. \end{coro}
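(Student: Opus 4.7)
The plan is to view both the tree of blobs of $N^-$ and the tree of blobs of a frontier-minimal splits graph $N_S$ as $X$-trees displaying the same split system, and then to invoke the splits-equivalence theorem for trees.

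First I would reduce to the case $N^- = N^-_c$. By Lemma \ref{lem:23cycle}, $\s(N^-) = \s(N^-_c)$, so a frontier-minimal splits graph for $\s(N^-)$ is the same as one for $\s(N^-_c)$; meanwhile, each $2$- or $3$-cycle in $N^-$ becomes a degree-$2$ internal vertex of the tree of blobs, which we suppress by standard convention, so the tree of blobs of $N^-$ equals that of $N^-_c$. From here on I may assume every cycle in $N^-$ has length at least four.

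Next I would identify the edges of each tree of blobs with the splits in $S_c$, where $S_c\subseteq \s(N^-)$ consists of those splits compatible with every other element of $\s(N^-)$. On the network side, the edges of the tree of blobs of $N^-_c$ are exactly the cut edges of $N^-_c$, and by Lemma \ref{lem:treesplit} the cut splits coincide with $S_c$; hence $e\mapsto s_e$ is a bijection between tree-of-blobs edges and $S_c$. On the splits-graph side, parts (1)--(2) of Lemma \ref{lem:goodsplitgraph} (applied via Proposition \ref{prop:minsplitgraph}) say that each $s\in S_c$ colors a unique edge of $N_S$ lying outside every blob, while every edge colored by an $s\in S_i$ lies in a blob. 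Thus the non-blob edges of $N_S$, which are precisely the edges of its tree of blobs, are in bijection with $S_c$ via their colors.

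Both trees of blobs are $X$-trees: leaves of $N^-$ and $N_S$ have degree one and so sit outside any blob, and in $N_S$ each non-blob edge is a bridge, since deleting the unique edge of a color $s\in S_c$ disconnects the splits graph into the two sides of $s$. Consequently, removing the edge of either tree of blobs corresponding to $s\in S_c$ induces exactly the split $s$, so both trees display the same split system, namely $S_c$, on $X$. Since an $X$-tree is determined up to isomorphism by its set of displayed splits, the two trees of blobs are isomorphic.

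The main obstacle, in my view, is the bookkeeping around $2$- and $3$-cycles and carefully verifying that non-blob edges of $N_S$ correspond bijectively to $S_c$, using both parts (1) and (2) of Lemma \ref{lem:goodsplitgraph}; once that is done, the conclusion reduces to the standard splits-equivalence theorem applied to two $X$-trees displaying the same split system.
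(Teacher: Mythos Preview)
Your approach is essentially the paper's: both arguments identify the edges of each tree of blobs with the set $S_c$ of splits compatible with all others in $\s(N^-)$ (via Lemma~\ref{lem:treesplit} on the network side and Proposition~\ref{prop:minsplitgraph} on the splits-graph side), and then conclude by the splits-equivalence theorem for $X$-trees. The paper's proof is simply the terse three-sentence version of what you wrote out in detail.

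One minor inaccuracy in your reduction step: a $3$-cycle in a binary $N^-$ has three incident cut edges, so contracting it yields a degree-$3$ vertex in the tree of blobs, not a degree-$2$ one; only $2$-cycles produce degree-$2$ vertices requiring suppression. This does not affect your argument, since in either case the tree of blobs of $N^-$ agrees (after suppressing any degree-$2$ nodes) with that of $N^-_c$, and the rest of your proof goes through unchanged.
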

\begin{proof}
  The tree of blobs of $ N^-$ displays precisely those splits
  associated to cut edges of $N^-$. By Lemma \ref{lem:treesplit},
  these are precisely the splits compatible with all others in
  $\s(N^-)$, and by Proposition \ref{prop:minsplitgraph}, the tree of
  blobs of a frontier-minimal splits graph displays the same set.
\end{proof}

To go further, we investigate how the structure of a blob (a cycle) in
$N^-$ corresponds to a related structure of a blob (\emph{not}
generally a cycle) in a frontier-minimal splits graph for $\s(N^-)$.
The following, which characterizes splits associated to a cycle in
$N^-$, follows straightforwardly from definitions, so a formal proof
is omitted. The argument is readily supplied by considering Figure
\ref{fig:cycle}, which depicts a single cycle in $N^-$, and the two
networks obtained from it by deleting one or the other hybrid edge.
 \begin{figure*}\begin{center}
		\includegraphics[scale=.6]{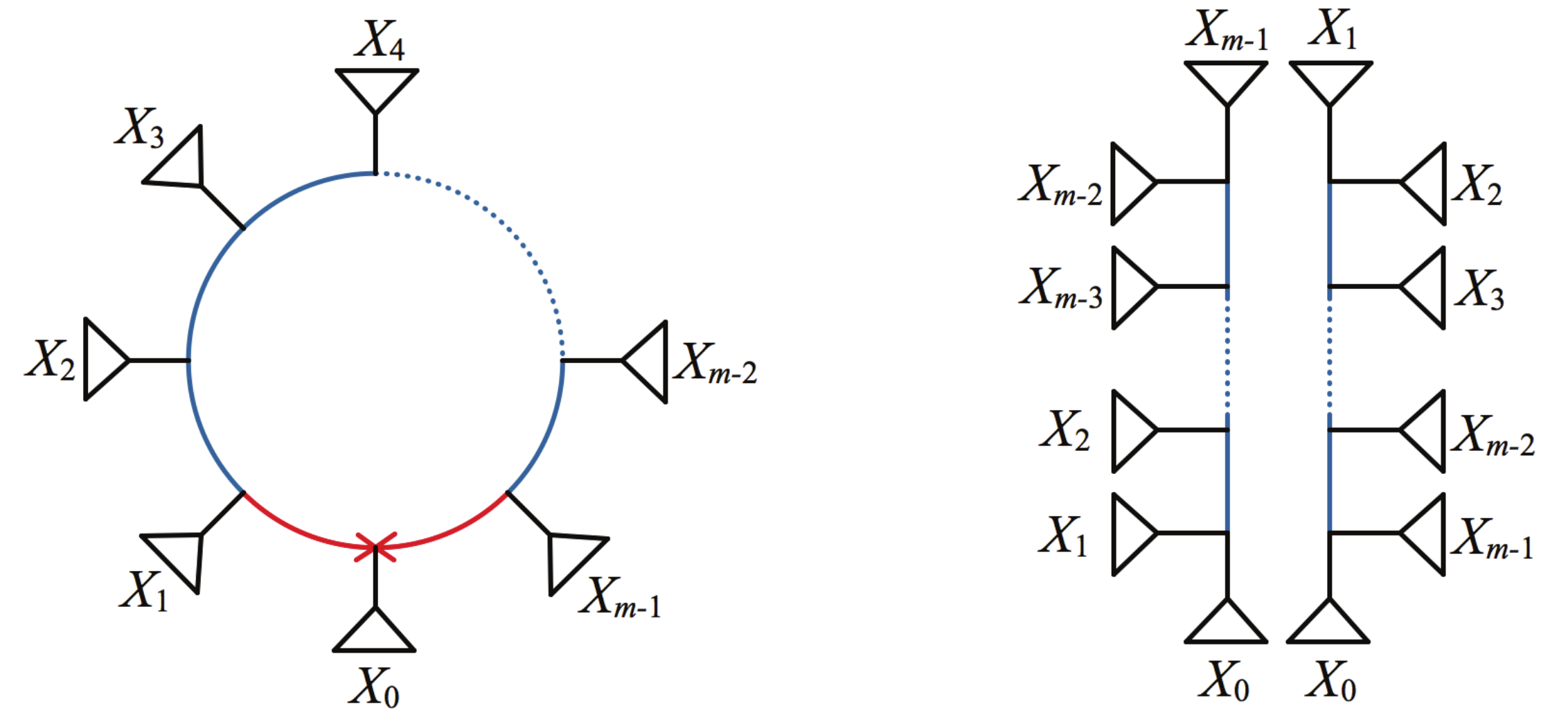}
		\caption{ (L) A cycle in a level-1 network $N^-$, and (R)
                  the two simpler networks produced from it by
                  deleting one hybrid edge.  The cycle edges in these
                  networks that arise from the original cycle are
                  shown in blue.  If $N^-$ has a single cycle, then
                  the networks on the right are the two trees in
                  $\grove$.  }\label{fig:cycle}
\end{center}\end{figure*}

\begin{lemma} \label{lem:cyclesplits} Suppose a level-1 unrooted
  network $N^-$ has $k$ cycles of size $\ge 4$.  Let $C$ be an
  $m$-cycle on $N^-$, $m\ge 4$,and $X=X_0 \sqcup X_1\sqcup X_2
  \sqcup\dots \sqcup X_{m-1}$ the partition of $X$ obtained from the
  connected components of the graph resulting from removing all edges
  of $C$ from $N^-$. Suppose further that the ordering of these sets
  reflects the ordering around the cycle, so that $X_0$ is the
  descendants of the the hybrid node, and $X_1,X_{m-1}$ are its
  neighbors, etc. (see Figure \ref{fig:cycle}).  Then the cycle splits
  in $\s(N^-)$ arising from edges in $C$ are
\begin{align}
X_0\cup X_1\cup\dots \cup X_i| X_{i+1}\cup\dots &\cup X_{m-1},\label{eq:s1}\\ & 1\le i\le m-3,\notag\\ 
X_0\cup X_{m-1} \cup \dots \cup X_{j+1}| X_{j}\cup\dots &\cup X_1,\label{eq:s2}\\ &2\le  j\le m-2, \notag
\end{align}
all with $\omega_{N^-}(s)= 2^{k-1}$. Those splits of the form
\eqref{eq:s1} (respectively \eqref{eq:s2}) are compatible with all
others of that form.  Splits of the form \eqref{eq:s1} are
incompatible with those of the form \eqref{eq:s2}.  Splits of the form
\eqref{eq:s1} or \eqref{eq:s2} are compatible with all other elements
of $\s(N^-)$.

Moreover, $(X_0,X_1,X_2,\dots, X_{m-1})$ is the only circular ordering
of the $X_i$ consistent with these splits, and with $X_m=X_0$ the
number of cycle splits arising from $C$ that separate $X_i$ from
$X_{i+1}$ is
$$\begin{cases} 
m-3&\text{ if $i=0,m-1,$}\\
1&\text{ if $i=1,m-2,$}\\
2&\text{ otherwise.}
\end{cases}$$
\end{lemma}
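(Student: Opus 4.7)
The plan is to identify all cycle splits of $C$ by partitioning $\grove$ according to which hybrid edge of $C$ is deleted, then verify each claim (weight, compatibility, ordering, and counts) by direct inspection. Denote the two hybrid edges of $C$ by $H_1 = v_1 \to h$ and $H_2 = v_{m-1} \to h$, and let $\grove_1, \grove_2 \subset \grove$ consist of those trees in which $H_2$, respectively $H_1$, has been removed. These partition $\grove$, and each has cardinality $2^{k-1}$ because the hybrid-edge choices in the other $k-1$ cycles of size $\ge 4$ are independent.

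For any $T \in \grove_1$, after $H_2$ is deleted both $h$ and $v_{m-1}$ have degree $2$ and are suppressed: the pendant edge leading to $X_0$ conjoins with $H_1$ to form a cut edge descending from $v_1$, and the pendant leading to $X_{m-1}$ conjoins with $v_{m-2}v_{m-1}$ to form a cut edge descending from $v_{m-2}$. The remaining cycle edges of $C$ in $T$ are exactly $v_iv_{i+1}$ for $1 \le i \le m-3$; removing $v_iv_{i+1}$ yields precisely the split \eqref{eq:s1} with parameter $i$. A symmetric analysis on $\grove_2$ produces the splits \eqref{eq:s2} for $2 \le j \le m-2$. This exhausts the cycle splits in $\s(N^-)$ arising from edges of $C$, and the weight $\omega_{N^-}(s) = 2^{k-1}$ follows because each such split is displayed by precisely the $2^{k-1}$ trees in the relevant half of the grove, with identical local block structure around the producing edge.

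For the compatibility claims, I would view each cycle split from $C$ as a double cut of the cycle at two non-adjacent edges: the hybrid edge deleted to form the tree and the cycle edge deleted from the tree. Two such splits are compatible exactly when the corresponding chord pairs on $C$ do not interlock. Splits within family \eqref{eq:s1} share the cut at $H_2$ so their chord pairs cannot cross; the split sets form a nested chain $X_0 \cup X_1 \subsetneq X_0 \cup X_1 \cup X_2 \subsetneq \cdots$, and pairwise compatibility follows, with the same reasoning applying to \eqref{eq:s2}. The incompatibility of \eqref{eq:s1} with \eqref{eq:s2} is an immediate instance of Lemma \ref{lem:cyclesplit}. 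Compatibility of a cycle split from $C$ with any other $s' \in \s(N^-)$ (whether from a different cycle or from a cut edge of $N^-$) follows because $s'$ arises on some $T' \in \grove$; swapping the hybrid-edge choice inside $C$ produces a tree in $\grove_1$ (or $\grove_2$) that still displays $s'$ together with the chosen split from \eqref{eq:s1} (or \eqref{eq:s2}), so both coexist on a common binary tree.

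Uniqueness of the circular ordering and the separating-split counts then follow from direct inspection. The nested chain from \eqref{eq:s1} linearly orders $X_1, X_2, \dots, X_{m-3}$ on one side of $X_0$, and the nested chain from \eqref{eq:s2} linearly orders $X_{m-1}, X_{m-2}, \dots, X_2$ on the other, so only the arrangement $(X_0, X_1, \dots, X_{m-1})$ (up to reflection and cyclic shift) is compatible with both. For the number of cycle splits separating $X_i$ from $X_{i+1}$, indices read mod $m$, one simply checks: when $i \in \{0, m-1\}$ all $m-3$ splits of one family separate the pair and none of the other; when $i \in \{1, m-2\}$ a single split from one family qualifies; and for the remaining $i$ exactly one split from each family separates the pair. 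The main obstacle is the combinatorial bookkeeping underlying the compatibility step, which the chord-crossing viewpoint on $C$ reduces to a routine case analysis.
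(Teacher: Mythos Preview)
Your proof is correct and is precisely the argument the paper has in mind: the paper omits a formal proof, noting only that the claims follow straightforwardly from Figure \ref{fig:cycle} by considering the two networks obtained from $C$ upon deleting one or the other hybrid edge. Your partition of $\grove$ into $\grove_1$ and $\grove_2$ according to this choice, followed by direct identification of the cycle splits, the compatibility analysis (invoking Lemma \ref{lem:cyclesplit} for the cross-family incompatibility and a common-tree argument for compatibility with splits outside $C$), and the separation counts, is exactly that argument written out in full.
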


The next lemma describes the part of the frontier in a
frontier-minimal splits graph arising from splits associated to a
single $m$-cycle, a description which will be used later to identify
hybrid edges.
\begin{lemma}\label{lem:4cyclemdart} With notation as in Lemma \ref{lem:cyclesplits},
  a frontier-minimal splits graph for the cycle splits $\s(C)$ arising
  from a single cycle $C$ of size $m\ge 4$ in $N^-$ forms a single
  blob whose frontier is a cycle of size $4(m-3)$. Moreover, there are
  distinct vertices labelled in circular order by $X_0,X_1,\dots,
  X_{m-1}$ along the frontier, with the number of edges between labels
  $X_i, X_{i+1}$ equal to the number of splits in $S(C)$ that separate
  $X_i,X_{i+1}$.
\end{lemma}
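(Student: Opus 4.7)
The plan is to specialize the general facts about frontier-minimal splits graphs (Lemma~\ref{lem:goodsplitgraph} and Proposition~\ref{prop:minsplitgraph}) to the circular subsystem $\s(C)$ of cycle splits arising from a single cycle $C$, and then to read off the combinatorics from the explicit description of these splits in Lemma~\ref{lem:cyclesplits}.

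First I would count and classify: by Lemma~\ref{lem:cyclesplits}, $\s(C)$ consists of exactly $m-3$ splits of form~\eqref{eq:s1} and $m-3$ splits of form~\eqref{eq:s2}, for a total of $2(m-3)$ splits. Every split of form~\eqref{eq:s1} is incompatible with every split of form~\eqref{eq:s2}, so within $\s(C)$ the set $S_c$ of splits compatible with all others is empty and $S_i=\s(C)$. Applying Proposition~\ref{prop:minsplitgraph} with property~(2) of Lemma~\ref{lem:goodsplitgraph}, each split colors exactly $2$ frontier edges, which gives $4(m-3)$ frontier edges in total. Property~(3) puts any two incompatible splits' frontier edges in a common blob; since the incompatibility relation on $\s(C)$ pairs every split of form~\eqref{eq:s1} with every split of form~\eqref{eq:s2}, all $4(m-3)$ frontier edges belong to a single blob, whose frontier is then a cycle of length $4(m-3)$. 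The uniqueness of the circular ordering $(X_0,X_1,\dots,X_{m-1})$ consistent with $\s(C)$, also given in Lemma~\ref{lem:cyclesplits}, then forces the labels to appear along the frontier in this cyclic order (up to rotation and reflection), since the \CNA\ produces an outer-labelled planar splits graph whose label order realizes a circular ordering of $\s(C)$.

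The remaining step—identifying how many frontier edges lie between consecutive labels $X_i,X_{i+1}$—uses the standard fact that on an outer-labelled planar splits graph with cyclic frontier, a split $s$ separates $X_i$ from $X_{i+1}$ iff its two frontier edges lie on opposite arcs of the $(X_i,X_{i+1})$-partition of the frontier, in which case exactly one of its frontier edges sits on the short arc between $X_i$ and $X_{i+1}$; splits not separating $X_i,X_{i+1}$ contribute no edges to the short arc. So the edge count between $X_i,X_{i+1}$ equals the number of splits in $\s(C)$ separating $X_i$ from $X_{i+1}$, and a direct bookkeeping from the forms in Lemma~\ref{lem:cyclesplits} yields $m-3$ at the arcs adjacent to $X_0$ (namely $(X_{m-1},X_0)$ from all of form~\eqref{eq:s1}, and $(X_0,X_1)$ from all of form~\eqref{eq:s2}), $1$ at each of $(X_1,X_2)$ and $(X_{m-2},X_{m-1})$, and $2$ at each interior arc, matching the stated formula. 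The main obstacle is justifying this last geometric fact cleanly; everything else is a direct application of earlier results and a bookkeeping computation that also verifies the total $4(m-3)$ independently.
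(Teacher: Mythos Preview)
Your proposal is correct and follows the paper's approach closely for the first half: count the $2(m-3)$ splits, note that $S_c=\varnothing$ and $S_i=\s(C)$, and use Proposition~\ref{prop:minsplitgraph} together with the incompatibility structure from Lemma~\ref{lem:cyclesplits} to conclude there is one blob with $4(m-3)$ frontier edges. The paper does exactly this.

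For the second half, the paper's argument is more economical than yours and avoids the obstacle you identified. Rather than first establishing the circular ordering and then invoking your ``standard fact'' (that a split contributes exactly one frontier edge to the short arc between $X_i,X_{i+1}$ iff it separates them, and none otherwise), the paper uses only the easier half: any split separating $X_i$ and $X_{i+1}$ colors at least one edge on every frontier path between them. This gives a lower bound on each arc length, and since the lower bounds from Lemma~\ref{lem:cyclesplits} sum to exactly $4(m-3)$, the total frontier length, the labels are forced to occur in the claimed circular order with the claimed distances by a pigeonhole argument. Your route works too, but the full bidirectional geometric fact you rely on is harder to justify cleanly than the one-sided inequality the paper needs; the paper's counting trick turns that inequality into an equality for free.
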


\begin{proof} Consider two splits associated to the cycle. By Lemma
  \ref{lem:cyclesplits}, they are either incompatible, or they are
  both incompatible with a third split from the same cycle.  By Lemma
  \ref{lem:goodsplitgraph}, they therefore color edges in the same
  blob, and it follows that there is only one blob in the splits
  graph. Since by Lemma \ref{lem:cyclesplits} there are $2(m-3)$
  splits associated to the cycle, by Proposition
  \ref{prop:minsplitgraph} the blob has $|S_c| +2|S_i| = 4(m-3)$ edges
  in its frontier.

Also by Lemma \ref{lem:cyclesplits} there exist splits separating any $X_i,X_j$, $i\ne j$, so the 
$X_i$ must label distinct vertices in the frontier. Since any split separating $X_i$ and $X_{i+1}$ labels at 
least one edge in any frontier path between them, the number of edges in a minimal frontier 
path between $X_i$ and $X_{i+1}$ is at least the number of splits separating them. This then 
implies that the $X_i$ must be in order along the frontier, at the distances claimed.
\end{proof}

 \begin{figure*}\begin{center}
\includegraphics[scale=.3]{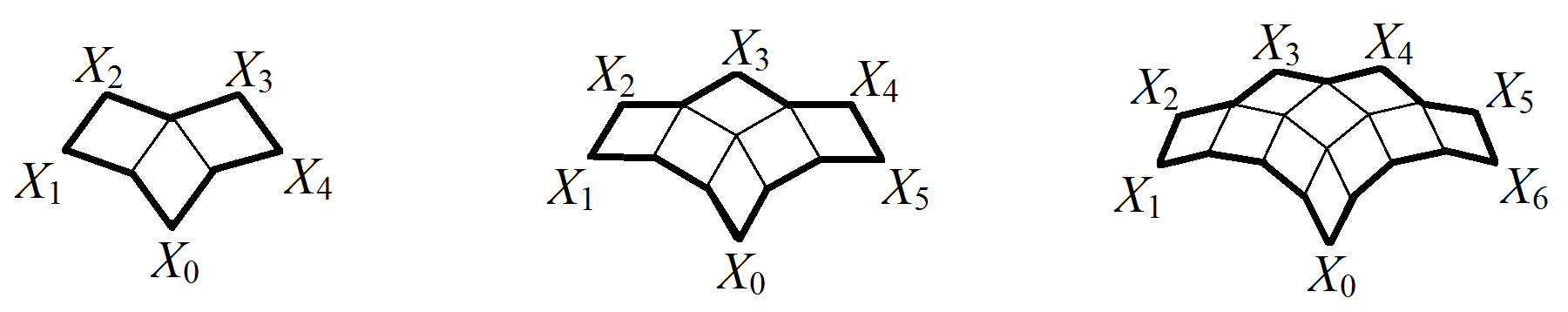}
\caption{An $m$-dart, for $m=5,6,7$ respectively. The frontier edges,
  shown in bold outline, are characterized in the text.  The outer
  vertices labelled by the $X_i$ are the corners. The point of the
  dart is the unique corner which is $m-3$ frontier edges away from
 the closest corners. }\label{fig:darts}
 \end{center}\end{figure*}
 
Now suppose $C$ is an $m$-cycle in $N^-$. If $m=4$, this lemma indicates that a frontier-minimal 
splits graph for the splits associated to $C$ is also a 4-cycle, that is, the undirected 
version of the cycle. However, if $m\ge 5$, the splits graph is more complicated, having frontier as those depicted
in the examples of Figure \ref{fig:darts}. We refer to such blobs as \emph{$m$-darts}.
The \emph{corners} of the $m$-dart are the vertices on the frontier of the dart that are 
labeled by sets of taxa $X_i$.  The \emph{point} of the $m$-dart, labelled by $X_0$, is the unique corner
that is $m-3$ frontier edges away from its two closest corners. Thus in a 
closed walk around the frontier of the dart starting at the point, the number of edges between consecutive corners is
$$m-3,1,2,2,\dots,2,2,1, m-3.$$

Putting all this together, we have the following.

\begin{theorem}\label{thm:splitsgraphinterp}
Given a level-1 unrooted network $N^-$, the frontier of any frontier-minimal splits graph for $S(N^-)$
is the graph obtained from $N^-$ by the following steps:
\begin{enumerate}
\item[1.] Contract any 2- and 3-cycles,
\item[2.] Undirect the hybrid edges in any 4-cycles,
\item[3.] Replace any $m$-cycle, $m\ge 5$, with the frontier of an $m$-dart so that the point is 
at the hybrid node and with the $m$ cut edges incident to the cycle connected to the corners of the 
dart in the same circular ordering as in the cycle. 
\end{enumerate}
\end{theorem}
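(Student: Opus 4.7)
The plan is to argue blob-by-blob, after reducing to a network with no small cycles. First, by Lemma \ref{lem:23cycle}, replacing $N^-$ by $N^-_c$ (contracting all $2$- and $3$-cycles) leaves the weighted split system, and therefore any frontier-minimal splits graph for $\s(N^-)$, unchanged. This establishes step (1) and lets us assume from here on that every cycle of $N^-$ has size $m \ge 4$. Next, Corollary \ref{lem:treeofblobs} gives an isomorphism between the tree of blobs of $N^-$ and the tree of blobs of any frontier-minimal splits graph $G$ for $\s(N^-)$. In particular, cut edges in $N^-$ correspond bijectively to cut edges (colored by cut splits) in $G$, and blobs in $N^-$ correspond bijectively to blobs in $G$. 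This reduces the theorem to understanding, for each individual cycle $C$ of $N^-$, the internal frontier structure of the corresponding blob $B_C$ in $G$.

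For a fixed $m$-cycle $C$ in $N^-$, Lemma \ref{lem:cyclesplits} identifies the $2(m-3)$ cycle splits contributed by $C$, describes which pairs are incompatible, and shows these splits are compatible with every split arising from another cycle or a cut edge. Proposition \ref{prop:minsplitgraph}(3) and Lemma \ref{lem:goodsplitgraph} then force all edges of $G$ colored by the cycle splits from $C$ (and no other edges) to lie in $B_C$. Since neither cut splits nor splits from other cycles interact with $C$'s splits via incompatibility, removing them from consideration does not alter the internal structure of $B_C$, only the cut edges that hang off its corners. Thus the frontier of $B_C$ is determined by the frontier-minimal splits graph of $\s(C)$ alone, which is exactly the graph analyzed in Lemma \ref{lem:4cyclemdart}.

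Applying Lemma \ref{lem:4cyclemdart} to $C$, we conclude that the frontier of $B_C$ is a cycle of length $4(m-3)$ with distinct corners labelled $X_0, X_1, \dots, X_{m-1}$ in the same circular order as around $C$, and with the number of frontier edges between consecutive corners given by the formula at the end of Lemma \ref{lem:cyclesplits}: $m-3, 1, 2, \dots, 2, 1, m-3$. For $m=4$ this is a $4$-cycle with one edge between each consecutive pair of corners, matching the undirected version of $C$; this gives step (2). For $m \ge 5$ this is exactly the definition of the frontier of an $m$-dart, and the unique corner of the dart that is $m-3$ edges from each of its two nearest neighboring corners is $X_0$; by Lemma \ref{lem:cyclesplits}, $X_0$ is the set of descendants of the hybrid node of $C$, so the point of the dart sits at the hybrid node, giving step (3). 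Finally, the cut edges incident to $C$ in $N^-$ correspond via the tree-of-blobs isomorphism to cut edges in $G$ attached at the corresponding labelled corners of $B_C$, preserving their circular order around the cycle.

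The main obstacle is the step just above where one must argue that the internal frontier structure of $B_C$ in the full splits graph $G$ genuinely coincides with that of a frontier-minimal splits graph for $\s(C)$ in isolation. The technical content is that cut splits and splits from other cycles are all compatible with every split from $C$, so they contribute no incompatibilities inside $B_C$ and can only attach external cut edges at the corners without duplicating or subdividing frontier edges. Once this is in hand, identifying the hybrid node with the point of the dart is purely a combinatorial bookkeeping exercise using the separation counts recorded in Lemma \ref{lem:cyclesplits}.
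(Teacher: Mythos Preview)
Your proposal is essentially correct and follows the same approach as the paper: reduce via Lemma~\ref{lem:23cycle}, use the tree-of-blobs isomorphism from Corollary~\ref{lem:treeofblobs}, localize to a single cycle via Lemma~\ref{lem:cyclesplits} and Proposition~\ref{prop:minsplitgraph}, and identify the blob's frontier via Lemma~\ref{lem:4cyclemdart}.

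One point where the paper is more careful than your write-up: you assert that Corollary~\ref{lem:treeofblobs} directly yields a bijection between blobs of $N^-$ and blobs of $G$. The paper explicitly notes this does \emph{not} follow immediately, since two blobs of $G$ sharing a vertex would collapse to a single node in the tree of blobs, so \emph{a priori} $G$ could have more blobs than $N^-$. The paper establishes the bijection (and vertex-disjointness of blobs) in the opposite order from you: first show that all cycle splits from a fixed $C$ color frontier edges in a single blob of $G$ (using that any two are either incompatible or both incompatible with a third), then count multifurcations in the tree of blobs to conclude $G$ has exactly $k$ vertex-disjoint blobs. You do supply these ingredients, but after already invoking ``the corresponding blob $B_C$,'' which is circular as written. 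Reordering your second and third paragraphs would close this.
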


\begin{proof} 
  By Lemma \ref{lem:23cycle}, we may assume $N^-$ has no 2- or
  3-cycles. Let $k$ denote the number of cycles of size $\ge 4$ on
  $N^-$, and $G$ a frontier-minimal splits graph for $\s(N^-)$.

By Corollary \ref{lem:treeofblobs}, the tree of blobs of $N^-$ and the tree of blobs of $G$ are isomorphic, so 
we identify them. Moreover, since cycles in $N^-$ are vertex-disjoint, each cycle of size $m\ge 4$  on $N^-$ 
gives rise to a node of degree $m$ in the tree of blobs, so the tree of blobs has $k$ multifurcations. This 
implies $G$ has at least $k$ blobs.   \emph{A priori} it is possible that $G$ has more than $k$ blobs, 
since if two blobs in $G$ shared a vertex they would be collapsed to a single node in the tree of blobs.

By Proposition \ref{prop:minsplitgraph} property (3), frontier edges
of G colored by splits associated with a single cycle of $N^-$ all lie
in a single blob of $G$, since Lemma \ref{lem:cyclesplits} shows two
such cycle splits are either incompatible, or both incompatible with a
third.  Moreover, since the tree of blobs of $N^-$ (and $G$) has
exactly $k$ vertices corresponding to cycles in $N^-$, it follows that
$G$ has exactly $k$ blobs, which are vertex disjoint, and each blob
has only splits associated to a single cycle of $N^-$ coloring its
frontier edges.  This establishes a one-to-one correspondence between
cycles in $N^-$ and blobs in $G$, according to the coloring of
frontier edges

Fixing a cycle $C$ on $N^-$, and contracting all edges of $G$ not
labeled by splits associated to $C$ preserves the frontier of the blob
of $G$ corresponding to $C$.  By Lemma \ref{lem:4cyclemdart}, this
frontier is either a 4-cycle (if $m=4$) or an $m$-dart (if $m\ge 5$).
Moreover, the partition of $X$ according to the connected components
of $N^-$ with $C$ deleted is the same as that from the labeled corners
of the 4-cycle or $m$-dart, with the same circular ordering, and in
the case $m\ge 5$ the descendants of the hybrid node of $C$ label the
dart's point. Thus both $C$ in $N^-$ and the blob of $G$ associated to
$C$ must map to the same multifurcation in the tree of blobs, and the
frontier of $G$ must have the form described.
\end{proof}

\begin{figure*}
	\includegraphics[width=.92\textwidth]{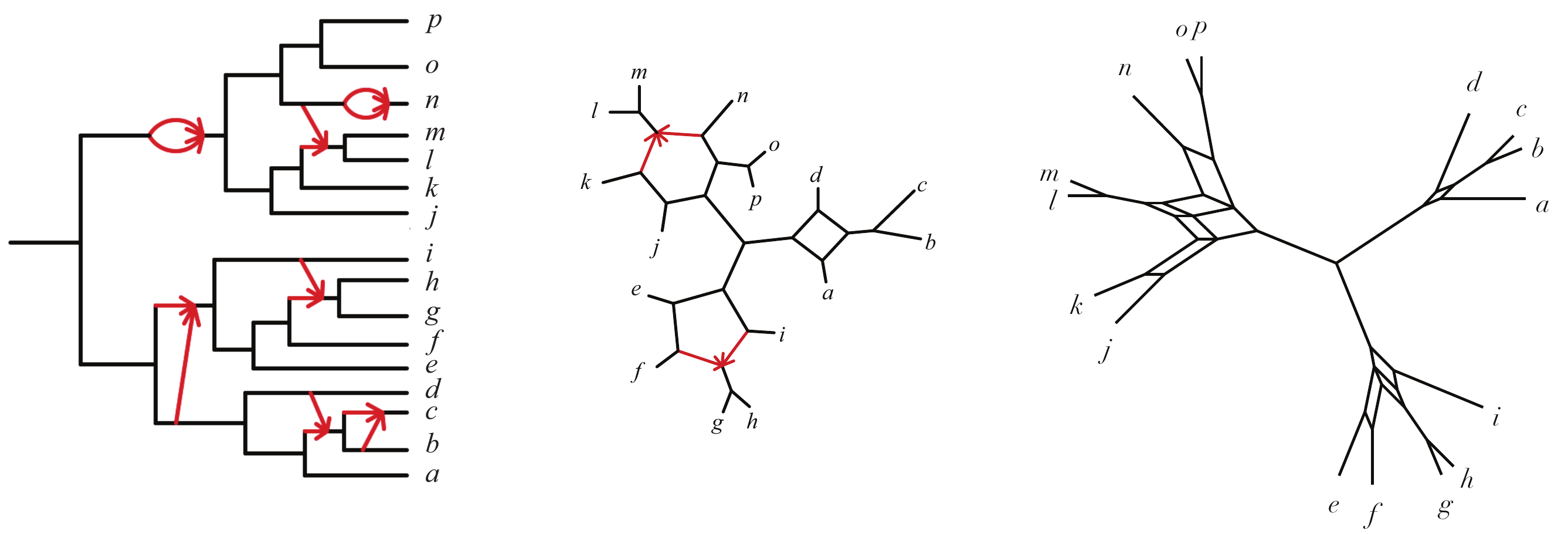}
         \caption{(L) A rooted level-1 network $N^-$, (C) the unrooted
           network obtained from it by contracting 2- and 3-cycles and
           undirecting 4-cycles, and (R) a frontier-minimal splits
           graph corresponding to it by Theorem
           \ref{thm:splitsgraphinterp}.  Note the splits graph has a
           4-cycle, a 5-dart, and a 6-dart, arising from the 4-, 5-,
           and 6-cycles of $N^-$. The metric structure of the splits
           graph, which is not described by Theorem
           \ref{thm:splitsgraphinterp}, reflects the split weights as
           defined by Definition \ref{def:netsplitwts}.
         }\label{fig:nanuq}
\end{figure*}

Figure \ref{fig:nanuq} illustrates this theorem for a particular
network $N^-$.  Note that the theorem only describes the topological
structure of the splits graph. The metric splits graph's structure
depends on details of the network beyond the analysis of the theorem,
as is seen in Definition \ref{def:netsplitwts} of the split weights.

\medskip

Importantly for applications, one can apply Theorem
\ref{thm:splitsgraphinterp} ``in reverse" to obtain information about
the network $N^-$ from the frontier-minimal splits graph for
$\s(N^-)$.  Indeed, although the correspondence between level-1
networks $N^-$ and frontier-minimal splits graphs as described in
Theorem \ref{thm:splitsgraphinterp} is not one-to-one, the only
information lost from $N^-$ is that of the existence of 2- and
3-cycles and the determination of the hybrid node in a 4-cycle.  The
specific geometry of the frontier of an $m$-dart in $G$ for $m\ge 5$
allows one to identify such $m$-cycles and hybrid nodes in $N^-$.  In
conjunction with previous sections of this paper, this recovers the
main result of \cite{Banos2018}:

\begin{coro}\label{cor:netid} Under the NMSC model on a level-1
  network $N^+$, for generic parameters, the network obtained from
  $N^-$ by suppressing 2- and 3-cycles and undirecting 4-cycles is
  identifiable.
\end{coro}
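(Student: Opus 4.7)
The plan is to thread together the results already established in the preceding sections into an identifiability pipeline: concordance factors $\Rightarrow$ induced quartet network topologies (up to contraction of small cycles) $\Rightarrow$ the quartet distance $d_{Q,N^-}$ $\Rightarrow$ the circular weighted split system $\omega_{N^-}$ $\Rightarrow$ a frontier-minimal splits graph via the \CNA\ $\Rightarrow$ the network $N^-$ up to the stated ambiguities via Theorem \ref{thm:splitsgraphinterp}.

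In more detail, I would first argue that Proposition \ref{prop:CF} determines, for generic numerical parameters on $N^+$, the contracted induced quartet network $\widetilde Q_{abcd}$ from $CF_{abcd}$ for every $4$-subset $\{a,b,c,d\}\subset X$: tree-like CFs with a unique largest entry pin down a quartet tree topology for $\widetilde Q_{abcd}$, while non-tree-like CFs identify both the presence of a $4$-cycle in $\widetilde Q_{abcd}$ and its circular ordering via the two entries that strictly exceed the third. The ``generic'' proviso excludes the parameter locus where a $3_2$-cycle yields a non-tree-like CF coinciding with one from a $4_1$-cycle network (case 2 of Proposition \ref{prop:CF}); as shown in the $3_2$-cycle subsection and Proposition \ref{prop:nobad32}, this locus is comparatively small and can be bounded away by mild branch-length conditions, so generically every $\widetilde Q_{abcd}$ is recoverable from $CF_{abcd}$.

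Second, with the collection $\{\widetilde Q_{abcd}\}$ in hand, I would compute $\rho_{xy}(Q_{xyzw})$ for all relevant tuples by Definition \ref{def:rhonet} and then assemble $d_{Q,N^-}$ via Definition \ref{def:quartet distance}. By Theorem \ref{prop:splitquartet}, $d_{Q,N^-}=2^{-k}d_{\omega_{N^-}}$, so this distance is, up to the unknown positive scalar $2^{-k}$, the distance of the weighted circular split system $\omega_{N^-}$ (Corollary \ref{cor:DQcirc}). Invoking the Bryant--Dress uniqueness of a weighted circular split system from its circular distance (noted after Definition \ref{def:circular function}), the rescaled split system, and in particular its support $\s(N^-)$, is determined; global rescaling has no effect on the support or on the topology of any splits graph one builds from it.

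Third, I would feed $\omega_{N^-}$ into the \CNA, which by Proposition \ref{prop:minsplitgraph} produces a frontier-minimal splits graph $G$. Theorem \ref{thm:splitsgraphinterp} then describes the frontier of $G$ explicitly as the graph obtained from $N^-$ by contracting $2$- and $3$-cycles, undirecting $4$-cycles, and replacing each $m$-cycle with $m\ge 5$ by an $m$-dart whose point is placed at the hybrid node. Because the point of an $m$-dart is distinguished as the unique corner at frontier distance $m-3$ from its two nearest corners, inspection of $G$ recovers, for each $m\ge 5$ cycle, both the cycle itself and the location of its hybrid node, hence the directions of its two hybrid edges. Precisely the information lost is that of $2$- and $3$-cycles (contracted) and of the hybrid node in $4$-cycles (undirected), which is exactly the ambiguity permitted in the corollary's statement.

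The main obstacle is the first step: making the genericity condition precise enough to rule out the non-tree-like $3_2$-cycle parameters of Proposition \ref{prop:CF}(2), which are the one source of ambiguity between a $4_1$-cycle CF and a $3_2$-cycle CF. This is handled by appealing to Proposition \ref{prop:nobad32} (or directly by observing that the offending locus is a positive-codimension algebraic subvariety of parameter space, as was numerically estimated to have volume $\sim 0.5\%$ in the text), so ``generic'' rules this locus out and the remainder of the proof is a bookkeeping combination of the results above.
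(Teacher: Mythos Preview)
Your pipeline is exactly the one the paper intends: the corollary is stated there without a standalone proof, simply as following ``in conjunction with previous sections,'' and those sections are precisely the chain you thread (Proposition~\ref{prop:CF}, Definitions~\ref{def:rhonet} and~\ref{def:quartet distance}, Theorem~\ref{prop:splitquartet}, Corollary~\ref{cor:DQcirc} together with Bryant--Dress uniqueness, and finally Theorem~\ref{thm:splitsgraphinterp}). Your remark that $m$-darts with $m\ge5$ pin down hybrid nodes via their distinguished points, while $4$-cycles cannot, is also the content of the paragraph immediately preceding the corollary.

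There is, however, a genuine error in your handling of the $3_2$-cycle genericity. You assert that the locus of parameters yielding a non-tree-like $3_2$-cycle CF is ``a positive-codimension algebraic subvariety'' while simultaneously citing its numerically estimated volume of $\sim 0.5\%$; these statements are contradictory. That locus is defined by a strict polynomial \emph{inequality} (the reversal of equation~\eqref{eq:CFdiff}) and is a full-dimensional open semi-algebraic region of positive measure, so it is \emph{not} excluded by any standard notion of genericity. Consequently the simple tree-like/not-tree-like dichotomy you use in your first step does not, by itself, recover $\widetilde Q_{abcd}$ for generic parameters. The fix---implicit in Proposition~\ref{prop:CF} but missing from your write-up---is that a $3_2$-cycle CF \emph{always} has two equal entries (it lies on one of the three lines in $\Delta_2$), whereas a $4_1$-cycle CF generically has all three entries distinct. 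Using ``which pair of CF entries is equal'' as the decision rule correctly assigns the tree topology $N^-_c$ to both tree-like and non-tree-like $3_2$-cycle CFs, and the residual genericity condition becomes the genuinely proper subvariety on which some $4_1$-cycle CF acquires two equal entries or some CF equals $(1/3,1/3,1/3)$. With that amendment, the remainder of your argument is sound.
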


\medskip

Beyond providing a different argument for this corollary, Theorem
\ref{thm:splitsgraphinterp} provides theoretical underpinnings to a
practical algorithm for (partial) network topology inference from a
sample of gene trees, as outlined in the next section.


\section*{The NANUQ algorithm for inference of phylogenetic networks}\label{sec:nanuq}
Here we revisit and formalize the NANUQ algorithm sketched in the introduction.

\begin{alg}[NANUQ] 
\

Input: A collection of unrooted topological gene trees on subsets of
a taxon set $X$, such that each 4-element subset of $X$ appears on at
least one tree; and two hypothesis testing levels $0<\alpha,\beta<1$.
\be
\item[1.]  \label{item:count} For each subset of 4 taxa, determine the
  empirical quartet counts across the gene trees for each of the 3
  resolved topologies. If all four taxa are not on a gene tree, that
  tree does not contribute to the counts. These 3 counts form an
  empirical \emph{quartet count concordance factor} (\qc) vector for
  the 4 taxa.
\item[2.] \label{item:testing} For each set of 4 taxa, apply two
  statistical hypothesis tests to its \qc, with levels $\alpha,\beta$,
  as described below, to determine whether to view the \qc as
  supporting (1) a star tree, (2) a resolved tree, or (3) a 4-cycle
  network on the taxa. In cases (2) and (3), use the maximum
  likelihood estimate of the topology from the \qc to determine which
  tree or network is supported.
\item[3.]  \label{item:dist} Use the quartet networks/trees from the
  previous step to construct a network quartet distance between taxa,
  as in Definition \ref{def:quartet distance}, with the modification
  described below for unresolved quartets.
\item[4.] \label{item:NN} Use the NeighborNet Algorithm
  \cite{Bryant2004} to determine a weighted circular split system
  approximating the quartet distance.
\item[5.]  \label{item:CNA} Use the \CNA\ \cite{Dress2004} to
  determine a frontier minimal splits graph for the circular system.
  \ee Output: A splits graph to interpret via Theorem
  \ref{thm:splitsgraphinterp} for features of $N^+$.
\end{alg}

To analyze the running time for this algorithm, suppose $|X|=n$ and
the input set contains $m$ trees. First note that tallying displayed
quartets in Step 1 can be done in time $\mathcal O (n^4m)$, as
discussed in \cite{Rhodes2019}.  The hypothesis tests for Step 2 are
performed in constant time for each set of 4 taxa, for a total of
$\mathcal O (n^4)$. Step 3 in which the distance is computed requires
running through the inferred quartet trees and networks for an
additional time of $\mathcal O (n^4)$. The NeighborNet algorithm in
Step 4 takes time $\mathcal O (n^3)$ \cite{Bryant2004}.  Since
NeighborNet can produce positive weights for all $\mathcal O (n^2)$
splits consistent with some circular ordering of the taxa, results
from \cite{Dress2004} show that the time for the \CNA\ in Step 5 is
$\mathcal O (n^4)$.  Thus the total time for NANUQ is $\mathcal O
(n^4m)$.

We implemented Steps 1, 2, and 3 of the NANUQ algorithm in an R
package \texttt{MSCquartets}, with a function accepting an input file
of (metric or topological) Newick gene trees, and producing an output
file of the network quartet distances computed from this data.  When
this file is opened by SplitsTree4 \cite{Huson2006}, Steps 4 and 5 are
performed. With these implementations, we have found Step 1 by far
dominates computational time, as is consistent with the running time
analysis. However, the use of R probably slows computations
considerably over what could be achieved.

The package \texttt{MSCQuartets} is currently available on request from the authors, and will be made 
publicly downloadable after further refinement.

\subsection*{Testing Empirical Quartet Counts} \label{ssec:test} 

The statistical tests in Step 2 of the NANUQ algorithm, based on \cite{Mitchell2018}, require further
explanation.

\smallskip

We use a hypothesis testing framework, in which two tests are performed. 
One test is used to decide whether the topological signal in a \qc is strong enough to justify belief in 
any resolved network or tree, as opposed to viewing the quartet as unresolved.
The second test is used to decide if the \qc supports a 4-cycle network or a tree. The particular network or tree is 
then chosen via maximum likelihood. 

These tests are performed for each set of four taxa, as if all quartet
gene trees are independent.  Of course, these are not independent,
since the quartet trees are subtrees of the same gene trees, and under
the NMSC these gene trees are assumed to have formed on the same
species network.  Since the lack of independence depends in part upon
the species network parameter, which is unknown and sought, it is not
clear how one might compensate for it.  However, treating summary
statistics as independent when they are not also underlies
phylogenetic inference schemes built on \emph{pseudo}-likelihood
(e.g., SNaQ) and seems a necessary and acceptable concession for
developing fast and tractable methods.

\medskip

Suppose for a set of 4 taxa, one has tabulated the counts of the
quartets displayed on gene trees in a sample, obtaining the \qc.
Under the NMSC model, these counts can be viewed as a multinomial
sample from the distribution determined by the theoretical CF.
Normalizing by the total count, we obtain an empirical CF which
estimates the theoretical one.  Because this empirical CF is computed
from a finite sample, it is unlikely that it lies exactly where the
theoretical CF would as shown in Figure \ref{fig:simplex}.  However,
an appropriate statistical test can be used for deciding whether the
\qc supports a quartet tree or network under the NMSC.

\smallskip

Specifically, for a fixed \qc we first perform a hypothesis test for a
star tree. More formally, under the NMSC the null hypothesis
is $$H_0\tc \text{ The \qc arises from a 4-taxon star tree.}$$ The
alternative hypothesis is that the \qc may have arisen from either a
resolved tree or a network under the NMSC, or that the NMSC model
somehow does not apply.  The NANUQ algorithm focuses exclusively on
the first interpretation of the alternative, assuming that all data
arises from the NMSC.

As the star tree has theoretical CF $(1/3,1/3,1/3)$, we perform this
test by computing the likelihood ratio statistic from the three
quartet counts in \qc, using a $\chi^2$ distribution with 2 degrees of
freedom to compute a $p$-value.  With level $\beta$ chosen for the
test, we reject the star tree hypothesis for $p$-values smaller than
$\beta$.  (Note that $\beta$ is used here as the size of the rejection
region for the test, \emph{not} the probability of a type II error.)
For larger $p$-values, we fail to reject the star tree.

As will be shown in Theorem \ref{thm:statcons} below, under the NMSC
on a binary level-1 network for any level $\beta > 0$, the probability
that this test always rejects quartet star trees, approaches 1 as the
sample size (number of gene trees) goes to infinity.  Nonetheless,
with finite and noisy data (perhaps due to gene tree inference error),
this test is important to prevent interpreting a \qc that is nearly
uniform from indicating support for a particular tree or network
topology.  Performing this test allows for the suppression of weak and
possibly erroneous signals in data sets of finite size.
 
 \smallskip
 
 The second hypothesis test is to assess support for a tree-like
 quartet vs. a 4-cycle.  Under the NMSC, we formulate a null
 hypothesis of
$$H_0\tc \text{ The \qc is tree-like},$$
with alternative that \qc is not tree-like.  
Since underlying the NANUQ algorithm is the assumption that gene tree data arose from
the NMSC, rejecting the null hypothesis is interpreted
as giving evidence that
the quartet network has a 4-cycle.  That is, rejecting the null hypothesis is interpreted
by NANUQ as support for a $4$-cycle quartet network, ignoring the (measure
0) region where non-tree-like CFs from $3_2$-cycles may coincide with
$4$-cycle CFs.

Geometrically, the model for this null hypothesis is the 3 line
segments in the simplex of Figure \ref{fig:simplex} (L), with the
alternative model the complement of the 3 line segments as shown in
Figure \ref{fig:simplex} (R).  For the test, we compute the likelihood
ratio statistic for these hypotheses.  Using a $\chi^2$ distribution
with 1 degree of freedom (the asymptotic distribution for a resolved
tree) would be a standard approach to obtain a $p$-value for the
statistic.  However, the model space for $H_0$ has a singularity at
the center of the simplex, and justification for the $\chi^2$ depends
on the model being approximated well by its tangent line. As this
approximation fails at the singularity, using a $\chi^2$ approximation
in the vicinity of the singularity may result in poor testing, which
is in this case is quite conservative.  Although the neighborhood of
the singularity on which the $\chi^2$ behaves poorly shrinks as the
sample size $m$ grows, this `bad' neighborhood is present for any
finite sample size.  However, this particular model and its special
geometry at the singularity has been studied extensively in
\cite{Mitchell2018}, where an alternative approximate distribution has
been developed.  We adopt the techniques of that work for use with the
likelihood ratio statistic, to compute $p$-values.

For the NANUQ algorithm with level $\alpha$ for this test, we
interpret a $p$-value greater than $\alpha$ as support for a tree,
with the particular tree topology chosen as the maximum likelihood
estimate from the \qc.  The MLE quartet tree topology is simply the
quartet topology with the largest count in the \qc.  A $p$-value less
than $\alpha$ is interpreted as support for a 4-cycle network, where
the particular $4$-cycle topology supported is the maximum likelihood
estimate from the \qc.  This is determined by which of the 3
triangular regions in the simplex the normalized \qc lies, as in
Figure \ref{fig:simplex} (R).

\smallskip

With two tests being performed in this way, it is possible that for a
particular set of 4 taxa we find that we fail to reject the first
hypothesis (that the \qc arises a star tree) but reject the second
(that it arises from a tree). This can be forced to occur by taking
$\beta$ quite small while $\alpha$ is large, but it may occur for less
extreme values.  In such a situation one must give priority to one
test over the other. We choose to prioritize the first test, so that
in this case we view the tests as supporting a star tree, on the
principle that evidence for hybridization should be judged by the
strictest standards.

\smallskip

The output of NANUQ depends on the choices of significance levels
$\alpha$ and $\beta$, with smaller values of $\alpha$ requiring
stronger evidence for 4-cycles, and smaller values of $\beta$
requiring stronger evidence for any resolution of the $4$-taxon
network.  We view this feature positively, as it requires that users
of NANUQ examine their data and consider the impact of choosing
different levels.  Since the input gene trees are likely to be noisy
from the error introduced by inferring them from gene sequences, it is
reasonable to set $\alpha$ quite small, which imposes a high standard
for evidence of hybridization.  However, practitioners must decide
(and report) what standards they impose by their choices of $\alpha$
and $\beta$.

We note also that there is no reason that $\alpha$ and $\beta$ should
be chosen to have equal values, and we believe appropriate choices of
both will depend upon the level of noise in the data.  In particular,
\emph{a priori} choices of conventional values such as $0.05$ are
likely poor choices.  Investigating the impact of a range of choices
for $\alpha$ and $\beta$ on the final splits graph is a necessary part
of the analysis.  This issue is addressed briefly below through
several examples of simulated and empirical data sets, but we defer
more complete comments to a future paper directed at empiricists.
 
\medskip

The testing framework described here treats any \qc judged
non-tree-like as supporting a 4-cycle and not a $3_2$-cycle.  Using
Proposition \ref{prop:nobad32}, by an assumption of sufficiently long
edges descended from all hybrid nodes, one can rule out the
possibility of non-tree-like $3_2$-cycles, although an empiricist may
prefer not to make such an assumption.  In a future version of NANUQ
we intend to offer a choice of using an additional statistical test
for $3_2$-cycle networks, but this a test will also be nonstandard,
due to the model having a singularity at the crossing of three line
segments (see Figure \ref{fig:simplex} (C)), and thus requires
additional theoretical development.

\subsection*{Quartet distance with unresolved quartets}\label{subsec:unres}

The quartet distance defined for a binary network earlier in this work
required that all quartet networks, after contraction of 2- and
3-cycles, be binary, with positive lengths for all tree
edges. However, in Step 2 of the NANUQ algorithm we include a
hypothesis test for a star tree, to reduce the possibility of
supporting a particular resolved tree or 4-cycle when the \qc is
nearly uniform and gives at best weak evidence as to what the resolved
topology should be.  We thus must explain how we modify the quartet
distance computed in Step 3 to handle unresolved quartets.

To this end, we make a simple extension of Definition \ref{def:rhonet}
for $\rho_{xy}(Q_{xyzw})$. Guided by the results in \cite{Rhodes2019}
on quartet distances for non-binary trees, we set
  $$\rho_{xy}(Q_{xyzw})=1 \text{ if  $\widetilde Q_{xyzw}$ is a star tree.}$$
In particular, this means a star tree is viewed as separating any two distinct taxa on it.

Under the assumption of a binary network, this modification has no
impact on the asymptotic behavior of the algorithm under the NMSC
model, since by Theorem \ref{thm:statcons} below the probability of
rejecting all quartet star trees approaches 1 as the size of the data
set grows.
  
\subsection*{Statistical consistency}

An estimator of a model parameter is said to be statistically
consistent if the probability of inferring the parameter to
arbitrarily small precision from a data set of size $m$ produced in
accord with the model approaches 1 as $m$ approaches infinity.  Since
the NANUQ algorithm depends upon choices of two significance levels,
$\alpha$ and $\beta$, these choices must be taken into account in
formulating an appropriate notion of consistency for it.  As we will
show, because of the assumption that the unknown network is binary,
the value of $0<\beta<1$ will be inconsequential for this notion,
since as $m$ grows the probability of rejecting a quartet star tree
approaches 1 for every choice of four taxa.

In contrast, when a true quartet network is tree-like, then no matter
how large the data set, we expect to reject the null hypothesis that
the corresponding \qc is tree-like approximately 100$\alpha$\% of the
time.  That is, with probability about $\alpha$, the hypothesis test
will incorrectly support a 4-cycle network when the true quartet
network is tree-like.  This behavior is fundamental to the hypothesis
testing framework, and cannot be avoided.

As a consequence, any notion of statistical consistency for NANUQ must
consider sequences of significance levels $\alpha_m \to 0$.  %
We will show the existence of a sequence of levels $\alpha_m$,
dependent on the sample size $m$, so that as $m$ increases the
probability of correctly failing to reject the null hypothesis
(avoiding type I errors at level $\alpha_m$) approaches 1 while at the
same time the probability of correctly rejecting the null hypothesis
(avoiding type II errors) also goes to 1.  The following theorem then
captures the sense in which NANUQ is statistical consistent.

\begin{theorem}\label{thm:statcons} Under the NMSC model on a binary level-1 metric phylogenetic network $N^+$, for 
numerical parameters in which all induced quartet networks with $3_2$-cycles
are tree-like, there exists a sequence $\alpha_1,\alpha_2,\dots$, with $0<\alpha_m<1$  and $\alpha_m\to 0$ 
such that for any $0<\beta<1$ the  NANUQ algorithm with significance levels $\alpha_m$ and $\beta$ on a data set of 
$m$ gene trees will, with probability approaching 1 as $m\to \infty$, infer the binary unrooted phylogenetic network 
associated to $N^+$ by Theorem \ref{thm:splitsgraphinterp}.
\end{theorem}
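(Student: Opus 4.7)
The plan is to argue that, as $m \to \infty$, with probability approaching $1$ the algorithm correctly classifies \emph{every} 4-taxon induced quartet network (as a resolved tree with specific topology or as a 4-cycle with specific circular ordering), so that the network quartet distance $d_{Q,N^-}$ of Definition \ref{def:quartet distance} is computed \emph{exactly}. Since $N^-$ is level-1, Corollary \ref{cor:DQcirc} implies this distance arises from a weighted circular split system with support $\s(N^-)$, and exact circular distances are recovered uniquely as weighted split systems \cite{Dress1992}. The NeighborNet algorithm will produce a circular ordering consistent with $\s(N^-)$, the \CNA\ will then produce a frontier-minimal splits graph for $\s(N^-)$, and Theorem \ref{thm:splitsgraphinterp} identifies this graph with the desired unrooted network. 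Thus the entire argument reduces to quartet-by-quartet asymptotic correctness.

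Fix any four taxa. Under the NMSC, the normalized \qc converges in probability to the true concordance factor $CF$ of the induced quartet network. Because $N^+$ is binary and (by hypothesis) no $3_2$-cycle produces a non-tree-like CF, Proposition \ref{prop:CF} places the true $CF$ either strictly on one of the three open tree-like rays of $\Delta_2$ (for tree-like quartets, including those containing only 2-, 3-, or $3_2$-cycles) or strictly in one of the three 2-dimensional 4-cycle regions away from the tree-like locus. In either case $CF$ is bounded away from the star point $(1/3,1/3,1/3)$. For any fixed level $0<\beta<1$, the LRT for the star tree is consistent against this fixed alternative: its $p$-value converges in probability to $0$, so the probability of rejecting the star null approaches $1$. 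Taking a union bound over the $\binom{n}{4}$ quartets handles all of them simultaneously.

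For the tree-likeness test we must choose $\alpha_m$ carefully. The LRT statistic $\Lambda_m$ behaves as follows. When the true quartet network is tree-like, $\Lambda_m$ is asymptotically governed by the distribution analyzed in \cite{Mitchell2018}, and for any fixed critical value $c$, $\Pr(\Lambda_m > c)$ remains bounded; more importantly, $\Pr(\text{reject at level }\alpha_m) \le \alpha_m$ (or converges to a value no larger), so taking $\alpha_m \to 0$ sends the type I probability to $0$. When the true quartet network contains a $4_1$-cycle, the true $CF$ has positive Kullback--Leibler divergence from every tree-like CF, hence $\Lambda_m / m$ converges in probability to a positive constant and $\Lambda_m \to \infty$ almost surely. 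So any sequence $\alpha_m \to 0$ for which the corresponding critical values $c_m$ satisfy $c_m = o(m)$ yields rejection probability tending to $1$ on every 4-cycle quartet. A concrete choice such as $\alpha_m = 1/\log m$ (or any polynomial decay) works, since the $p$-value computation from \cite{Mitchell2018} produces $c_m$ growing at most polylogarithmically in $1/\alpha_m$. Conditional on correct tree-versus-4-cycle classification, consistency of the multinomial MLE on the open region containing the true $CF$ shows the selected topology (the quartet with largest count for trees, the circular ordering for 4-cycles) matches the true one with probability $\to 1$. A final union bound over the finitely many 4-taxon subsets gives simultaneous correctness of all quartet classifications.

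The main obstacle is the tension in the choice of $\alpha_m$: it must shrink to kill type I errors on the many tree-like quartets, yet not so fast that the rejection region fails to detect the bounded-away 4-cycle alternatives. The nonstandard null distribution at the star singularity (handled via \cite{Mitchell2018}) complicates bookkeeping slightly, but since $\Lambda_m = \Theta(m)$ under any fixed 4-cycle alternative while the critical value $c_m$ can be forced to grow arbitrarily slowly by taking $\alpha_m \to 0$ slowly, the two requirements are simultaneously satisfiable. Once all quartet networks are correctly identified with probability $\to 1$, the remainder of the argument is deterministic via Theorem \ref{thm:splitsgraphinterp} and the uniqueness of weighted circular split systems from their induced distances.
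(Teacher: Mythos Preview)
Your proposal is correct and follows essentially the same route as the paper: reduce to showing every quartet is correctly classified with probability tending to $1$, handle the star-tree test by consistency against a fixed non-centroid alternative, and for the tree-likeness test choose $\alpha_m\to 0$ slowly enough that type I error vanishes on tree-like quartets while power tends to $1$ on $4$-cycle quartets (where $\Lambda_m/m$ converges to a positive constant). The only notable difference is cosmetic: the paper constructs $\alpha_m$ explicitly as $\min_q \PP(W_m > m\,d_q/2)$ over the non-tree-like quartets rather than proposing a concrete rate like $1/\log m$, which avoids having to control the tail of the approximating distribution $W_m$ uniformly in $m$; your parenthetical hedge that $\Pr(\text{reject}\mid H_0)$ need only tend to $0$ (rather than be bounded by $\alpha_m$ exactly) is the right fix, since the test uses an approximation to the null distribution.
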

\begin{proof} 
  It is enough to show that the $\alpha_m$ can be chosen so that with
  probability approaching 1 the quartet distance computed in the NANUQ
  algorithm exactly agrees with the theoretical quartet distance for
  the true network $N^+$.  As suggested above, this will follow from
  showing that as the sample size $m \to \infty$ with probability
  approaching 1, the hypothesis tests performed will 1) reject a star
  tree at level $\beta$, and 2) fail to reject a tree-like quartet
  network when the true one is tree-like, and reject a tree-like
  quartet network when the true one is non-tree-like at level
  $\alpha_m$.

\smallskip

Consider first the hypothesis test for a star tree for a particular
choice of 4 taxa.  The result we need is essentially a standard one,
but we give a full argument as an orientation for the argument for the
second test.  Since the network is binary, the true multinomial
parameter values are $CF = (p_1, \, p_2, \, p_3)$ with $p_i \ne 1/3$
or $0$, and the null hypothesis is $H_0\tc\ CF = (1/3,\, 1/3,\,1/3)$.
The test statistic is $\lambda = -2 (\ell_0 - \ell)$ where $\ell_0$ is
the supremum of the log-likelihood over parameter values in the null
space (here only (1/3,\, 1/3,\, 1/3)), and $\ell$ is the supremum of
the log-likelihood over the full simplex. The statistic $\lambda$ is
asymptotically $\chi^2$-distributed with $2$ degrees of freedom.

A  \qc $(m_1,m_2,m_3)$  for a sample of size $m$ is a multinomial sample from
a distribution with parameters $(p_1, p_2, p_3)$. Then 
\begin{align*}\lambda &=2 \Big(m_1\log \left(\frac{m_1}m\right) +m_2\log \left(\frac{m_2}m\right)\\
&\qquad\qquad\qquad +m_3\log \left(\frac{m_3}m\right)-m\log\left(\frac1{3}\right) \Big)\\
&=m\cdot2 \Big(\frac{m_1}m\log \left(\frac{m_1}m\right) +\frac{m_2}m\log \left(\frac{m_2}m\right)\\ 
&\qquad\qquad\qquad+\frac{m_3}m\log \left(\frac{m_3}m\right)
 -\log\left(\frac1{3}\right) \Big)\\
&=m X_m,
\end{align*}
where $X_m$ is a random variable. By the law of large numbers and the
continuous mapping theorem $X_m$ converges in probability to
$$c=2\,  \big(p_1\log p_1+ p_2\log p_2+ p_3\log p_3 -\log(1/3) \big)>0.$$
Thus for any $\epsilon>0$ there exits an $M$ such that $m>M$ implies $\PP(X_m>c/2)>1-\epsilon$, and consequently,
that $\PP(\lambda>mc/2)>1-\epsilon.$  This means that for any significance level $0<\beta<1$,
the null hypothesis will be rejected for $m$ sufficiently large with probability at least $1 - \epsilon$.
Since $\epsilon$ was arbitrary, as $m\to \infty$ the probability of rejecting the null hypothesis goes to 1. 
Since there are only finitely many 4-taxon subsets, the probability of rejecting that 
any of these are star-like also goes to 1.

\smallskip

Turning now to the hypothesis test for a tree-like quartet network on 4 specific taxa, 
suppose first the true CF is tree-like.  The likelihood ratio statistic is judged using the approximating
distribution (dependent on the sample size $m$) of the random variable $W_m=W$ described in Theorem 
3.1 of \cite{Mitchell2018}. 
Since the true network is binary, from results in that paper $W_m$ has a limiting distribution as 
$m\to\infty$, which is $\chi^2_1$.   To ensure that the probability of failing to reject the null hypothesis 
approaches 1 as $m\to \infty$, it is enough to choose any sequence of significance levels  with $\alpha_m\to 0$.

In contrast, if the true CF is non-tree-like, we must pick significance levels more carefully.
Without loss of generality, suppose the true CF is $(p_1,p_2,p_3)$ with $p_1\ge p_2>p_3$. 
A \qc $(m_1,m_2,m_3)$  for a sample of size $m$, with $m_{\max}=\max(m_i)$, yields a likelihood ratio statistic 
\begin{align*}
\lambda &=2 \, \Big(m_1\log \left(\frac{m_1}m\right) +m_2\log \left(\frac{m_2}m\right)\\
&\qquad\qquad +m_3\log \left(\frac{m_3}m\right) -  m_{\max}\log\left(\frac{m_{\max}}m\right)\\
&\qquad \qquad -(m-m_{\max})\log\left(\frac{m-m_{\max}}{2m}\right)\Big )\\
&=m\cdot2 \, \Big(\frac{m_1}m\log \left(\frac{m_1}m\right) +\frac{m_2}m\log \left(\frac{m_2}m\right)\\
&\qquad\qquad +\frac{m_3}m\log \left(\frac{m_3}m\right) -  \frac{m_{\max}}m\log\left(\frac{m_{\max}}m\right)\\
&\qquad \qquad -\left(\frac{m-m_{\max}}m\right)\log\left(\frac{m-m_{\max}}{2m}\right) \Big)\\
&=m \, Y_m.
\end{align*}
where $Y_m$ is a random variable. But $Y_m$ converges in probability to
$$d=2( p_2\log p_2 +p_3\log p_3 -(p_2+p_3)\log((p_2+p_3)/2) ) >0.$$
Thus for any $\epsilon>0$ there exits an $M$ such that $m>M$ implies
$\PP(Y_m>d/2)>1-\epsilon$, and thus that
$\PP(\lambda>md/2)>1-\epsilon.$ Let $\alpha_m'=\PP(W_m>md/2)$.  Then
we have that for any $\epsilon>0$ there exists an $M$ such that for
$m>M$ the probability of rejecting the null hypothesis at level
$\alpha_m'$ is $>1-\epsilon$. Thus as $m\to \infty$ the probability of
rejecting the null hypothesis goes to 1.  As the $W_m$ converge in
distribution to a $\chi^2_1$, one also sees that $\alpha_m'\to 0$.

Since there are a finite number of non-tree-like subsets of 4 taxa, we
choose $\alpha_m$ to be the minimum of the $\alpha_m'$ for these
subsets, to ensure the probability of rejecting the null hypothesis
for all of them goes to 1 as $m \to \infty$. As $\alpha_m\to 0$, this
sequence has all the desired properties.
\end{proof}

Note that the assumption in the theorem that all  $3_2$-cyles are tree-like can be ensured through, 
for example, Proposition \ref{prop:nobad32}, by requiring
that no edges descending from hybrid nodes have length less than $\log(5/4)$. 

\smallskip

Although we do not give a formal proof here, NANUQ remains
statistically consistent even in the absence of incomplete lineage
sorting. Informally, one can ``turn off'' ILS in the multispecies
coalescent model by shrinking all population sizes on the species
network. Equivalently, if the species network's branch lengths,
measured in coalescent units, go to $\infty$, then the distribution of
rooted topological gene trees approaches that of a hybridization model
with no ILS. One can thus establish consistency either by taking
appropriate limits in the argument above, or by analyzing quartet
concordance factors for the pure hybridization model directly.

\subsection*{Variants of NANUQ}
The NANUQ algorithm can be adapted to use any means of determining
from data what 4-taxon species network is supported.  Thus future
developments might allow for the replacement of Steps 1 and 2 by
alternative approaches. For instance, one might adopt for analyses of
4-taxon networks an invariants-based approach such as in \cite{Chifman2019},
so that the data becomes aligned genomic sequences. Alternatively,
generalizations of ideas from \cite{Allman2018} which are now being
investigated may allow for determination of rooted triple networks
from genomic sequences, and a rooted triple distance can replace the
quartet distance used here. The essence of the NANUQ approach is to
use a quartet (or rooted triple) distance appropriate to networks
along with the NeighborNet and \CNA, though how one obtains the
information necessary to compute the distance may vary.

\subsection*{Sources of Error}

While NANUQ is a statistically consistent (in the precise sense of
Theorem \ref{thm:statcons}) method of inferring certain network
features from a collection of gene trees produced by the NMSC model,
in practice it must be applied to a finite set of inferred gene trees.
Possible sources of errors in conclusions drawn from NANUQ include:

\begin{enumerate}
\item[1.] Error in gene trees, due to their inference from sequence data,
\item[2.] Small sample size (\emph{e.g.}, few gene trees, many missing taxa on gene trees),
\item[3.] Miscalls of evidence for/against hybridization in individual quartets, in Step 2,
\item[4.] The NeighborNet Algorithm's projection of the split system onto a circular one, in Step 4,
\item[5.] The presence of non-tree-like $3_2$-cycles on some induced quartet networks,
\item[6.] NMSC model misspecification due to any of:
\begin{enumerate}
\item a non-level-1 network,
\item  structure within populations,
\item continuous gene flow between populations.
\end{enumerate}
\end{enumerate}

Thus one should not expect empirical data to necessarily lead to a
splits graph exactly conforming to form described by Theorem
\ref{thm:splitsgraphinterp}.

\smallskip

Note that the algorithm of \cite{Roch2018} offers an alternative to
NeighborNet that might reduce the error arising in passing to a
circular split system from the quartet distance. However, this has not
been implemented in general purpose software yet, so we were unable to
test its performance.

We have chosen not to suggest any automatic interpretation of the
output of NANUQ, such as a mechanism for producing the closest splits
graph (by some measure) that conforms exactly to the form described by
Theorem \ref{thm:splitsgraphinterp}. Thus the user must visually
consider the output, which will reflect some of the error.  In
particular, SplitsTree offers a capability of removing splits with
small weight from a splits graph, and this can be useful for removing
some of the noise remaining after projecting onto a circular split
system.


\section*{Examples}\label{sec:examples}

In this section we present three examples of data analysis with
NANUQ. The first uses a simulated data set of gene trees (without any
gene tree inference error), the second the well-known and well-studied
yeast data set of \cite{Rokas2003}, and the third the butterfly data
set of \cite{Martin2013}.  For the empirical data sets, we use gene
trees previously inferred from genetic sequences.  Reported running
times are from a Macbook Pro computer with a 3.1 GHz processor.

\begin{example}\label{ex:sim}

\begin{table*}
  \caption{The metric species network $N^+$, in extended Newick format, used for simulating gene trees under the NMSC model. The topology of $N^+$ is shown in Figure \ref{fig:nanuq}.}\label{tab:newick}
%
%
%
%
%
%
(((((a\tc1.5,(((b\tc.8,h1\#.5\tc.1)x1\tc.2,(c\tc.7)h1\#.5\tc.3)x2\tc.3)h2\#.5\tc.2)x3\tc1.5,(h2\#.5\tc.2,d\tc1.5)x4\tc1.5)x5\tc2,h3\#.5\tc1.5)x6\tc0.5,

(((e\tc2,(f\tc1,((g\tc.25,h\tc.25)x7\tc.25)h4\#.5\tc.5)x8\tc1)x9\tc1,(h4\#.5\tc.5,i\tc1)x10\tc2)x11\tc0.5)h3\#.5\tc2)x12\tc1,((((j\tc4.5,(k\tc3.5,((l\tc2.75,m\tc2.75)x13\tc.25)

h5\#.3\tc.5)x14\tc1)x15\tc1,((((n\tc1)h6\#.5\tc2,h6\#.5\tc2)x16\tc.5,h5\#.3\tc.5)x17\tc1,(o\tc3.5,p\tc3.5)x18\tc1)x19\tc1)x20\tc.25)h7\#.5\tc.5,h7\#.5\tc.5)x21\tc.25)r;
\end{table*}

We generated a data set of 1000 gene trees using Hybrid-Lambda \cite{Hybrid-lambda} on the 
species network $N^+$ shown in Figure \ref{fig:nanuq}, with branch lengths in coalescent units and 
hybridization parameters as shown in Table \ref{tab:newick}.

In running NANUQ on this data set, our implementation of Steps 1-3 in R
required about 63\emph{s} of computation time. We considered a range
of values of $\alpha$ and $\beta$ for the hypothesis tests. To
visualize outcomes of the hypothesis tests, we produced simplex plots
such as those shown in Figure \ref{fig:simsimplex}, which plot
empirical CFs (i.e., \qcs normalized to sum to 1) for each set of 4
taxa, color coded to indicate test outcomes.  The results of the
hypothesis tests gave a rather clean separation of empirical CFs into
those close to the 3 line segments which were classified as tree-like,
and those farther away which were viewed as supporting a 4-cycle.  We found
that for any level $\alpha$ in the range $10^{-17}\leq\alpha\leq .01$,
our hypothesis tests drew the same conclusions as to which \qcs
supported a 4-cycle (red triangles). The close clustering of the \qcs
not rejected as tree-like (blue circles) around the tree model also
suggests little error in them, so that a rather large value of $\beta$
might be sufficient to test for lack of resolution. When $\beta$ is
set to .05, all \qcs result in rejection of the star tree hypothesis.
As shown in the figure, when $\beta$ is reduced to $10^{-19}$ a single
failure to reject the star tree hypothesis occurs (tan square).  Using $\alpha=.01$
and $ \beta=.05$ to compute the quartet distance, from SplitsTree4 we
obtain the splits graph on the right of Figure \ref{fig:nanuq}.  Under
the rules of Theorem \ref{thm:splitsgraphinterp}, this correctly gives
all features of $N^-$ inferable by NANUQ.

\begin{figure*}\begin{center}
\includegraphics[scale=.7]{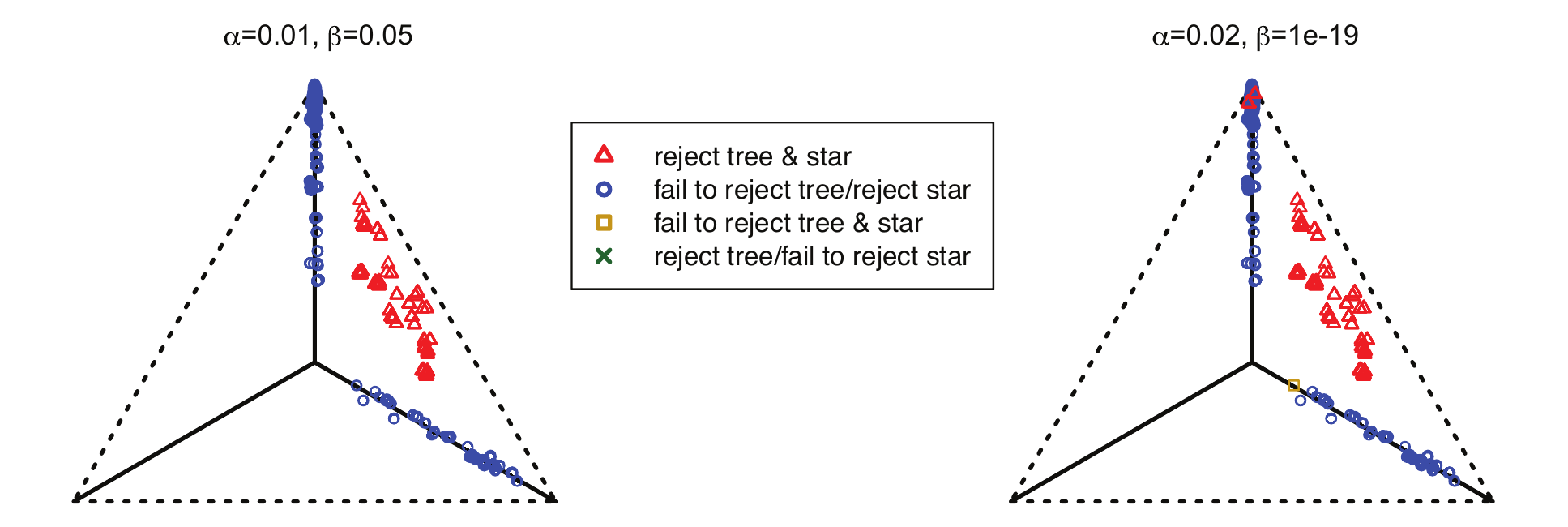}	
\caption{Representative simplex plots for empirical CFs, with hypothesis testing results, computed
from a simulated data set of 1000 gene trees from the species network given in Table \ref{tab:newick}. }
 \label{fig:simsimplex} 
\end{center}
\end{figure*}

Reducing the sample size to 300 gene trees, while using the same values of $\alpha$ and $\beta$, we 
obtained the same correct inference result.
\end{example}

\begin{example}\label{ex:yeast}
For the second example we use a subset of the yeast data set of \cite{Rokas2003}, which has been analyzed by multiple investigators \cite{Bloomquist2010,Holland2004,Wen2018,Wu2008,Yu2011,Zhang2018}.
It consists of 106 gene trees, each with a single allele sampled from seven 
\emph{Saccharomyces} species: \emph{S. cerevisiae} (S\,cer), \emph{S. paradoxus} (S\,par), 
\emph{S. mikatae} (S\,mik), \emph{S. kudriavzevii} (S\,kud), 
\emph{S. bayanus} (S\,bay), \emph{S. castellii} (S\,cas), \emph{S. kluyveri} (S\,klu), and the 
outgroup fungus \emph{Candida albicans} (C\,alb). Running time for NANUQ's Steps 1-3 was 
under 0.5\emph{s}.

\begin{figure*}
\begin{center}
	\includegraphics[scale=.7]{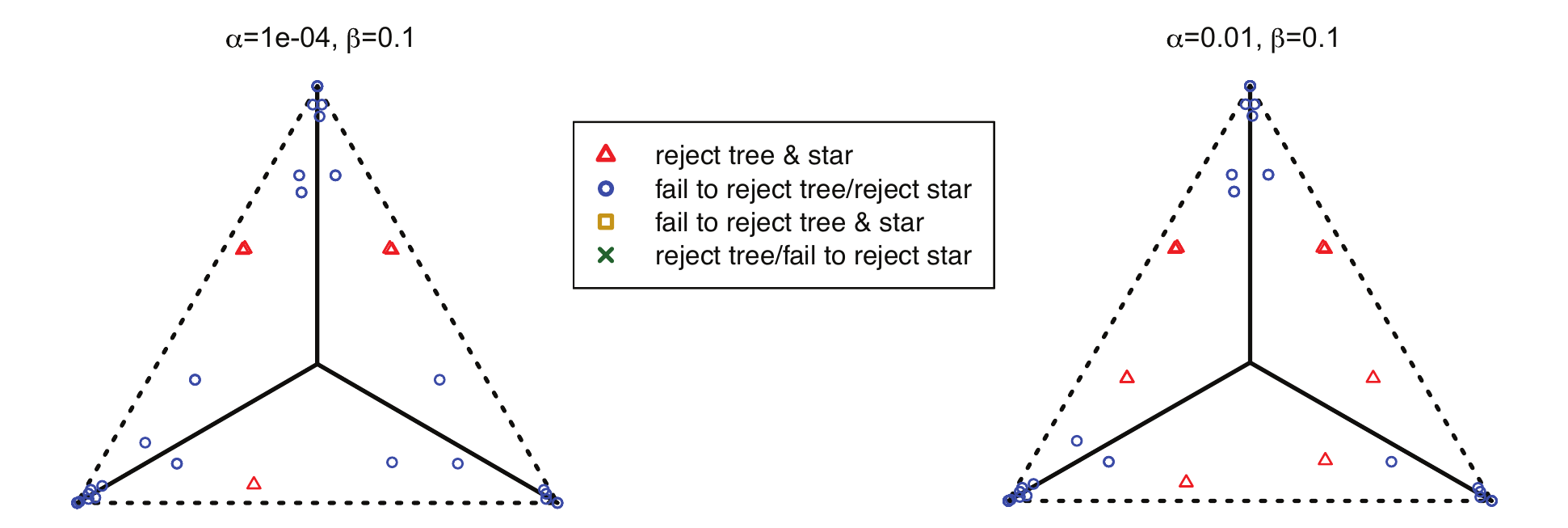} 
	\caption{Simplex plots for hypothesis test results on the
          yeast data set, with two choices of significance levels
          $\alpha=10^{-4}$ and $10^{-2}$ with $\beta=0.1$. The choice
          of $\beta$ here is largely irrelevant, as no plotted
          empirical CFs are near the center. Larger $\alpha$ results
          in more empirical CFs being determined as supporting
          4-cycles, as several blue circles on the left
          change to red triangles on the right.}
	\label{fig:exYeastsimplex}
\end{center}
\end{figure*}

Displayed in Figure \ref{fig:exYeastsimplex} are some sample results
from hypothesis tests for several choices of $\alpha$ and $\beta$.  As
all of the empirical CFs are far from $(1/3, \, 1/3, \, 1/3)$, the CF
for the star tree, only a quite small $\beta$ would lead to failing to
reject the star tree for any set of 4 taxa. Thus, for this data set,
we set $\beta=0.1$ and classify all quartet networks as resolved,
either as trees or 4-cycle networks. (We also see that no empirical
CFs are plotted near the locations of non-tree-like $3_2$-cycle CFs,
giving us some confidence in NANUQ's assumption that there are none in
the data.)  We chose values of $\alpha=10^{-4}$ and $10^{-2}$ as the
first of these results in only the most extreme empirical CFs (far
from the tree-like CF line segments) being interpreted as supporting
4-cycle networks, while the larger value, in imposing a less strict
standard for evidence of hybridization, classifies more of those
empirical CFs distant from the null model as 4-cycles.  Further
increasing $\alpha$ to values $>.08$ would result in additional
classification of 4-cycle networks, but we chose to interpret those
deviations from tree-like-ness as due to stochastic (or other) noise.

\begin{figure*}\begin{center}
		\includegraphics[width=8.1cm]{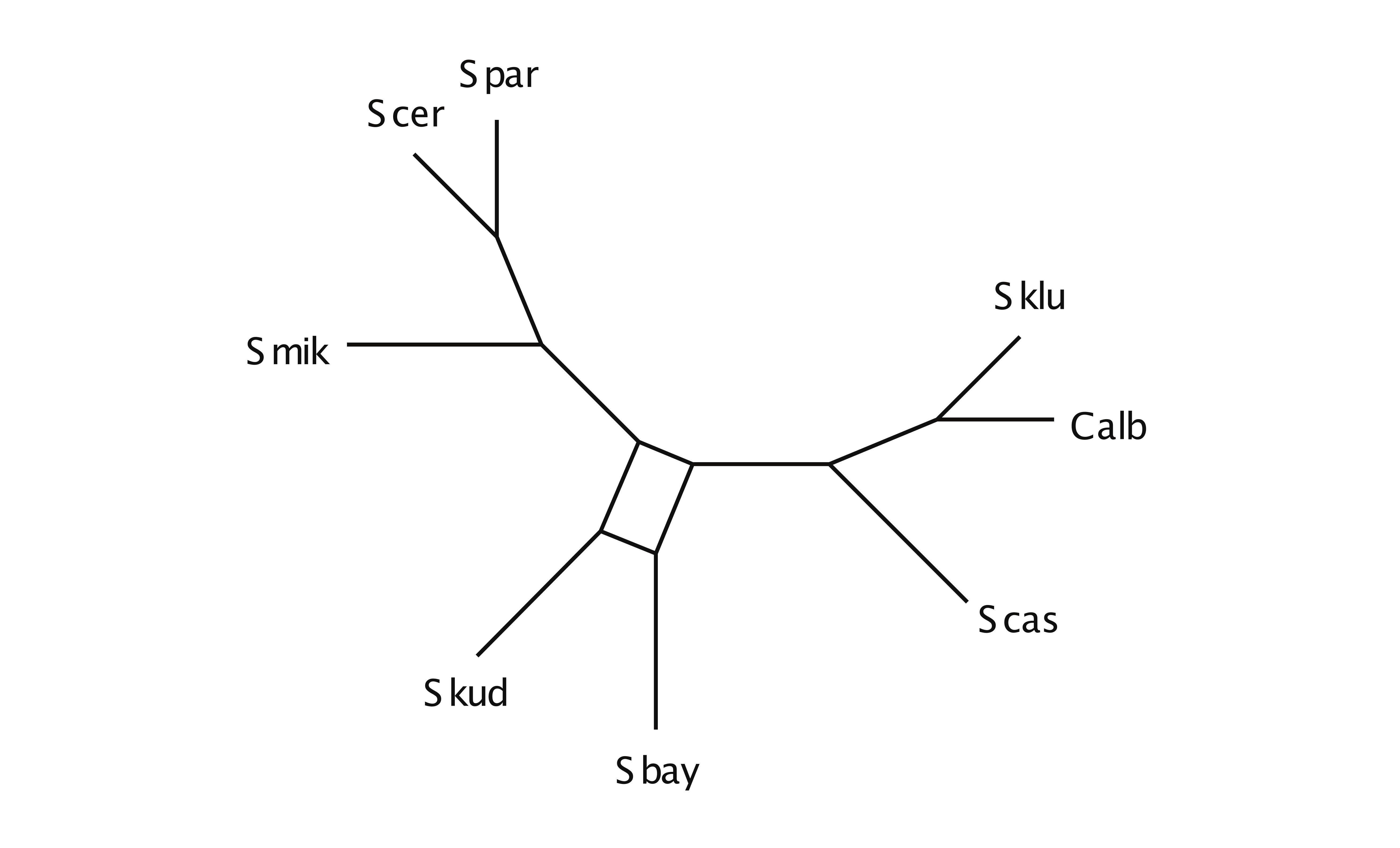} 
		\includegraphics[width=8.1cm]{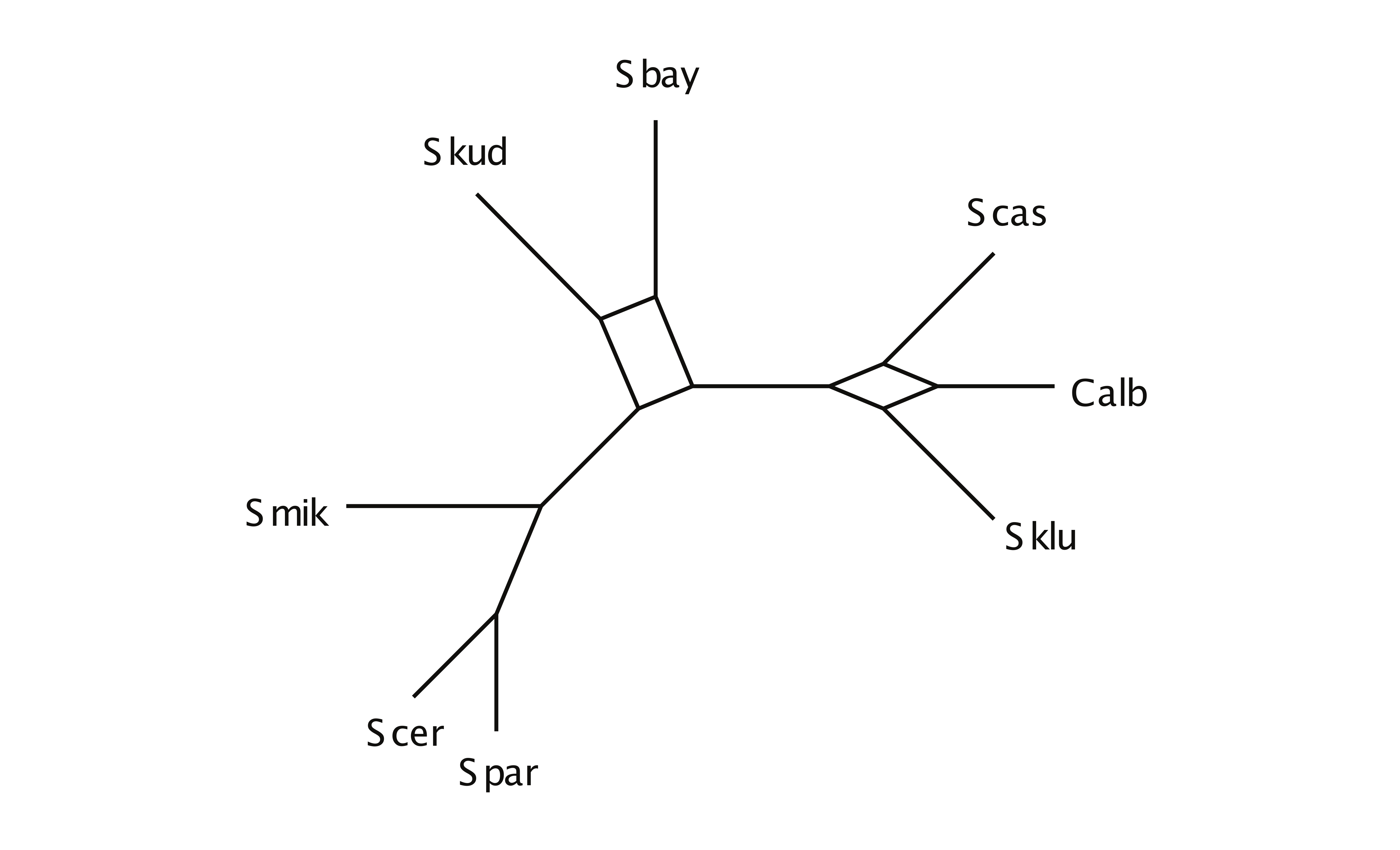}
		\caption{Networks inferred by NANUQ for yeast data  of Example \ref{ex:yeast} with $\beta=0.1$ and
		$\alpha=10^{-4}$ (L) or $10^{-2}$  (R) . }\label{fig:exYeastnanuq}
\end{center}\end{figure*}

For each of the choices of $\alpha$, $\beta$, the splits graphs produced in NANUQ's use of SplitsTree are shown in 
Figure \ref{fig:exYeastnanuq}. Since these show only 4-cycles, they can be directly interpreted as indicating the 
undirected version of the true level-1 network topology relating the taxa, with all 2- and 3-cycles contracted.   
We obtain no information on root location from NANUQ since no cycles have size larger than 4.

\end{example}

\begin{figure*}\begin{center}
\includegraphics[scale=.7]{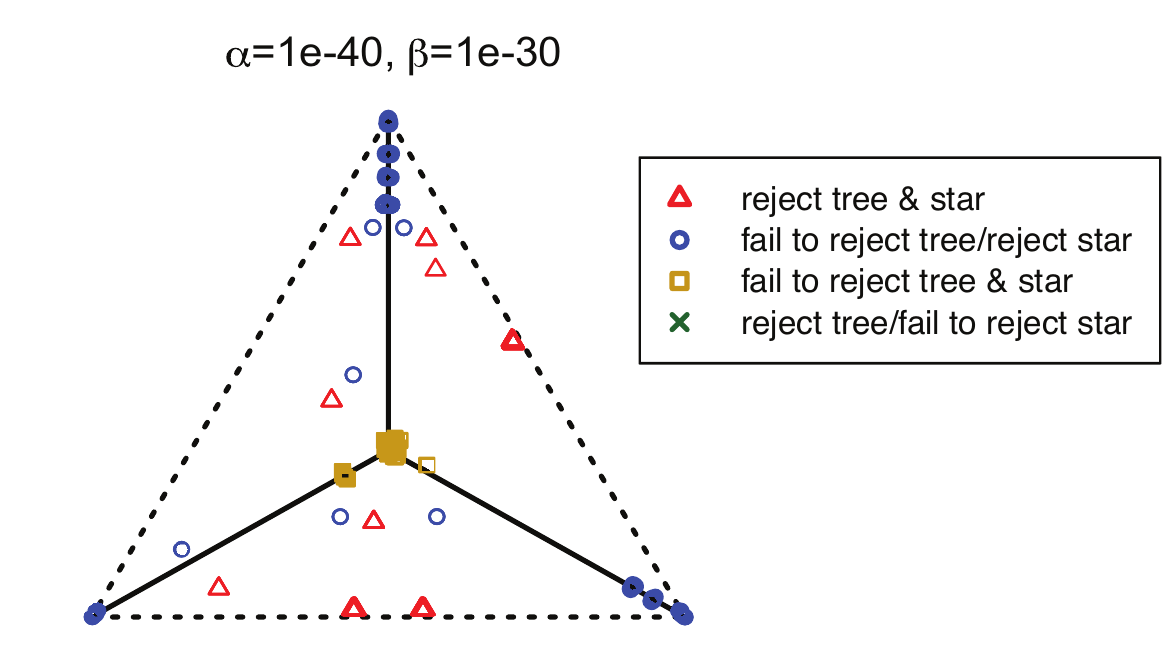} \caption{ Simplex plot showing hypothesis test results 
for the \emph{Heliconius} data set of Example \ref{ex:butter}.  } 
\label{fig:exBsimplex} 
\end{center}\end{figure*}

\begin{example}\label{ex:butter}
  For the third example we use a \emph{Heliconius} butterfly data
  set \cite{Martin2013}, also analyzed in \cite{Chifman2019}, which
  has been presented as evidence of gene flow between sympatric species.  This data set
  consists of 2909 loci, derived from non-overlapping 100-kb windows
  in the full genome of individuals.  Four individuals were sampled
  from three ingroup \emph{Heliconius} species: \emph{H.~rosina},
  \emph{H.~melpomene}, \emph{H.~cydno} (labelled chioneus), and one
  individual from four outgroup species \emph{H.~ethilia},
  \emph{H.~hecale}, \emph{H.~p.~sergestus}, and \emph{H.~pardalinus}.

Running time for Steps 1-3 of the algorithm was about 174\emph{s}.
Figure \ref{fig:exBsimplex} shows results of hypothesis tests for one choice of $\alpha$ and $\beta$, 
with Figure \ref{fig:exBnanuq} the resulting splits graph and inferred network structure. 
Note that a number of the empirical CFs (tan squares in Figure \ref{fig:exBsimplex}) are close to the star-tree CF, 
and the choice of test level $\beta = 10^{-30}$ results in these being treated as unresolved quartets, 
giving the multifurcations in the splits graph for the three multi-sampled taxa. If $\beta$ is made larger so that
star trees are rejected more often, then blobs can appear within the single taxon groups. 
For a broad range of choices for $\alpha$ and $\beta$ (not shown), the three ingroups
\emph{H.~rosina, H.~melpomene, H.~c.~chioneus} and the outgroup are related by
a 4-cycle by NANUQ.

Though not the taxa of focus in the study \cite{Martin2013}, the splits graph of Figure \ref{fig:exBnanuq} depicts 
interesting relationships between the outgroup taxa and illustrates the flexibility of our
analysis.  While difficult to see in the SplitsTree4 output, there is a split with very small weight 
separating  \emph{H.~ethilia}, \emph{H.~sergestus}, and \emph{H.~pardalinus} 
from the rest of the taxa.   SplitsTree4 allows such small weight splits to be filtered out, and doing so 
leaves a 5-dart pointed at \emph{H.~sergestus}.  
However, for different values of $\alpha$ the $5$-dart can change: for example,
for $\alpha=10^{-17}$ the $5$-dart points to \emph{H.~ethilla} instead.
Thus while the central 4-cycle is very well supported, across many values of 
$\alpha$ and $\beta$, one might not want to draw firm conclusions on other hybridizations
in this data set. The analysis does, however, suggest that the relationships between these taxa might warrant further investigation.

\begin{figure*}
 \includegraphics[scale=.58]{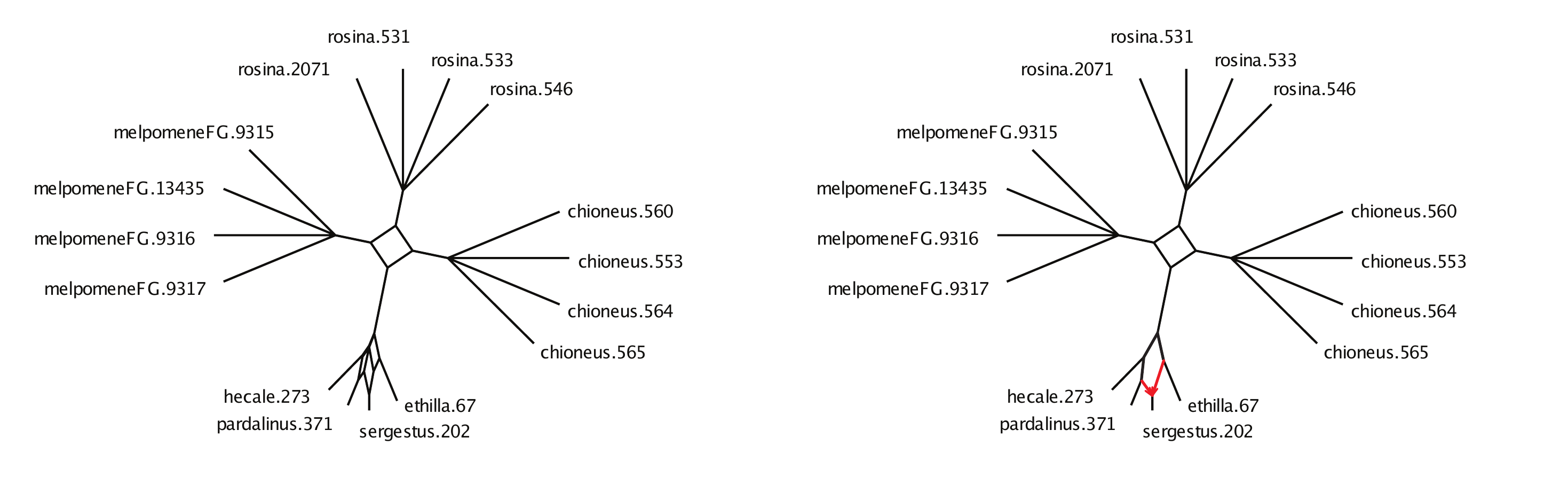}  \hfill \ 
 \caption{(L) Splits graph for \emph{Heliconius} data set of Example \ref{ex:butter}, 
for $\alpha=10^{-40}$, $\beta=10^{-{30}}$, and (R) NANUQ inferred network structure.}
\label{fig:exBnanuq}
\end{figure*}

\end{example}
 

\begin{backmatter}

\section*{Competing interests}
  The authors declare that they have no competing interests.

\section*{Author's contributions}
    All authors contributed equally in the development of the theory, implementation of the algorithm, and
    manuscript writing. They all approved the final manuscript. 

\section*{Author details}
  Department of of Mathematics and Statistics, University of Alaska Fairbanks, 
  1792 Ambler Lane, Fairbanks, AK, 99775  USA
      
\section*{Acknowledgements}
 This research was supported, in part, by the National Institutes of Health Grant R01 GM117590, awarded
 under the Joint DMS/NIGMS Initiative to Support Research at the Interface of the Biological and
 Mathematical Sciences.


\bibliographystyle{bmc-mathphys} 
\bibliography{Hybridization}      


\begin{thebibliography}{34}
\ifx \bisbn   \undefined \def \bisbn  #1{ISBN #1}\fi
\ifx \binits  \undefined \def \binits#1{#1}\fi
\ifx \bauthor  \undefined \def \bauthor#1{#1}\fi
\ifx \batitle  \undefined \def \batitle#1{#1}\fi
\ifx \bjtitle  \undefined \def \bjtitle#1{#1}\fi
\ifx \bvolume  \undefined \def \bvolume#1{\textbf{#1}}\fi
\ifx \byear  \undefined \def \byear#1{#1}\fi
\ifx \bissue  \undefined \def \bissue#1{#1}\fi
\ifx \bfpage  \undefined \def \bfpage#1{#1}\fi
\ifx \blpage  \undefined \def \blpage #1{#1}\fi
\ifx \burl  \undefined \def \burl#1{\textsf{#1}}\fi
\ifx \doiurl  \undefined \def \doiurl#1{\textsf{#1}}\fi
\ifx \betal  \undefined \def \betal{\textit{et al.}}\fi
\ifx \binstitute  \undefined \def \binstitute#1{#1}\fi
\ifx \binstitutionaled  \undefined \def \binstitutionaled#1{#1}\fi
\ifx \bctitle  \undefined \def \bctitle#1{#1}\fi
\ifx \beditor  \undefined \def \beditor#1{#1}\fi
\ifx \bpublisher  \undefined \def \bpublisher#1{#1}\fi
\ifx \bbtitle  \undefined \def \bbtitle#1{#1}\fi
\ifx \bedition  \undefined \def \bedition#1{#1}\fi
\ifx \bseriesno  \undefined \def \bseriesno#1{#1}\fi
\ifx \blocation  \undefined \def \blocation#1{#1}\fi
\ifx \bsertitle  \undefined \def \bsertitle#1{#1}\fi
\ifx \bsnm \undefined \def \bsnm#1{#1}\fi
\ifx \bsuffix \undefined \def \bsuffix#1{#1}\fi
\ifx \bparticle \undefined \def \bparticle#1{#1}\fi
\ifx \barticle \undefined \def \barticle#1{#1}\fi
\ifx \bconfdate \undefined \def \bconfdate #1{#1}\fi
\ifx \botherref \undefined \def \botherref #1{#1}\fi
\ifx \url \undefined \def \url#1{\textsf{#1}}\fi
\ifx \bchapter \undefined \def \bchapter#1{#1}\fi
\ifx \bbook \undefined \def \bbook#1{#1}\fi
\ifx \bcomment \undefined \def \bcomment#1{#1}\fi
\ifx \oauthor \undefined \def \oauthor#1{#1}\fi
\ifx \citeauthoryear \undefined \def \citeauthoryear#1{#1}\fi
\ifx \endbibitem  \undefined \def \endbibitem {}\fi
\ifx \bconflocation  \undefined \def \bconflocation#1{#1}\fi
\ifx \arxivurl  \undefined \def \arxivurl#1{\textsf{#1}}\fi
\csname PreBibitemsHook\endcsname

\bibitem{Solis-Lemus2016}
\begin{botherref}
\oauthor{\bsnm{Sol{\'{i}}s-Lemus}, \binits{C.}},
\oauthor{\bsnm{An{\'{e}}}, \binits{C.}}:
Inferring phylogenetic networks with maximum pseudolikelihood under incomplete
  lineage sorting.
PLoS Genetics
\textbf{12}(3)
(2016).
doi:\doiurl{10.1371/journal.pgen.1005896}
\end{botherref}
\endbibitem

\bibitem{Nakhleh2015}
\begin{barticle}
\bauthor{\bsnm{Yu}, \binits{Y.}},
\bauthor{\bsnm{Nakhleh}, \binits{L.}}:
\batitle{A maximum pseudo-likelihood approach for phylogenetic networks}.
\bjtitle{BMC Genomics}
\bvolume{16},
\bfpage{10}
(\byear{2015}).
\bcomment{Chap. S10}
\end{barticle}
\endbibitem

\bibitem{Banos2018}
\begin{barticle}
\bauthor{\bsnm{Ba\~nos}, \binits{H.}}:
\batitle{Identifying species network features from gene tree quartets}.
\bjtitle{Bulletin of Mathematical Biology}
\bvolume{81},
\bfpage{494}--\blpage{534}
(\byear{2019})
\end{barticle}
\endbibitem

\bibitem{Mitchell2018}
\begin{botherref}
\oauthor{\bsnm{Allman}, \binits{E.S.}},
\oauthor{\bsnm{Mitchell}, \binits{J.D.}},
\oauthor{\bsnm{Rhodes}, \binits{J.A.}}:
Hypothesis testing near singularities and boundaries.
https://arxiv.org/abs/1806.08458
(2018)
\end{botherref}
\endbibitem

\bibitem{Rhodes2019}
\begin{botherref}
\oauthor{\bsnm{Rhodes}, \binits{J.A.}}:
Topological metrizations of trees, and new quartet methods of tree inference.
IEEE/ACM Transactions on Computational Biology and Bioinformatics
(2019, to appear)
\end{botherref}
\endbibitem

\bibitem{Bryant2004}
\begin{barticle}
\bauthor{\bsnm{Bryant}, \binits{D.}},
\bauthor{\bsnm{Moulton}, \binits{V.}}:
\batitle{Neighbor-{N}et: An agglomerative method for the construction of
  phylogenetic networks}.
\bjtitle{Molecular Biology and Evolution}
\bvolume{21},
\bfpage{255}--\blpage{265}
(\byear{2004})
\end{barticle}
\endbibitem

\bibitem{Dress2004}
\begin{barticle}
\bauthor{\bsnm{Dress}, \binits{A.W.M.}},
\bauthor{\bsnm{Huson}, \binits{D.H.}}:
\batitle{Constructing splits graphs}.
\bjtitle{IEEE/ACM Transactions on Computational Biology and Bioinformatics}
\bvolume{1}(\bissue{3}),
\bfpage{109}--\blpage{115}
(\byear{2004}).
doi:\doiurl{10.1109/TCBB.2004.27}
\end{barticle}
\endbibitem

\bibitem{Huson2006}
\begin{barticle}
\bauthor{\bsnm{Huson}, \binits{D.H.}},
\bauthor{\bsnm{Bryant}, \binits{D.}}:
\batitle{Application of phylogenetic networks in evolutionary studies}.
\bjtitle{Molecular Biology and Evolution}
\bvolume{23}(\bissue{2}),
\bfpage{254}--\blpage{267}
(\byear{2005}).
doi:\doiurl{10.1093/molbev/msj030}
\end{barticle}
\endbibitem

\bibitem{Wen2018}
\begin{barticle}
\bauthor{\bsnm{Wen}, \binits{D.}},
\bauthor{\bsnm{Nakhleh}, \binits{L.}}:
\batitle{Coestimating reticulate phylogenies and gene trees from multilocus
  sequence data}.
\bjtitle{Systematic Biology}
\bvolume{67}(\bissue{3}),
\bfpage{439}--\blpage{457}
(\byear{2018})
\end{barticle}
\endbibitem

\bibitem{Zhang2018}
\begin{barticle}
\bauthor{\bsnm{Zhang}, \binits{C.}},
\bauthor{\bsnm{Ogilvie}, \binits{H.A.}},
\bauthor{\bsnm{Drummond}, \binits{A.J.}},
\bauthor{\bsnm{Stadler}, \binits{T.}}:
\batitle{Bayesian inference of species networks from multilocus sequence data}.
\bjtitle{Mol. Biol. Evol.}
\bvolume{35}(\bissue{2}),
\bfpage{504}--\blpage{517}
(\byear{2018})
\end{barticle}
\endbibitem

\bibitem{HusonRuppScorn}
\begin{bbook}
\bauthor{\bsnm{Huson}, \binits{D.H.}},
\bauthor{\bsnm{Rupp}, \binits{R.}},
\bauthor{\bsnm{Scornavacca}, \binits{C.}}:
\bbtitle{Phylogenetic Networks}.
\bpublisher{Cambridge University Press},
\blocation{Cambridge}
(\byear{2010})
\end{bbook}
\endbibitem

\bibitem{Huson2005}
\begin{bchapter}
\bauthor{\bsnm{Huson}, \binits{D.H.}},
\bauthor{\bsnm{Kl{\"o}pper}, \binits{T.}},
\bauthor{\bsnm{Lockhart}, \binits{P.J.}},
\bauthor{\bsnm{Steel}, \binits{M.A.}}:
\bctitle{Reconstruction of reticulate networks from gene trees}.
In: \beditor{\bsnm{Miyano}, \binits{S.}},
\beditor{\bsnm{Mesirov}, \binits{J.}},
\beditor{\bsnm{Kasif}, \binits{S.}},
\beditor{\bsnm{Istrail}, \binits{S.}},
\beditor{\bsnm{Pevzner}, \binits{P.A.}},
\beditor{\bsnm{Waterman}, \binits{M.}} (eds.)
\bbtitle{Research in Computational Molecular Biology. RECOMB 2005.}
\bsertitle{Lecture Notes in Computer Science},
vol. \bseriesno{3500}.
\bpublisher{Springer},
\blocation{Berlin, Heidelberg}
(\byear{2005})
\end{bchapter}
\endbibitem

\bibitem{Gambette2012}
\begin{barticle}
\bauthor{\bsnm{Gambette}, \binits{P.}},
\bauthor{\bsnm{Berry}, \binits{V.}},
\bauthor{\bsnm{Paul}, \binits{C.}}:
\batitle{{Quartets and unrooted phylogenetic networks.}}
\bjtitle{Journal of Bioinformatics and Computational Biology}
\bvolume{10}(\bissue{4}),
\bfpage{1250004}
(\byear{2012}).
doi:\doiurl{10.1142/S0219720012500047}
\end{barticle}
\endbibitem

\bibitem{Steel2016}
\begin{bbook}
\bauthor{\bsnm{Steel}, \binits{M.}}:
\bbtitle{Phylogeny: {D}iscrete and Random Processes in Evolution}.
\bsertitle{CBMS-NSF Regional Conference Series in Applied Mathematics},
vol. \bseriesno{89}.
\bpublisher{Society for Industrial and Applied Mathematics (SIAM)},
\blocation{Philadelphia, PA}
(\byear{2016}).
doi:\doiurl{10.1137/1.9781611974485.ch1}
\end{bbook}
\endbibitem

\bibitem{Rossello2009}
\begin{barticle}
\bauthor{\bsnm{Rossell{\'{o}}}, \binits{F.}},
\bauthor{\bsnm{Valiente}, \binits{G.}}:
\batitle{{All that glisters is not galled}}.
\bjtitle{Mathematical Biosciences}
\bvolume{221}(\bissue{1}),
\bfpage{54}--\blpage{59}
(\byear{2009}).
doi:\doiurl{10.1016/j.mbs.2009.06.007}
\end{barticle}
\endbibitem

\bibitem{Pamilo1988}
\begin{barticle}
\bauthor{\bsnm{Pamilo}, \binits{P.}},
\bauthor{\bsnm{Nei}, \binits{M.}}:
\batitle{Relationships between gene trees and species trees.}
\bjtitle{Mol. Biol. Evol.}
\bvolume{5}(\bissue{5}),
\bfpage{568}--\blpage{583}
(\byear{1988})
\end{barticle}
\endbibitem

\bibitem{MSCpaper2009}
\begin{barticle}
\bauthor{\bsnm{Liu}, \binits{L.}},
\bauthor{\bsnm{Yu}, \binits{L.}},
\bauthor{\bsnm{Kubatko}, \binits{L.}},
\bauthor{\bsnm{Pearl}, \binits{D.K.}},
\bauthor{\bsnm{Edwards}, \binits{S.V.}}:
\batitle{Coalescent methods for estimating phylogenetic trees.}
\bjtitle{Mol. Phylogenet. Evol.}
\bvolume{53}(\bissue{1}),
\bfpage{320}--\blpage{328}
(\byear{2009})
\end{barticle}
\endbibitem

\bibitem{Meng2009}
\begin{barticle}
\bauthor{\bsnm{Meng}, \binits{C.}},
\bauthor{\bsnm{Kubatko}, \binits{L.S.}}:
\batitle{Detecting hybrid speciation in the presence of incomplete lineage
  sorting using gene tree incongruence: A model}.
\bjtitle{Theoretical Population Biology}
\bvolume{75}(\bissue{1}),
\bfpage{35}--\blpage{45}
(\byear{2009}).
doi:\doiurl{10.1016/j.tpb.2008.10.004}
\end{barticle}
\endbibitem

\bibitem{Nakhleh2012}
\begin{barticle}
\bauthor{\bsnm{Yu}, \binits{Y.}},
\bauthor{\bsnm{Degnan}, \binits{J.H.}},
\bauthor{\bsnm{Nakhleh}, \binits{L.}}:
\batitle{The probability of a gene tree topology within a phylogenetic network
  with applications to hybridization detection}.
\bjtitle{PLoS Genetics}
\bvolume{8},
\bfpage{1002660}
(\byear{2012})
\end{barticle}
\endbibitem

\bibitem{Nakhleh2016}
\begin{barticle}
\bauthor{\bsnm{Zhu}, \binits{J.}},
\bauthor{\bsnm{Yu}, \binits{Y.}},
\bauthor{\bsnm{Nakhleh}, \binits{L.}}:
\batitle{In the light of deep coalescence: Revisiting trees within networks}.
\bjtitle{BMC Bioinformatics}
\bvolume{17},
\bfpage{415}
(\byear{2016}).
\bcomment{Chap. 415}
\end{barticle}
\endbibitem

\bibitem{Allman2011}
\begin{barticle}
\bauthor{\bsnm{Allman}, \binits{E.S.}},
\bauthor{\bsnm{Degnan}, \binits{J.H.}},
\bauthor{\bsnm{Rhodes}, \binits{J.A.}}:
\batitle{Identifying the rooted species tree from the distribution of unrooted
  gene trees under the coalescent}.
\bjtitle{Journal of Mathematical Biology}
\bvolume{62}(\bissue{6}),
\bfpage{833}--\blpage{862}
(\byear{2011}).
doi:\doiurl{10.1007/s00285-010-0355-7}.
\arxivurl{0912.4472}
\end{barticle}
\endbibitem

\bibitem{Bryant2007}
\begin{botherref}
\oauthor{\bsnm{Bryant}, \binits{D.}},
\oauthor{\bsnm{Moulton}, \binits{V.}},
\oauthor{\bsnm{Spillner}, \binits{A.}}:
Consistency of the {N}eighbor-{N}et algorithm.
Algorithms for Molecular Biology
\textbf{2:8}
(2007)
\end{botherref}
\endbibitem

\bibitem{Dress1992}
\begin{barticle}
\bauthor{\bsnm{Bandelt}, \binits{H.}},
\bauthor{\bsnm{Dress}, \binits{A.}}:
\batitle{A canonical decomposition theory for metrics on a finite set}.
\bjtitle{Advances in Mathematics}
\bvolume{92},
\bfpage{47}--\blpage{105}
(\byear{1992})
\end{barticle}
\endbibitem

\bibitem{DressTtheory}
\begin{barticle}
\bauthor{\bsnm{Dress}, \binits{A.}},
\bauthor{\bsnm{Moulton}, \binits{V.}},
\bauthor{\bsnm{Terhalle}, \binits{W.}}:
\batitle{{$T$}-theory: an overview}.
\bjtitle{European J. Combin.}
\bvolume{17}(\bissue{2-3}),
\bfpage{161}--\blpage{175}
(\byear{1996}).
doi:\doiurl{10.1006/eujc.1996.0015}.
\bcomment{Discrete metric spaces (Bielefeld, 1994)}
\end{barticle}
\endbibitem

\bibitem{Chifman2019}
\begin{botherref}
\oauthor{\bsnm{Chifman}, \binits{J.}},
\oauthor{\bsnm{Kubatko}, \binits{L.}}:
An invariants-based method for efficient identification of hybrid species from
  large-scale genomic data.
BMC Evolutionary Biology
(2019, to appear)
\end{botherref}
\endbibitem

\bibitem{Allman2018}
\begin{barticle}
\bauthor{\bsnm{Allman}, \binits{E.S.}},
\bauthor{\bsnm{Long}, \binits{C.}},
\bauthor{\bsnm{Rhodes}, \binits{J.A.}}:
\batitle{Species tree inference from genomic sequences using the log-det
  distance}.
\bjtitle{SIAM J. Appl. Algebra Geom.}
\bvolume{3}(\bissue{1}),
\bfpage{107}--\blpage{127}
(\byear{2019}).
doi:\doiurl{10.1137/18M1194134}
\end{barticle}
\endbibitem

\bibitem{Roch2018}
\begin{bbook}
\bauthor{\bsnm{Roch}, \binits{S.}},
\bauthor{\bsnm{Wang}, \binits{K.-C.}}:
In: \beditor{\bsnm{Raphael}, \binits{B.}} (ed.)
\bbtitle{Circular Networks from Distorted Metrics}.
\bsertitle{Lecture Notes in Computer Science},
vol. \bseriesno{10812}.
\bpublisher{Springer},
\blocation{New York}
(\byear{2018})
\end{bbook}
\endbibitem

\bibitem{Rokas2003}
\begin{barticle}
\bauthor{\bsnm{Rokas}, \binits{A.}},
\bauthor{\bsnm{Williams}, \binits{B.}},
\bauthor{\bsnm{Carrol}, \binits{S.}}:
\batitle{Genome-scale approaches to resolving incongruence in molecular
  phylogenies}.
\bjtitle{Nature}
\bvolume{425},
\bfpage{798}--\blpage{804}
(\byear{2003})
\end{barticle}
\endbibitem

\bibitem{Martin2013}
\begin{barticle}
\bauthor{\bsnm{Martin}, \binits{S.H.}},
\bauthor{\bsnm{K.K.}, \binits{D.}},
\bauthor{\bsnm{Nadeau}, \binits{N.J.}},
\bauthor{\bsnm{Salazar}, \binits{C.}},
\bauthor{\bsnm{Walters}, \binits{J.R.}},
\bauthor{\bsnm{Simpson}, \binits{F.}},
\bauthor{\bsnm{Blaxter}, \binits{M.}},
\bauthor{\bsnm{Manica}, \binits{A.}},
\bauthor{\bsnm{Mallet}, \binits{J.}},
\bauthor{\bsnm{Jiggins}, \binits{C.D.}}:
\batitle{Genome-wide evidence for speciation with gene flow in {H}eliconius
  butterflies}.
\bjtitle{Genome Res}
\bvolume{23},
\bfpage{1817}--\blpage{1828}
(\byear{2013})
\end{barticle}
\endbibitem

\bibitem{Hybrid-lambda}
\begin{barticle}
\bauthor{\bsnm{Zhu}, \binits{S.}},
\bauthor{\bsnm{Degnan}, \binits{J.H.}},
\bauthor{\bsnm{Goldstien}, \binits{S.}},
\bauthor{\bsnm{Eldon}, \binits{B.}}:
\batitle{{H}ybrid-{L}ambda: Simulation of multiple merger and {K}ingman gene
  genealogies in species networks and species trees}.
\bjtitle{BMC Bioinformatics}
\bvolume{16}(\bissue{1}),
\bfpage{292}
(\byear{2015}).
doi:\doiurl{10.1186/s12859-015-0721-y}
\end{barticle}
\endbibitem

\bibitem{Bloomquist2010}
\begin{barticle}
\bauthor{\bsnm{Bloomquist}, \binits{D.}},
\bauthor{\bsnm{Suchard}, \binits{M.}}:
\batitle{Unifying vertical and nonvertical evolution: A stochastic arg-based
  framework}.
\bjtitle{Systematic Biology}
\bvolume{59},
\bfpage{27}--\blpage{41}
(\byear{2010})
\end{barticle}
\endbibitem

\bibitem{Holland2004}
\begin{barticle}
\bauthor{\bsnm{Holland}, \binits{B.R.}},
\bauthor{\bsnm{Huber}, \binits{K.T.}},
\bauthor{\bsnm{Moulton}, \binits{V.}},
\bauthor{\bsnm{Lockhart}, \binits{P.J.}}:
\batitle{Using consensus networks to visualize contradictory evidence for
  species phylogeny}.
\bjtitle{Molecular Biology and Evolution}
\bvolume{21}(\bissue{7}),
\bfpage{1459}--\blpage{1461}
(\byear{2004}).
doi:\doiurl{10.1093/molbev/msh145}
\end{barticle}
\endbibitem

\bibitem{Wu2008}
\begin{barticle}
\bauthor{\bsnm{Wu}, \binits{Q.}},
\bauthor{\bsnm{James}, \binits{S.}},
\bauthor{\bsnm{Roberts}, \binits{I.}},
\bauthor{\bsnm{Moulton}, \binits{V.}},
\bauthor{\bsnm{Huber}, \binits{K.}}:
\batitle{Exploring contradictory phylogenetic relationships in yeasts}.
\bjtitle{FEMS Yeast Research}
\bvolume{8},
\bfpage{641}--\blpage{650}
(\byear{2008})
\end{barticle}
\endbibitem

\bibitem{Yu2011}
\begin{barticle}
\bauthor{\bsnm{Yu}, \binits{Y.}},
\bauthor{\bsnm{Than}, \binits{C.}},
\bauthor{\bsnm{Degnan}, \binits{J.H.}},
\bauthor{\bsnm{Nakhleh}, \binits{L.}}:
\batitle{{Coalescent histories on phylogenetic networks and detection of
  hybridization despite incomplete lineage sorting}}.
\bjtitle{Systematic Biology}
\bvolume{60}(\bissue{2}),
\bfpage{138}--\blpage{149}
(\byear{2011}).
doi:\doiurl{10.1093/sysbio/syq084}
\end{barticle}
\endbibitem

\end{thebibliography}

\newcommand{\BMCxmlcomment}[1]{}

\BMCxmlcomment{

<refgrp>

<bibl id="B1">
  <title><p>Inferring Phylogenetic Networks with Maximum Pseudolikelihood under
  Incomplete Lineage Sorting</p></title>
  <aug>
    <au><snm>Sol{\'{i}}s Lemus</snm><fnm>C.</fnm></au>
    <au><snm>An{\'{e}}</snm><fnm>C.</fnm></au>
  </aug>
  <source>PLoS Genetics</source>
  <pubdate>2016</pubdate>
  <volume>12</volume>
  <issue>3</issue>
</bibl>

<bibl id="B2">
  <title><p>A Maximum Pseudo-likelihood Approach for Phylogenetic
  Networks</p></title>
  <aug>
    <au><snm>Yu</snm><fnm>Y.</fnm></au>
    <au><snm>Nakhleh</snm><fnm>L.</fnm></au>
  </aug>
  <source>BMC Genomics</source>
  <section><title><p>S10</p></title></section>
  <pubdate>2015</pubdate>
  <volume>16</volume>
  <fpage>S10</fpage>
</bibl>

<bibl id="B3">
  <title><p>Identifying species network features from gene tree
  quartets</p></title>
  <aug>
    <au><snm>Ba\ nos</snm><fnm>H.</fnm></au>
  </aug>
  <source>Bulletin of Mathematical Biology</source>
  <pubdate>2019</pubdate>
  <volume>81</volume>
  <fpage>494</fpage>
  <lpage>-534</lpage>
</bibl>

<bibl id="B4">
  <title><p>Hypothesis testing near singularities and boundaries</p></title>
  <aug>
    <au><snm>Allman</snm><fnm>E.S.</fnm></au>
    <au><snm>Mitchell</snm><fnm>J.D.</fnm></au>
    <au><snm>Rhodes</snm><fnm>J.A.</fnm></au>
  </aug>
  <source>https://arxiv.org/abs/1806.08458</source>
  <pubdate>2018</pubdate>
</bibl>

<bibl id="B5">
  <title><p>Topological metrizations of trees, and new quartet methods of tree
  inference</p></title>
  <aug>
    <au><snm>Rhodes</snm><fnm>J.A.</fnm></au>
  </aug>
  <source>IEEE/ACM Transactions on Computational Biology and
  Bioinformatics</source>
  <pubdate>2019, to appear</pubdate>
</bibl>

<bibl id="B6">
  <title><p>Neighbor-{N}et: An Agglomerative Method for the Construction of
  Phylogenetic Networks</p></title>
  <aug>
    <au><snm>Bryant</snm><fnm>D.</fnm></au>
    <au><snm>Moulton</snm><fnm>V.</fnm></au>
  </aug>
  <source>Molecular Biology and Evolution</source>
  <pubdate>2004</pubdate>
  <volume>21</volume>
  <fpage>255</fpage>
  <lpage>-265</lpage>
</bibl>

<bibl id="B7">
  <title><p>Constructing splits graphs</p></title>
  <aug>
    <au><snm>Dress</snm><fnm>A.W.M.</fnm></au>
    <au><snm>Huson</snm><fnm>D.H.</fnm></au>
  </aug>
  <source>IEEE/ACM Transactions on Computational Biology and
  Bioinformatics</source>
  <pubdate>2004</pubdate>
  <volume>1</volume>
  <issue>3</issue>
  <fpage>109</fpage>
  <lpage>115</lpage>
</bibl>

<bibl id="B8">
  <title><p>Application of Phylogenetic Networks in Evolutionary
  Studies</p></title>
  <aug>
    <au><snm>Huson</snm><fnm>D.H.</fnm></au>
    <au><snm>Bryant</snm><fnm>D.</fnm></au>
  </aug>
  <source>Molecular Biology and Evolution</source>
  <pubdate>2005</pubdate>
  <volume>23</volume>
  <issue>2</issue>
  <fpage>254</fpage>
  <lpage>267</lpage>
</bibl>

<bibl id="B9">
  <title><p>Coestimating Reticulate Phylogenies and Gene Trees from Multilocus
  Sequence Data</p></title>
  <aug>
    <au><snm>Wen</snm><fnm>D.</fnm></au>
    <au><snm>Nakhleh</snm><fnm>L.</fnm></au>
  </aug>
  <source>Systematic Biology</source>
  <pubdate>2018</pubdate>
  <volume>67</volume>
  <issue>3</issue>
  <fpage>439</fpage>
  <lpage>457</lpage>
</bibl>

<bibl id="B10">
  <title><p>Bayesian Inference of Species Networks from Multilocus Sequence
  Data</p></title>
  <aug>
    <au><snm>Zhang</snm><fnm>C.</fnm></au>
    <au><snm>Ogilvie</snm><fnm>H.A.</fnm></au>
    <au><snm>Drummond</snm><fnm>A.J.</fnm></au>
    <au><snm>Stadler</snm><fnm>T.</fnm></au>
  </aug>
  <source>Mol. Biol. Evol.</source>
  <pubdate>2018</pubdate>
  <volume>35</volume>
  <issue>2</issue>
  <fpage>504</fpage>
  <lpage>-517</lpage>
</bibl>

<bibl id="B11">
  <title><p>Phylogenetic Networks</p></title>
  <aug>
    <au><snm>Huson</snm><fnm>D.H.</fnm></au>
    <au><snm>Rupp</snm><fnm>R.</fnm></au>
    <au><snm>Scornavacca</snm><fnm>C.</fnm></au>
  </aug>
  <publisher>Cambridge: Cambridge University Press</publisher>
  <pubdate>2010</pubdate>
</bibl>

<bibl id="B12">
  <title><p>Reconstruction of Reticulate Networks from Gene Trees</p></title>
  <aug>
    <au><snm>Huson</snm><fnm>D.H.</fnm></au>
    <au><snm>Kl{\"o}pper</snm><fnm>T.</fnm></au>
    <au><snm>Lockhart</snm><fnm>P.J.</fnm></au>
    <au><snm>Steel</snm><fnm>M.A.</fnm></au>
  </aug>
  <source>Research in Computational Molecular Biology. RECOMB 2005.</source>
  <publisher>Berlin, Heidelberg: Springer</publisher>
  <editor>Miyano, S. and Mesirov, J. and Kasif, S. and Istrail, S. and Pevzner,
  P.A. and Waterman, M.</editor>
  <series><title><p>Lecture Notes in Computer Science</p></title></series>
  <pubdate>2005</pubdate>
  <volume>3500</volume>
</bibl>

<bibl id="B13">
  <title><p>{Quartets and unrooted phylogenetic networks.}</p></title>
  <aug>
    <au><snm>Gambette</snm><fnm>P.</fnm></au>
    <au><snm>Berry</snm><fnm>V.</fnm></au>
    <au><snm>Paul</snm><fnm>C.</fnm></au>
  </aug>
  <source>Journal of Bioinformatics and Computational Biology</source>
  <pubdate>2012</pubdate>
  <volume>10</volume>
  <issue>4</issue>
  <fpage>1250004</fpage>
  <url>http://www.ncbi.nlm.nih.gov/pubmed/22809417</url>
</bibl>

<bibl id="B14">
  <title><p>Phylogeny: {D}iscrete and random processes in evolution</p></title>
  <aug>
    <au><snm>Steel</snm><fnm>M</fnm></au>
  </aug>
  <publisher>Philadelphia, PA: Society for Industrial and Applied Mathematics
  (SIAM)</publisher>
  <series><title><p>CBMS-NSF Regional Conference Series in Applied
  Mathematics</p></title></series>
  <pubdate>2016</pubdate>
  <volume>89</volume>
</bibl>

<bibl id="B15">
  <title><p>{All that glisters is not galled}</p></title>
  <aug>
    <au><snm>Rossell{\'{o}}</snm><fnm>F</fnm></au>
    <au><snm>Valiente</snm><fnm>G</fnm></au>
  </aug>
  <source>Mathematical Biosciences</source>
  <pubdate>2009</pubdate>
  <volume>221</volume>
  <issue>1</issue>
  <fpage>54</fpage>
  <lpage>-59</lpage>
</bibl>

<bibl id="B16">
  <title><p>Relationships between gene trees and species trees.</p></title>
  <aug>
    <au><snm>Pamilo</snm><fnm>P.</fnm></au>
    <au><snm>Nei</snm><fnm>M.</fnm></au>
  </aug>
  <source>Mol. Biol. Evol.</source>
  <pubdate>1988</pubdate>
  <volume>5</volume>
  <issue>5</issue>
  <fpage>568</fpage>
  <lpage>583</lpage>
</bibl>

<bibl id="B17">
  <title><p>Coalescent methods for estimating phylogenetic trees.</p></title>
  <aug>
    <au><snm>Liu</snm><fnm>L.</fnm></au>
    <au><snm>Yu</snm><fnm>L.</fnm></au>
    <au><snm>Kubatko</snm><fnm>L.</fnm></au>
    <au><snm>Pearl</snm><fnm>D.K.</fnm></au>
    <au><snm>Edwards</snm><fnm>S.V.</fnm></au>
  </aug>
  <source>Mol. Phylogenet. Evol.</source>
  <pubdate>2009</pubdate>
  <volume>53</volume>
  <issue>1</issue>
  <fpage>320</fpage>
  <lpage>328</lpage>
</bibl>

<bibl id="B18">
  <title><p>Detecting hybrid speciation in the presence of incomplete lineage
  sorting using gene tree incongruence: A model</p></title>
  <aug>
    <au><snm>Meng</snm><fnm>C.</fnm></au>
    <au><snm>Kubatko</snm><fnm>L.S.</fnm></au>
  </aug>
  <source>Theoretical Population Biology</source>
  <pubdate>2009</pubdate>
  <volume>75</volume>
  <issue>1</issue>
  <fpage>35</fpage>
  <lpage>-45</lpage>
</bibl>

<bibl id="B19">
  <title><p>The probability of a gene tree topology within a phylogenetic
  network with applications to hybridization detection</p></title>
  <aug>
    <au><snm>Yu</snm><fnm>Y.</fnm></au>
    <au><snm>Degnan</snm><fnm>J.H.</fnm></au>
    <au><snm>Nakhleh</snm><fnm>L.</fnm></au>
  </aug>
  <source>PLoS Genetics</source>
  <pubdate>2012</pubdate>
  <volume>8</volume>
  <fpage>e1002660</fpage>
</bibl>

<bibl id="B20">
  <title><p>In the light of deep coalescence: Revisiting trees within
  networks</p></title>
  <aug>
    <au><snm>Zhu</snm><fnm>J.</fnm></au>
    <au><snm>Yu</snm><fnm>Y.</fnm></au>
    <au><snm>Nakhleh</snm><fnm>L.</fnm></au>
  </aug>
  <source>BMC Bioinformatics</source>
  <section><title><p>415</p></title></section>
  <pubdate>2016</pubdate>
  <volume>17</volume>
  <fpage>415</fpage>
</bibl>

<bibl id="B21">
  <title><p>Identifying the rooted species tree from the distribution of
  unrooted gene trees under the coalescent</p></title>
  <aug>
    <au><snm>Allman</snm><fnm>E.S.</fnm></au>
    <au><snm>Degnan</snm><fnm>J.H.</fnm></au>
    <au><snm>Rhodes</snm><fnm>J.A.</fnm></au>
  </aug>
  <source>Journal of Mathematical Biology</source>
  <pubdate>2011</pubdate>
  <volume>62</volume>
  <issue>6</issue>
  <fpage>833</fpage>
  <lpage>-862</lpage>
</bibl>

<bibl id="B22">
  <title><p>Consistency of the {N}eighbor-{N}et Algorithm</p></title>
  <aug>
    <au><snm>Bryant</snm><fnm>D.</fnm></au>
    <au><snm>Moulton</snm><fnm>V.</fnm></au>
    <au><snm>Spillner</snm><fnm>A.</fnm></au>
  </aug>
  <source>Algorithms for Molecular Biology</source>
  <pubdate>2007</pubdate>
  <volume>2:8</volume>
</bibl>

<bibl id="B23">
  <title><p>A Canonical Decomposition Theory for Metrics on a Finite
  Set</p></title>
  <aug>
    <au><snm>Bandelt</snm><fnm>H.</fnm></au>
    <au><snm>Dress</snm><fnm>A.</fnm></au>
  </aug>
  <source>Advances in Mathematics</source>
  <pubdate>1992</pubdate>
  <volume>92</volume>
  <fpage>47</fpage>
  <lpage>105</lpage>
</bibl>

<bibl id="B24">
  <title><p>{$T$}-theory: an overview</p></title>
  <aug>
    <au><snm>Dress</snm><fnm>A</fnm></au>
    <au><snm>Moulton</snm><fnm>V</fnm></au>
    <au><snm>Terhalle</snm><fnm>W</fnm></au>
  </aug>
  <source>European J. Combin.</source>
  <pubdate>1996</pubdate>
  <volume>17</volume>
  <issue>2-3</issue>
  <fpage>161</fpage>
  <lpage>-175</lpage>
  <url>https://doi.org/10.1006/eujc.1996.0015</url>
  <note>Discrete metric spaces (Bielefeld, 1994)</note>
</bibl>

<bibl id="B25">
  <title><p>An Invariants-based Method for Efficient Identification of Hybrid
  Species From Large-scale Genomic Data</p></title>
  <aug>
    <au><snm>Chifman</snm><fnm>J.</fnm></au>
    <au><snm>Kubatko</snm><fnm>L.</fnm></au>
  </aug>
  <source>BMC Evolutionary Biology</source>
  <pubdate>2019, to appear</pubdate>
</bibl>

<bibl id="B26">
  <title><p>Species tree inference from genomic sequences using the log-det
  distance</p></title>
  <aug>
    <au><snm>Allman</snm><fnm>ES</fnm></au>
    <au><snm>Long</snm><fnm>C</fnm></au>
    <au><snm>Rhodes</snm><fnm>JA</fnm></au>
  </aug>
  <source>SIAM J. Appl. Algebra Geom.</source>
  <pubdate>2019</pubdate>
  <volume>3</volume>
  <issue>1</issue>
  <fpage>107</fpage>
  <lpage>-127</lpage>
  <url>https://doi.org/10.1137/18M1194134</url>
</bibl>

<bibl id="B27">
  <title><p>Circular Networks from Distorted Metrics</p></title>
  <aug>
    <au><snm>Roch</snm><fnm>S.</fnm></au>
    <au><snm>Wang</snm><fnm>K. C.</fnm></au>
  </aug>
  <source>Research in Computational Molecular Biology, RECOMB 2018</source>
  <publisher>New York: Springer</publisher>
  <editor>B. Raphael</editor>
  <series><title><p>Lecture Notes in Computer Science</p></title></series>
  <pubdate>2018</pubdate>
  <volume>10812</volume>
</bibl>

<bibl id="B28">
  <title><p>Genome-scale approaches to resolving incongruence in molecular
  phylogenies</p></title>
  <aug>
    <au><snm>Rokas</snm><fnm>A.</fnm></au>
    <au><snm>Williams</snm><fnm>B.</fnm></au>
    <au><snm>Carrol</snm><fnm>S.</fnm></au>
  </aug>
  <source>Nature</source>
  <pubdate>2003</pubdate>
  <volume>425</volume>
  <fpage>798</fpage>
  <lpage>-804</lpage>
</bibl>

<bibl id="B29">
  <title><p>Genome-wide evidence for speciation with gene flow in {H}eliconius
  butterflies</p></title>
  <aug>
    <au><snm>Martin</snm><fnm>S.H.</fnm></au>
    <au><snm>K.K.</snm><fnm>D</fnm></au>
    <au><snm>Nadeau</snm><fnm>N.J.</fnm></au>
    <au><snm>Salazar</snm><fnm>C.</fnm></au>
    <au><snm>Walters</snm><fnm>J.R.</fnm></au>
    <au><snm>Simpson</snm><fnm>F.</fnm></au>
    <au><snm>Blaxter</snm><fnm>M.</fnm></au>
    <au><snm>Manica</snm><fnm>A.</fnm></au>
    <au><snm>Mallet</snm><fnm>J.</fnm></au>
    <au><snm>Jiggins</snm><fnm>C.D.</fnm></au>
  </aug>
  <source>Genome Res</source>
  <pubdate>2013</pubdate>
  <volume>23</volume>
  <fpage>1817</fpage>
  <lpage>1828</lpage>
</bibl>

<bibl id="B30">
  <title><p>{H}ybrid-{L}ambda: Simulation of multiple merger and {K}ingman gene
  genealogies in species networks and species trees</p></title>
  <aug>
    <au><snm>Zhu</snm><fnm>S.</fnm></au>
    <au><snm>Degnan</snm><fnm>J.H.</fnm></au>
    <au><snm>Goldstien</snm><fnm>S.</fnm></au>
    <au><snm>Eldon</snm><fnm>B.</fnm></au>
  </aug>
  <source>BMC Bioinformatics</source>
  <pubdate>2015</pubdate>
  <volume>16</volume>
  <issue>1</issue>
  <fpage>292</fpage>
  <url>https://doi.org/10.1186/s12859-015-0721-y</url>
</bibl>

<bibl id="B31">
  <title><p>Unifying Vertical and Nonvertical Evolution: A Stochastic ARG-based
  Framework</p></title>
  <aug>
    <au><snm>Bloomquist</snm><fnm>D.</fnm></au>
    <au><snm>Suchard</snm><fnm>M.</fnm></au>
  </aug>
  <source>Systematic Biology</source>
  <pubdate>2010</pubdate>
  <volume>59</volume>
  <fpage>27</fpage>
  <lpage>41</lpage>
</bibl>

<bibl id="B32">
  <title><p>Using Consensus Networks to Visualize Contradictory Evidence for
  Species Phylogeny</p></title>
  <aug>
    <au><snm>Holland</snm><fnm>B.R.</fnm></au>
    <au><snm>Huber</snm><fnm>K.T.</fnm></au>
    <au><snm>Moulton</snm><fnm>V.</fnm></au>
    <au><snm>Lockhart</snm><fnm>P.J.</fnm></au>
  </aug>
  <source>Molecular Biology and Evolution</source>
  <pubdate>2004</pubdate>
  <volume>21</volume>
  <issue>7</issue>
  <fpage>1459</fpage>
  <lpage>1461</lpage>
  <url>http://dx.doi.org/10.1093/molbev/msh145</url>
</bibl>

<bibl id="B33">
  <title><p>Exploring contradictory phylogenetic relationships in
  yeasts</p></title>
  <aug>
    <au><snm>Wu</snm><fnm>Q.</fnm></au>
    <au><snm>James</snm><fnm>S.</fnm></au>
    <au><snm>Roberts</snm><fnm>I.</fnm></au>
    <au><snm>Moulton</snm><fnm>V.</fnm></au>
    <au><snm>Huber</snm><fnm>K.</fnm></au>
  </aug>
  <source>FEMS Yeast Research</source>
  <pubdate>2008</pubdate>
  <volume>8</volume>
  <fpage>641</fpage>
  <lpage>-650</lpage>
</bibl>

<bibl id="B34">
  <title><p>{Coalescent histories on phylogenetic networks and detection of
  hybridization despite incomplete lineage sorting}</p></title>
  <aug>
    <au><snm>Yu</snm><fnm>Y.</fnm></au>
    <au><snm>Than</snm><fnm>C.</fnm></au>
    <au><snm>Degnan</snm><fnm>J.H.</fnm></au>
    <au><snm>Nakhleh</snm><fnm>L.</fnm></au>
  </aug>
  <source>Systematic Biology</source>
  <pubdate>2011</pubdate>
  <volume>60</volume>
  <issue>2</issue>
  <fpage>138</fpage>
  <lpage>-149</lpage>
</bibl>

</refgrp>
} 

\end{backmatter}
\end{document}